\documentclass[11pt,letterpaper]{article}

%% packages
%\usepackage{hyperref}
\usepackage[numbers]{natbib}
\usepackage{xspace}
\usepackage{xcolor}
\usepackage{graphicx}
\usepackage{amsmath,amssymb,amsthm}
\usepackage[margin=1in]{geometry}

\usepackage{todonotes}

%% theorems
\theoremstyle{plain}
\newtheorem{theorem}{Theorem}[section]
\newtheorem{main-result}{Main Result}
\newtheorem{proposition}[theorem]{Proposition}
\newtheorem{lemma}[theorem]{Lemma}

\theoremstyle{definition}
\newtheorem{definition}[theorem]{Definition}
\newtheorem{example}[theorem]{Example}

%% referencing commands

%% abbreviations

%% notation

\newcommand{\growingmid}{\mathrel{}\middle|\mathrel{}}

\renewcommand{\Pr}[1]{\mbox{\rm\bf Pr}\left[#1\right]}
\newcommand{\Ex}[1]{\mbox{\rm\bf E}\left[#1\right]}

%% comments
%\newcommand{\here}[1]{\textcolor{blue}{#1}}
%% \newcommand{\here}[1]{\textcolor{blue}{}}
%\newcommand{\changed}[1]{\textcolor{red}{#1}}

%% HERE ARXIV2
\newenvironment{proof*}[1][Proof]{\medskip\noindent\textit{#1.} }{}

\sloppy

\begin{document}

%% title
\title{Best-Response Dynamics in Combinatorial Auctions\\with Item Bidding\thanks{An extended abstract appeared in Proceedings of the 28th ACM-SIAM Symposium on Discrete Algorithms, SODA 2017, Barcelona, Spain.}}

%% authors
\author{
  Paul D\"utting\thanks{Department of Mathematics, London School of Economics,
  Houghton Street, London WC2A 2AE, UK
  (\texttt{p.d.duetting@lse.ac.uk}). Part of this work was done while the author was Senior Researcher in the Department of Computer Science at ETH Z\"urich.}
  \and
  %% HERE: Shall we move note about funding to \funding{} above?
  Thomas Kesselheim\thanks{Department of Computer Science, 
  TU Dortmund, Otto-Hahn-Str.~14, 44221 Dortmund, Germany 
  (\texttt{thomas.kesselheim@cs.tu-dortmund.de}). This work was done 
  while the author was at Max Planck Institute for Informatics and Saarland 
  University, supported in part by the DFG through Cluster of Excellence MMCI.}
}

\date{\today}

\maketitle

\begin{abstract}
In a combinatorial auction with item bidding, agents participate in multiple single-item second-price auctions at once. As some items might be substitutes, agents need to strategize in order to maximize their utilities. A number of results indicate that high welfare can be achieved this way, giving bounds on the welfare at equilibrium. Recently, however, criticism has been raised that equilibria are hard to compute and therefore unlikely to be attained.

In this paper, we take a different perspective. We study simple best-response dynamics. That is, agents are activated one after the other and each activated agent updates his strategy myopically to a best response against the other agents' current strategies. Often these dynamics may take exponentially long before they converge or they may not converge at all. However, as we show, convergence is not even necessary for good welfare guarantees. Given that agents' bid updates are aggressive enough but not too aggressive, the game will remain in states of good welfare after each agent has updated his bid at least once.

In more detail, we show that if agents have fractionally subadditive valuations, natural dynamics reach and remain in a state that provides a $1/3$ approximation to the optimal welfare after each agent has updated his bid at least once. For subadditive valuations, we can guarantee an $\Omega(1/\log m)$ approximation in case of $m$ items that applies after each agent has updated his bid at least once and at any point after that. The latter bound is complemented by a negative result, showing that no kind of best-response dynamics can guarantee more than an $o(\log \log m/\log m)$ fraction of the optimal social welfare.
\end{abstract}

%\clearpage

%% SECTION: INTRODUCTION
\section{Introduction}

In a combinatorial auction, $n$ players compete for the assignment of $m$ items. The players have private preferences over bundles of items as expressed by a valuation function $v_i\colon 2^{[m]} \rightarrow \mathbb{R}_{\ge 0}$. Our goal in this work is to find a partition of the items into sets $S_1, \ldots, S_n$ that maximizes social welfare $\sum_{i} v_i(S_i)$, based on reported valuations (bids) $b_i\colon 2^{[m]} \rightarrow \mathbb{R}_{\ge 0}$ with the freedom to impose payments $p_1, \dots, p_n$ on the players.

Even if valuations are known, finding an allocation that maximizes social welfare is typically $\mathsf{NP}$-hard. Furthermore, since valuations are assumed to be private information, some mechanics are needed to extract this information. The traditional approach is to incentivize players to bid truthfully. Insisting on truthfulness has the advantage that for the individual players it is easy to participate as it is not necessary to act strategically. However, truthfulness requires central coordination of the entire allocation and payments.

An alternative approach to this problem that is arguably seen more often in practice is to let players participate in a simpler, non-truthful mechanism and to accept strategic behavior. To derive theoretical performance guarantees, one
%An alternative, more recent approach, is to let players participate in a simpler, non-truthful mechanism and to accept strategic behavior. One 
then seeks to prove bounds on the so-called Price of Anarchy, the worst-case ratio between the optimal social welfare and the welfare at equilibrium. The most prominent example in the context of combinatorial auctions is \emph{item bidding}, where the items are sold through separate single-item auctions. 

One can show that for pretty general classes of valuations, such as submodular or the even more general classes fractionally subadditive and subadditive, all equilibria from a broad range of equilibrium concepts obtain a decent fraction of the optimal social welfare. More recently, however, these results have been criticized for ignoring the computational complexity of finding an equilibrium. In fact, by now, there is quite a selection of impossibility results showing that finding exact equilibria is often computationally intractable. 

Our approach in this paper is different. We consider simple, best-response dynamics, in which players are activated in a round-robin fashion and players when activated buy their favorite set of items at the current prices, in a myopic way. Christodoulou et al.~\cite{ChristodoulouKS16} showed that one instance of such dynamics converges if players' valuation functions are fractionally subadditive. However, they also showed that it takes exponential time. For subadditive valuations, even convergence cannot be guaranteed because any fixed point would be a pure Nash equilibrium, and pure Nash equilibria may not exist (see Appendix \ref{app:no-pne}). We show that despite possibly long convergence time or no convergence at all, the social welfare reaches a good level very fast.

%%%%%%%

%% SUBSECTION: THE SETTING
\subsection{The Setting}

We study combinatorial auctions with $n$ bidders $N$ and $m$ items $M$. Each bidder $i \in N$ has a valuation function $v_i\colon2^M \rightarrow \mathbb{R}_{\ge 0}$. Our objective is to find a feasible allocation, i.e., a partition of the items, $S_1, \ldots, S_n$, that maximizes social welfare $\sum_{i \in N} v_i(S_i)$. We assume that an allocation of items to bidders is found by distributed strategic behavior of the bidders using item bidding. That is, each bidder $i \in N$ places a bid $b_{i,j}$ on each item $j \in M$. Each item $j \in M$ is assigned to the bidder $i \in N$ with the highest bid $b_{i,j}$ at a price of $p_j = \max_{i' \neq i} b_{i',j}$. Ties are broken in an arbitrary, but fixed manner. 

We assume that bidders choose their bids strategically so as to maximize their quasi-linear utilities. Bidder $i$'s utility $u_i$ as a function of the bids $b = (b_{i'})_{i' \in N}$ is $u_i(b) = v_i(S) - \sum_{j \in S} p_j$, where $S$ is the set of items won by bidder $i$.

We say that a bid $b_i$ is a best response to the bids $b_{-i}$ if bidder $i$'s utility is maximized by $b_i$. That is, $u_i(b_i,b_{-i}) \geq u_i(b'_i,b_{-i})$ for all $b'_i$. Note that any best response must give bidder $i$ a set of items $S$ that maximizes $u_i(b) = v_i(S) - \sum_{j \in S} p_j$. We call these sets of items demand sets. A (pure) Nash equilibrium in this setting is a profile of bids $b = (b_{i'})_{i' \in N}$ such that for each bidder $i \in N$ his bid $b_i$ is a best response against bids $b_{-i}$.

We study simple game-playing dynamics in which bidders get activated in turn and myopically choose to play a best response. More formally, starting from an initial bid vector $b^0$, in each time step $t \geq 1$, some bidder $i \in N$ is activated and updates his bid $b_{i}^{t-1}$ from the previous round to a best response to the  other players' bids $b^{t}_{-i} = b^{t-1}_{-i}$ which do not change from the previous to the current round.  The fixed points of such best-response dynamics are Nash equilibria. However, Nash equilibria do not necessarily exist and even if they do best-response dynamics may not converge.

We will evaluate best-response dynamics by the social welfare that they achieve. For bid profile $b$ and corresponding allocation $S_1, \dots, S_n$ we write $SW(b) = \sum_{i} v_i(S_i)$ for the social welfare at bid profile $b$. We seek to compare this to the optimal social welfare $OPT(v)$.

%%%%%%%

%% SUBSECTION: VARIANTS OF BEST-RESPONSE DYNAMICS
\subsection{Variants of Best-Response Dynamics}

Since payments in combinatorial auctions with item bidding are second price, there are typically many ways to choose a best response. Clearly, not all best responses will ensure that good states (in terms of social welfare) will be reached quickly.

\begin{example}[Gross Underbidding]
Consider a single-item auction with $n$ bidders. Suppose $v_1 = C$ and $v_i = 1$ for $i \geq 2$, where $C \gg 1$. Suppose we start at $b=(0,\dots,0)$ and the item assigned to bidder $1$. A possible best response sequence has bidders update their bids in round-robin fashion, each time increasing the winning bid by $\epsilon$.
\end{example}

\begin{example}[Gross Overbidding]
Consider the same setting as in the previous example. If in the first round of updates the last bidder bids $C+\epsilon$ this will terminate the dynamics.
\end{example}

Note that in both these examples the social welfare after each round of best responses (and on average) is $1$, which can be arbitrarily smaller than the optimal social welfare $C$.

The issue in each of these examples is as follows. Through the bids $b_{i,j}$, the bidders effectively declare additive valuations. The allocation maximizes the declared welfare $DW(b) = \sum_{i} \sum_{j \in S_i} b_{i,j}$, which usually differs from the actual welfare $SW(b)$. In both examples, there exist update steps in which the declared utility of the bidder, i.e., $u_i^D(b) = \sum_{j \in S_i} b_{i,j} - \sum_{j \in S_i} \max_{k \neq i} b_{k,j}$,  is very different from his actual utility. We will prove bounds on the welfare achieved by best-response dynamics that are quantified by the extent to which declared utilities can differ from the actual utilities as captured by the following definitions.

\begin{definition}\label{def:aggressive}
Let $\alpha \geq 0$. We call a bid $b_i$ by bidder $i$ against bids $b_{-i}$ \emph{$\alpha$-aggressive} if $u_i^D(b) \geq \alpha \cdot \max_{b_i'} u_i(b_i', b_{-i})$.
\end{definition}

\begin{definition}\label{def:safe}
Let $\beta \geq 1$. A bidding dynamic is \emph{$\beta$-safe} if it ensures that $u_i^D(b) \leq \beta \cdot u_i(b)$ for all players $i$ and reachable bid profiles $b$.
\end{definition}

We will usually apply Definition \ref{def:aggressive} when $b_i$ is a best response to $b_{-i}$. 
However, it also leaves the freedom to consider approximate best responses.
We will see that one way to achieve Definition \ref{def:safe} is to require strong no overbidding, but we will also see an example of safe dynamics that allow overbidding.
Note that in both cases  players will have non-negative actual utilities at all times because $u_i(b^t) \geq \frac{1}{\beta} \cdot u_i^D(b^t) \geq 0$ for every bidder $i$ and time step $t$. 

%%%%%%%

%% SUBSECTION: OUR RESULTS
\subsection{Our Results}

Our first main result is that round-robin best-response dynamics are capable of reaching states with near-optimal social welfare strikingly fast, despite the fact that convergence to equilibrium may take exponentially long or they may not converge at all.

In fact, our result applies to any round-robin bidding dynamics, provided that players choose bids that are aggressive enough but not too aggressive. This, in particular, includes dynamics in which players choose to play only approximate best responses. Also, their way of making choices does not need to be consistent in any way.

\begin{main-result}\label{main-result-1}
In a $\beta$-safe round-robin bidding dynamic with $\alpha$-aggressive bid updates the social welfare at any time step $t \geq n$ satisfies
\[
	SW(b^t) \geq \frac{\alpha}{(1+\alpha+\beta)\beta} \cdot OPT(v).
\]
\end{main-result}

In other words, once every player had the chance to update his bid, the social welfare, at any time step after that, will be within $\alpha/(1+\alpha+\beta)\beta$ of optimal. 

For fractionally subadditive valuations and subadditive valuations there exist round-robin best-response dynamics with $(\alpha,\beta) = (1, 1)$ and $(\alpha,\beta) = (1/\ln m,1)$ respectively. 
The result for XOS requires access to demand and XOS oracles \cite{DobzinskiS06}, the result for subadditive valuations requires access to demand oracles and that the greedy algorithm for set cover problems can be executed \cite{Feige06,BhawalkarR11}.

Our guarantee on the social welfare achieved by best-response dynamics shows that these dynamics provide a $1/3$ (resp.~$\Omega(1/\log m)$) approximation to the optimal social welfare that applies after a single round of bid updates, and at any time step after that.

We also prove a bound on the average social welfare of $1/2(2+\alpha)\beta$, which improves upon the above bound for large $\beta$. In particular, for subadditive valuations it is also possible to achieve $(\alpha,\beta) = (1,\ln m)$. While the point-wise guarantee of this dynamics is only $\Omega(1/\log^2 m)$, its average social welfare is within $\Omega(1/\log m)$ of optimal.

We show that the point-wise welfare guarantee of $1/3$ for fractionally subadditive valuations is tight for the respective mechanism.
Our second main result is that the $\Omega(1/\log m)$ bounds are almost best possible in a more general sense.

\begin{main-result}\label{main-result-2}
For players with subadditive valuations no best-response dynamics in which players do not overbid on the grand bundle can guarantee a better than $o(\log\log m / \log m)$ fraction of the optimal social welfare at any time step.
\end{main-result}

For round-robin bidding dynamics, this point-wise impossibility result extends to an impossibility for the average social welfare that can be achieved.

The assumption that players do not overbid on the grand bundle is quite natural, and is satisfied by all dynamics that have been proposed in the literature. It obviously applies to strong no-overbidding dynamics, but it also applies to dynamics in which players use weak no-overbidding strategies on the items that they win and bid zero on all other items.

Our proof of the lower bound is based on a non-trivial construction exploiting the algebraic properties of linearly independent vector spaces. It presents an interesting separation from the Price of Anarchy literature, where no such lower bound can be proved.

Finally, we explore to which extent our positive results depend on round-robin activation. We show that our positive results extend to the case where at each step a player is chosen uniformly at random, while the social welfare can be as low as $O(1/n)$ of optimal when the order of activation is chosen adversarially.

%%%%%%%

%% SUBSECTION: RELATED WORK
\subsection{Related Work}
\label{sec:related_work}

Best-response dynamics are a central topic in Algorithmic Game Theory. Probably, the best-studied application are congestion games, where best-response dynamics always converge but, except in special cases, take worst-case exponential time before they do so \cite{Rosenthal73,MondererS96,AckermannRV08}. On the other hand, a number of results show that certain types of best-response dynamics reach states of low social cost quickly \cite{GoemansMV05,ChristodoulouMS12,BiloFFM11,FarzadOV08,Roughgarden15}. Some of these results extend to weighted congestion games, where equilibria may not exist and best-response sequences may not converge for this reason.

The study of the Price of Anarchy in combinatorial auctions with item bidding was initiated by Christodoulou et al.~\cite{ChristodoulouKS16}, and subsequently refined and improved upon in \cite{BhawalkarR11,HassidimKMN11,SyrgkanisT13,DuettingHS13,FeldmanFGL13}. Some of these bounds are based on mechanism smoothness, others are not. They provide welfare guarantees for a broad range of equilibrium concepts ranging from pure Nash equilibria, over (coarse) correlated equilibria, to Bayes-Nash equilibria. For fractionally subadditive valuations there is a smoothness-based proof that shows that the Price of Anarchy with respect to pure Nash equilibria is at most $2$ \cite{ChristodoulouKS16,SyrgkanisT13}. For subadditive valuations the Price of Anarchy with respect to pure Nash equilibria is also at most $2$ \cite{BhawalkarR11}, but the best smoothness-based proof gives a bound of $O(\log m)$ \cite{BhawalkarR11,SyrgkanisT13}.  In fact, as shown by Roughgarden~\cite{Roughgarden14}, combinatorial auctions with item bidding achieve (near-)optimal Price of Anarchy among a broad class of ``simple'' mechanisms.

Also relevant to our analysis in this context is that Christodoulou et al.~\cite{ChristodoulouKS16} gave a simple, best-response dynamics for fractionally subadditive valuations, that they called Potential Procedure. They showed that this procedure always converges to a pure Nash equilibrium, but also that it may take exponentially many steps before it converges.

Lately, attempts at proving Price of Anarchy bounds for combinatorial auctions with item bidding have been criticized for not being constructive, in the sense that the computational complexity of finding an equilibrium remained open. Dobzinski et al.~\cite{DobzinskiFK15}, for example, showed that for subadditive valuations computing a pure Nash equilibrium requires exponential communication. Regarding fractionally subadditive valuations they concluded that ``if there exists an efficient algorithm that finds an equilibrium, it must use techniques that are very different from our current ones.'' Further negative findings were reported by Cai and Papadimitriou~\cite{CaiP14}, who showed that computing a Bayes-Nash equilibrium is $\mathsf{PP}$-hard. 

Most recently, Daskalakis and Syrgkanis~\cite{DaskalakisS16} considered coarse correlated equilibria. They showed that even for unit-demand players (a strict subclass of submodular) there are no polynomial-time no-regret learning algorithms for finding such equilibria, unless $\mathsf{RP} \supseteq \mathsf{NP}$, closing the last gap in the equilibrium landscape.  However, they also proposed a novel solution concept to escape the hardness trap, no-envy learning, and gave a polynomial-time no-envy learning algorithm for XOS valuations and complemented this with a proof showing that for this class of valuations every no-envy outcome recovers at least $1/2$ of the optimal social welfare.

Further relevant work comes from Devanur et al.~\cite{DevanurMSW15}, who proposed an alternative to simultaneous second-price auctions, the so-called single-bid auction. This mechanism also admits a polynomial-time no-regret learning algorithm and by a result of \cite{BravermanMW16} achieves optimal Price of Anarchy bounds within a broader class of mechanisms. 

A final point of reference are truthful mechanisms for combinatorial auctions. While no mechanism can achieve a better than $1/m^{1/2-\epsilon}$ approximation for submodular valuations with valuation queries alone \cite{DobzinskiV16}, Dobzinski~\cite{Dobzinski16} recently managed to improve a long-standing approximation guarantee of $\Omega(1/\log m)$ for submodular valuations to $\Omega(1/\sqrt{\log m})$ for fractionally subadditive valuations, requiring access to both value and demand oracles.

%%%%%%%

%%%%%%%

%% SECTION: ACHIEVING AGGRESSIVE AND SAFE BIDS
\section{Achieving Aggressive and Safe Bids}\label{sec:prelims}

As already discussed, best responses are generally not unique in our settings. Our positive results require that updates are \emph{aggressive} and \emph{safe}. In this section we briefly describe how to guarantee these properties for fractionally subadditive (a.k.a.~XOS) valuations and subadditive valuations. %The missing 
The missing proofs are provided in Appendix~\ref{apx:prelims}.

A valuation function is \emph{fractionally subadditive}, or \emph{XOS}, if there are values $v_{i, j}^\ell \geq 0$ such that $v_i(S) = \max_\ell \sum_{j \in S} v_{i, j}^\ell$. It is \emph{subadditive} if for all $S, T \subseteq M$, $v_i(S \cup T) \leq v_i(S) + v_i(T)$.

The dynamics that we consider approach players in round-robin fashion. When player $i$ is activated he picks a demand set $D$ at the current prices and updates his bid as described below. Note that here we assume eager updating. This assumption leads to cleaner proofs, but is not necessary as we demonstrate in Appendix~\ref{app:lazy}.

%%%%%%%

%% SUBSECTION: BID UPDATES FOR XOS VALUATIONS
\subsection{Bid Updates for XOS Valuations}\label{sec:xos}

For XOS valuations we can update bids as described by \cite{ChristodoulouKS16}. If $D$ is the demand set chosen by player $i$, let $(v_{i,j}^\ell)_{j \in M}$ be the supporting valuation on this demand set for which $\sum_{j \in D} v_{i,j}^\ell = v_i(D)$, and set
$
	b^t_{i,j} = v^\ell_{i,j} \;\text{for $j\in D$} \;\text{and } b^t_{i,j} = 0 \text{ otherwise.}
$
Note that these update steps can be performed in polynomial time using demand and XOS oracles.

\begin{proposition}\label{prop:xos}
Starting from an initial bid vector $b^0$ satisfying strong no-overbidding, the bid updates described above lead to a sequence of bids $b^0, b^1, b^2, \dots$ that is $1$-safe and in which each update is a $1$-aggressive best response.
\end{proposition}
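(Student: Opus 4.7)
The plan is to use strong no-overbidding as the central invariant that simultaneously yields both properties. The update rule sets $b^t_{i,j}$ equal to an XOS support of $v_i$ restricted to a demand set $D$; this structural choice is precisely what allows one to (i) preserve strong no-overbidding inductively, from which $1$-safety follows directly, and (ii) guarantee that after the update player $i$ wins exactly $D$, so that his declared utility on the items he wins coincides with $v_i(D)$ and hence with the best-response utility.

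First I would prove by induction on $t$ that $b^t$ satisfies strong no-overbidding, that is, $\sum_{j \in T} b^t_{i',j} \leq v_{i'}(T)$ for every player $i'$ and every $T \subseteq M$. The base case is the assumption on $b^0$. In the inductive step only the updating player $i$ changes his bid; if $D$ is his demand set and $(v_{i,j}^\ell)_{j \in M}$ the chosen XOS support, then for any $T$,
\[
\sum_{j \in T} b^t_{i,j} \;=\; \sum_{j \in T \cap D} v_{i,j}^\ell \;\leq\; v_i(T \cap D) \;\leq\; v_i(T),
\]
using the XOS definition applied to $T\cap D$ together with monotonicity. $1$-safety is then immediate: for any player $i'$ winning the set $S_{i'}$ under $b^t$,
\[
u_{i'}^D(b^t) \;=\; \sum_{j \in S_{i'}} b^t_{i',j} - \sum_{j \in S_{i'}} p_j^t \;\leq\; v_{i'}(S_{i'}) - \sum_{j \in S_{i'}} p_j^t \;=\; u_{i'}(b^t).
\]

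For the $1$-aggressive best response claim, the core step is to show that player $i$ wins exactly $D$ after his update. For each $j \in D$, the demand-set property gives $v_i(D) - v_i(D \setminus \{j\}) \geq p_j^t$, and the XOS support satisfies both $\sum_{k \in D} v_{i,k}^\ell = v_i(D)$ and $\sum_{k \in D \setminus \{j\}} v_{i,k}^\ell \leq v_i(D \setminus \{j\})$. Subtracting gives $v_{i,j}^\ell \geq v_i(D) - v_i(D \setminus \{j\}) \geq p_j^t$. Modulo tie-breaking when equality holds (the main technical wrinkle, resolvable either by assuming favorable tie-breaking or by raising each winning bid by an infinitesimal $\epsilon$), this means $S_i = D$ after the update. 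Consequently $u_i^D(b^t) = \sum_{j \in D} v_{i,j}^\ell - \sum_{j \in D} p_j^t = v_i(D) - \sum_{j \in D} p_j^t$, which equals both $u_i(b^t)$ and $\max_{b_i'} u_i(b_i', b_{-i}^t)$ since $D$ is a demand set. Thus the update is a best response and is $1$-aggressive.

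The main obstacle is the tie-breaking detail in the aggressive-best-response step; everything else reduces to two short structural facts, namely the defining inequality of the XOS support and the marginal characterization of demand sets. I would handle the tie-breaking by explicitly flagging the $\epsilon$-perturbation (or the favorable tie-breaking convention) and then proceeding with $S_i = D$ throughout.
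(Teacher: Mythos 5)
Your $1$-safety argument is correct and essentially matches the paper's route (the paper extracts it into a separate lemma: strong no-overbidding of the initial vector and of each update implies strong no-overbidding throughout, which in turn gives $u_i^D(b^t) \le u_i(b^t)$ for all players and all $t$).

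The $1$-aggressiveness part is where you diverge, and the obstacle you flag is genuine. You aim to establish $S_i = D$, but your marginal argument only yields $v_{i,j}^\ell \ge p_j^t$ for $j\in D$, a weak inequality. The paper's model breaks ties ``in an arbitrary, but fixed manner,'' so you cannot assume ties break for player $i$; and raising bids by $\epsilon$ changes the update rule, which Proposition~\ref{prop:xos} fixes to be the exact XOS support with zeros elsewhere. Neither of your proposed patches actually proves the stated proposition.

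The paper sidesteps the question of whether $S_i = D$ altogether. It chains
\[
u_i(b^t) \;=\; v_i(S_i) - \sum_{j\in S_i} p_j^t \;\ge\; u_i^D(b^t) \;\ge\; \sum_{j\in D}\bigl(b_{i,j}^t - p_j^t\bigr) \;=\; v_i(D) - \sum_{j\in D} p_j^t \;=\; \max_{S}\Bigl(v_i(S) - \sum_{j\in S} p_j^t\Bigr),
\]
and observes that the right end is trivially $\ge u_i(b^t)$, so every link is an equality. The first inequality is the XOS support bound (equivalently, strong no-overbidding); the second inequality needs only that each term for $j\in S_i\setminus D$ is $\ge 0$ (player $i$ won with a zero bid, so all bids on that item are zero) and each term for $j\in D\setminus S_i$ is $\le 0$ (player $i$ lost that item) --- crucially, no claim about how ties resolve. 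The collapse of the chain simultaneously yields that $b_i^t$ is a best response and that $u_i^D(b^t) = u_i(b^t)$, which is $1$-aggressiveness. Your marginal fact $v_{i,j}^\ell \ge p_j^t$ is not false --- indeed the equality of the chain forces $v_{i,j}^\ell - p_j^t = 0$ for every $j\in D\setminus S_i$ --- but it is better understood as a consequence of the argument than as a step toward pinning down $S_i$. If you want to keep your structure, the fix is to stop at the weak bound $v_{i,j}^\ell \ge p_j^t$ and conclude directly that $u_i^D(b^t) = \sum_{j\in S_i}(b_{i,j}^t - p_j^t) \ge \sum_{j\in D}(b_{i,j}^t - p_j^t) = v_i(D) - \sum_{j\in D}p_j^t$ (items lost to ties contribute zero), which is exactly the paper's middle step.
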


%%%%%%%

%% SUBSECTION: BID UPDATES FOR SUBADDITIVE VALUATIONS
\subsection{Bid Updates for Subadditive Valuations}

For subadditive functions, it is generally not possible to guarantee $\alpha = 1$ and $\beta = 1$ at the same time. We describe two different, reasonable ways of bid updates.

\medskip %% HERE Added this
\paragraph{No-Overbidding Updates} \label{sec:subadditive-1}

Given a bid vector $b_{-i}$, define $\tilde{u}_i(S,b_{-i}) = v_i(S) - \sum_{j \in S} \max_{k \neq i} b_{k,j}$. That is, $\tilde{u}_i(S,b_{-i})$ is the utility bidder $i$ can derive from buying the set $S$. Observe that $\tilde{u}_i(\,\cdot\,,b_{-i})$ is subadditive for every $b_{-i}$. Let $D$ be an inclusion-wise minimal demand set of bidder $i$ given $b_{-i}$. We can show that $\tilde{u}_i(S,b_{-i}^t) > 0$ for all $S \subseteq D$ unless $D = \emptyset$. Therefore, by \cite{BhawalkarR11} there exists an additive approximation $a_{i}$ such that (a) $\sum_{j \in D} a_{i,j} \geq 1/\ln m \cdot \tilde{u}_i(D,b_{-i}^t)$ and (b) $\sum_{j \in S} a_{i,j} \leq \tilde{u}_i(S,b_{-i}^t)$ for all $S \subseteq D$ with the property that $a_{i,j} > 0$ for all $j \in D$. We set bids
$
	b_{i,j}^t = a_{i,j} + \max_{k \neq i} b_{k,j}^t \text{ for $j \in D$ and } b_{i, j}^t = 0 \text{ otherwise.}
$
These update steps can be performed in polynomial time with a demand oracle if it is possible to compute the additive approximation, which corresponds to executing the greedy set-cover algorithm on $\tilde{u}_i(\,\cdot\, ,b_{-i}^t)$.

\begin{proposition}\label{prop:subadditive-1}
Starting from an initial bid vector $b^0$ that satisfies strong no-overbidding, the bid updates described above lead to a sequence of bids $b^0, b^1, b^2, \dots$ that is $1$-safe and in which each update is a $(1/\ln m)$-aggressive best response.
\end{proposition}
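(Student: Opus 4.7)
The plan is to decouple the two claims: $(1/\ln m)$-aggressiveness of each update follows from the Bhawalkar--Roughgarden additive approximation, and $1$-safety follows from maintaining strong no-overbidding along the dynamic. The key technical step common to both is identifying the set player $i$ wins and the associated utilities after the update; a preliminary observation makes the Bhawalkar--Roughgarden result applicable.

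First I would check that the inclusion-wise minimality of $D$ forces $\tilde{u}_i(S, b_{-i}^t) > 0$ on every nonempty $S \subseteq D$. The function $\tilde{u}_i(\,\cdot\,, b_{-i}^t)$ inherits subadditivity from $v_i$ (prices add linearly), so if $\tilde{u}_i(S, b_{-i}^t) \leq 0$ for some nonempty $S \subseteq D$, then $\tilde{u}_i(D \setminus S, b_{-i}^t) \geq \tilde{u}_i(D, b_{-i}^t) - \tilde{u}_i(S, b_{-i}^t) \geq \tilde{u}_i(D, b_{-i}^t)$, exhibiting a strictly smaller demand set and contradicting minimality. Invoking \cite{BhawalkarR11} on the positive subadditive function $\tilde{u}_i(\,\cdot\,,b_{-i}^t)$ restricted to subsets of $D$ then produces the additive approximation $a_i$ with properties (a), (b), and strict positivity on $D$. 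Because $a_{i,j} > 0$ on $D$ and $b_{i,j}^t = 0$ off $D$, player $i$ strictly outbids everyone on each $j \in D$ and wins exactly $D$ at prices $p_j = \max_{k \neq i} b_{k,j}^t$; hence $u_i(b^t) = \tilde{u}_i(D, b_{-i}^t) = \max_{b_i'} u_i(b_i', b_{-i}^t)$, since $D$ is a demand set, and the declared utility telescopes to $u_i^D(b^t) = \sum_{j \in D} a_{i,j}$. Property (a) immediately yields $(1/\ln m)$-aggressiveness.

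For $1$-safety I would show by induction that at every time step the bid of every player $k$ satisfies $\sum_{j \in T} b_{k,j}^{t} \leq v_k(T)$ for every $T$ contained in the support of $b_k^{t}$; the base case is the hypothesis on $b^0$. Only the active player's bid changes in a step, and for every $T \subseteq D$ property (b) gives $\sum_{j \in T} b_{i,j}^t = \sum_{j \in T} (a_{i,j} + \max_{k \neq i} b_{k,j}^t) \leq \tilde{u}_i(T, b_{-i}^t) + \sum_{j \in T} \max_{k \neq i} b_{k,j}^t = v_i(T)$. Since every player's winning set is contained in the support of his own bid, applying this inequality to $S_k^{t'}$ at any reachable profile $b^{t'}$ gives $u_k^D(b^{t'}) = \sum_{j \in S_k^{t'}} b_{k,j}^{t'} - \sum_{j \in S_k^{t'}} \max_{\ell \neq k} b_{\ell,j}^{t'} \leq v_k(S_k^{t'}) - \sum_{j \in S_k^{t'}} \max_{\ell \neq k} b_{\ell,j}^{t'} = u_k(b^{t'})$, which is $1$-safety. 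The main subtlety is the positivity check in the second paragraph: without inclusion-wise minimality of $D$ the additive approximation of \cite{BhawalkarR11} would not be available, and the rest of the argument is bookkeeping.
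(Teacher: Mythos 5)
Your proof is correct and follows essentially the same route as the paper's: establish positivity of $\tilde{u}_i$ on nonempty subsets of $D$ from minimality and subadditivity, invoke the Bhawalkar--Roughgarden additive approximation, read off aggressiveness from property~(a), and derive $1$-safety from property~(b) via the preservation of strong no-overbidding (the paper factors this last step out into a small lemma, but the content is identical). One minor imprecision: since $b_{i,j}^t = 0$ for $j \notin D$, ties at price zero may be broken in $i$'s favor, so player $i$ wins $D$ but not necessarily \emph{exactly} $D$; this is harmless, though, because any extra items are won at bid and price both zero, contributing nothing to $u_i^D$ or $u_i$, so your identities $u_i^D(b^t) = \sum_{j\in D}a_{i,j}$ and $u_i(b^t)=\tilde{u}_i(D,b_{-i}^t)$ still hold.
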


%%%%%%%

\paragraph{Aggressive Updates} \label{sec:subadditive-2}

The basic construction is the same as above except that instead of considering $a_i$ we consider $\tilde{a}_i$ such that $\tilde{a}_{i,j} = \gamma \cdot a_{i,j}$ for all items $j \in D$, where $0 < \gamma \leq \ln m$ is such that $\sum_{j \in D} a_{i,j} = 1/\gamma \cdot \tilde{u}_i(D,b_{-i}^t)$. Note that these bids satisfy: (a) $\sum_{j \in D} \tilde{a}_{i,j} = \tilde{u}_i(D,b_{-i}^t)$ and (b) $\sum_{j \in S} \tilde{a}_{i,j} \leq \gamma \cdot \tilde{u}_i(S,b_{-i}^t)$ for all $S \subseteq D$. 

\begin{proposition}\label{prop:subadditive-2}
Starting from an initial bid vector $b^0$ that satisfies strong no-overbidding, the bid updates described above lead to a sequence of bids that is $\ln m$-safe and in which each update is a $1$-aggressive best response.
\end{proposition}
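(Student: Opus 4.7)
\begin{proof*}[Proof Plan]
The plan is to split the argument into the two claims of the proposition: that each update is a $1$-aggressive best response, and that the resulting sequence is $\ln m$-safe.

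The aggressiveness part follows essentially by construction. When bidder $i$ is activated with inclusion-minimal demand set $D$, the bid $b_{i,j}^t = \tilde{a}_{i,j} + \max_{k\neq i} b_{k,j}^t$ on $D$ strictly exceeds the highest competing bid (since $\tilde{a}_{i,j}>0$), so bidder $i$ wins exactly $D$. Telescoping gives $u_i^D(b^t) = \sum_{j\in D}\tilde{a}_{i,j}$, which by property~(a) equals $\tilde{u}_i(D,b_{-i}^t)$ and therefore also equals the actual utility $u_i(b^t)$. Because $D$ is a demand set, this is also $\max_{b_i'}u_i(b_i',b_{-i}^t)$, so the update is $1$-aggressive.

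The safety part is the main work. For the bidder $i$ active at time $t$, the computation above already gives $u_i^D(b^t)=u_i(b^t)$, which is even stronger than $\ln m$-safety. The nontrivial case is a bidder $i'$ whose last update was at some earlier time $t'<t$. Write $S=S_{i'}^t\subseteq D_{i'}^{t'}$ for the set of items $i'$ still wins at time $t$, and $p_j^\tau := \max_{k\neq i'}b_{k,j}^\tau$ for the price $i'$ faces on item $j$ at time $\tau$. Since $i'$'s bid on each $j\in D_{i'}^{t'}$ is frozen at $\tilde{a}_{i',j}+p_j^{t'}$, the declared utility telescopes to
\begin{equation*}
u_{i'}^D(b^t) \;=\; \sum_{j\in S}\tilde{a}_{i',j} \;-\; \sum_{j\in S}\bigl(p_j^t-p_j^{t'}\bigr).
\end{equation*}
I would then apply property~(b) of $\tilde{a}$ to the subset $S\subseteq D_{i'}^{t'}$ to obtain $\sum_{j\in S}\tilde{a}_{i',j}\le\gamma\,\tilde{u}_{i'}(S,b_{-i'}^{t'})$ with $\gamma\le\ln m$, and combine this with the constraint $p_j^t<\tilde{a}_{i',j}+p_j^{t'}$ (forced by $i'$ still winning each $j\in S$) and the subadditivity of $v_{i'}$ to convert the right-hand side into a bound involving the current actual utility $u_{i'}(b^t)=v_{i'}(S)-\sum_{j\in S}p_j^t$.

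The main obstacle lies in exactly this last conversion. Unlike in \proref{prop:subadditive-1}, the scaled bids need not satisfy strong no-overbidding, so instead of the clean pointwise inequality $\sum_{j\in S}b_{i',j}^t\le v_{i'}(S)$ one only has the weaker $\sum_{j\in S}b_{i',j}^t\le\gamma\,v_{i'}(S)$. The loss factor $\gamma\le\ln m$ propagates through the argument and is exactly what makes $\beta=\ln m$ appear in the statement. The delicate point is that prices on items in $S$ can both rise and fall between $t'$ and $t$ (depending on whether intervening bidders added $j$ to or removed $j$ from their demand sets), so the bound on $\sum_{j\in S}(p_j^t-p_j^{t'})$ has to rely on the one-sided inequality $p_j^t<\tilde{a}_{i',j}+p_j^{t'}$ together with non-negativity of prices rather than on any monotonicity of the price trajectory.
\end{proof*}
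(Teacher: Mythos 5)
The $1$-aggressiveness argument is fine and matches the paper's. The safety argument, however, has a genuine gap, and you actually flag the missing ingredient yourself while discarding it. You write that "prices on items in $S$ can both rise and fall" and that the bound must rely on the one-sided inequality $p_j^t < \tilde{a}_{i',j} + p_j^{t'}$ "rather than on any monotonicity of the price trajectory." This is precisely backwards. The paper's proof hinges on the structural fact that for every item $j$ that bidder $i'$ is \emph{still winning} at the later time, the maximum competing bid can only have \emph{decreased} since $i'$'s last update. The reason is the form of the updates: when any bidder $k\neq i'$ is activated, he either bids $0$ on $j$ or bids strictly above the current maximum (since $a_{k,j}>0$ or $\tilde{a}_{k,j}>0$), and in the latter case he would win $j$; since $i'$ still holds $j$ at the later time while neither $i'$ nor $k$ has changed bids since $k$'s last update, the latter case cannot have happened. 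Hence $p_j^t \le p_j^{t'}$ on every retained item, i.e., $\sum_{j\in S}(p_j^{t'}-p_j^t)\ge 0$, and this nonnegativity is what lets one multiply it by $\ln m\ge 1$ and fold it into $\ln m\cdot u_{i'}(b^t)$.

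Without this monotonicity your plan does not close. After applying property~(b) you are left with roughly $u_{i'}^D(b^t)\le \gamma\, v_{i'}(S) - (\gamma-1)\sum_{j\in S}p_j^{t'} - \sum_{j\in S}p_j^t$, whereas the target is $\ln m\cdot\big(v_{i'}(S)-\sum_{j\in S}p_j^t\big)$. Matching these requires $(\ln m - 1)\sum_{j\in S}p_j^t \le (\gamma-1)\sum_{j\in S}p_j^{t'} + (\ln m-\gamma)v_{i'}(S)$, and the one-sided inequality $p_j^t<\tilde{a}_{i',j}+p_j^{t'}$ together with nonnegativity of prices is not strong enough to establish this in general (e.g., when $\gamma$ is close to $\ln m$ the right-hand side can vanish while the left-hand side is strictly positive). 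To repair the proof, drop the attempted two-sided control of price movements, prove the monotonicity fact, and then follow the paper's chain: $u_{i'}^D(b^t)=\sum_{j\in S\cap D}\tilde{a}_{i',j}+\sum_{j\in S\cap D}(p_j^{t'}-p_j^t) \le \ln m\,\tilde{u}_{i'}(S\cap D,b^{t'}_{-i'})+\sum_{j\in S\cap D}(p_j^{t'}-p_j^t) \le \ln m\big(\tilde{u}_{i'}(S\cap D,b^{t'}_{-i'})+\sum_{j\in S\cap D}(p_j^{t'}-p_j^t)\big) \le \ln m\cdot u_{i'}(b^t)$, using $\ln m\ge 1$ in the middle step and monotonicity of $v_{i'}$ plus $p_j^t=0$ on $S\setminus D$ in the last.
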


%%%%%%%

%% SECTION: WELFARE GUARANTEES
\section{Welfare Guarantees}\label{sec:upper-bounds}

In this section we prove our first main result (Theorem \ref{thm:main-1}). The theorem provides a point-wise social welfare guarantee, parametrized in $\alpha$ and $\beta$, for round-robin bidding dynamics. It shows that the social welfare is high already after a single round of updates, and remains high at every single step after that.

\begin{theorem}\label{thm:main-1}
In a $\beta$-safe round-robin bidding dynamic with $\alpha$-aggressive bid updates the social welfare at any time step $t \geq n$ satisfies
\(
	SW(b^t) \geq \frac{\alpha}{(1+\alpha+\beta)\beta} \cdot OPT(v).
\)
\end{theorem}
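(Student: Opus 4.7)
The plan is to extend the standard Price-of-Anarchy argument for simultaneous item auctions to the dynamic setting by anchoring the aggression inequality at each player's last update time. For $t \ge n$, round-robin guarantees that each player $i$ has a well-defined last-update time $t_i \in \{t-n+1, \ldots, t\}$, and crucially $b_i^t = b_i^{t_i}$ because $i$ has not acted since.

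First I would apply $\alpha$-aggression at time $t_i$ to the hypothetical deviation in which $i$ bids just above the opposing bids $b_{-i}^{t_i}$ on the items of the optimal bundle $O_i$. Writing $p_{i,j}(b) = \max_{k \neq i} b_{k,j}$, this gives
\[
u_i^D(b^{t_i}) \;\ge\; \alpha\Bigl[v_i(O_i) - \sum_{j \in O_i} p_{i,j}(b^{t_i})\Bigr],
\]
and summing over $i$ yields the key inequality
\[
\sum_i u_i^D(b^{t_i}) \;\ge\; \alpha\cdot OPT(v) \;-\; \alpha\sum_i\sum_{j\in O_i} p_{i,j}(b^{t_i}).
\]

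Second I would bound both summations in terms of $SW(b^t)$. The $\beta$-safety condition implies a ``no-overbidding on the winning set'' bound $\sum_{j \in S_i(b)} b_{i,j} \le \beta\, v_i(S_i(b))$ for every reachable $b$, and applied at time $t$ this immediately yields $DW(b^t) := \sum_j \max_k b_{k,j}^t \le \beta\cdot SW(b^t)$. For the prices on the right-hand side I would argue that $\sum_i \sum_{j \in O_i} p_{i,j}(b^{t_i}) \le DW(b^t) \le \beta \cdot SW(b^t)$ by a round-robin accounting that identifies each contributing past bid with a bid that still appears in $b^t$. Using $b_i^t = b_i^{t_i}$, the left-hand side is at most $\sum_i \sum_{j \in S_i^{t_i}} b_{i,j}^t$; this I would bound by $\beta(1+\beta) \cdot SW(b^t)$ by charging items that $i$ won at $t_i$ but lost by $t$ to the revenue their current winners extract from them (which is controlled via the winning-set overbidding inequality). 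Putting the two bounds together gives $\alpha\cdot OPT \le \beta(1+\beta)\cdot SW(b^t) + \alpha\beta\cdot SW(b^t) = \beta(1+\alpha+\beta)\cdot SW(b^t)$, which rearranges to the claimed inequality.

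The hard part is the bookkeeping needed to align bids across the different update times $t_i$ and the common reference time $t$. A player $k$ who updates after $i$ in the current round has, at time $t_i$, a bid inherited from the previous round that need not equal $b_k^t$; so both the left-hand and right-hand sides of the key inequality involve ``ghost'' bids that no longer exist at time $t$. I would handle this by charging each ghost bid to the unique subsequent update of the same player, exploiting that in round-robin each bidder updates exactly once per window of $n$ consecutive steps, so every historical bid has a well-defined successor present in $b^t$ that can absorb its contribution.
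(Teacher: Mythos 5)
Your setup is on the right track — anchoring the $\alpha$-aggression inequality at each player's last update time $t_i$, using $b_i^t = b_i^{t_i}$, and aiming for $\alpha\cdot OPT \le \beta(1+\alpha+\beta)\cdot SW(b^t)$ is exactly the skeleton of the paper's argument. But the two bounds you plug into it are not correct, and the fix you propose in the last paragraph does not close the gap.

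The critical step is your claim $\sum_i\sum_{j\in O_i} p_{i,j}(b^{t_i}) \le DW(b^t)$, justified by ``charging each ghost bid to the unique subsequent update of the same player \dots that can absorb its contribution.'' This absorption does not happen in general: a player who bid high on item $j$ in the previous round may bid \emph{low} (even zero) on $j$ at his next update, so the ghost bid has no counterpart in $b^t$ large enough to charge it to. Concretely, if a player's high ghost bid made $p_{i,j}(b^{t_i})$ large but he then drops $j$ and nobody else bids on it, $p_j^t$ can be arbitrarily smaller than $p_j^{t_i}$. The correct bound at this point is $\max_{k\neq i} b_{k,j}^{t_i} \le p_j^t + p_j^{t-n}$ (split by whether the opposing maximizer has already updated in the window or not), which brings in $DW(b^{t-n})$ as an extra term. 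To eliminate it one needs a genuinely separate argument showing that the declared welfare cannot shrink too fast across one round: the paper's Lemma~\ref{lem:initial-high} uses $\alpha$-aggression and $\beta$-safety (plus the telescoping Lemma~\ref{lem:aux}) to prove $DW(b^{t-n}) \le (\beta/\alpha)\,DW(b^t)$, and this is precisely where the $+\beta$ in the denominator comes from. Your proposal has no analogue of this lemma.

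The second bound is also off. You upper-bound $\sum_i u_i^D(b^{t_i})$ by $\sum_i\sum_{j\in S_i^{t_i}} b_{i,j}^t$ and then try to control that by $\beta(1+\beta)\,SW(b^t)$ via ``charging lost items to their current winners.'' Dropping the subtracted prices in $u_i^D$ destroys the telescoping structure: an item $j$ can appear in $S_i^{t_i}$ for \emph{many} different $i$ within one round (the item can change hands up to $n-1$ times), and each of those bids is at most $p_j^t$, so your sum is bounded by roughly $n\cdot DW(b^t)$, not $O(1)\cdot\beta\cdot SW(b^t)$. Keeping the subtracted prices is essential: the paper's Lemma~\ref{lem:aux} shows $\sum_i u_i^D(b^{t_i}) \le DW(b^t)$ exactly by telescoping the increments $z_j^i - z_j^{i-1}$ of running maxima, and this tighter bound is what makes the constant work out. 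So the two ingredients you still need are the telescoping bound of Lemma~\ref{lem:aux} (instead of discarding the paid prices) and the declared-welfare decay bound of Lemma~\ref{lem:initial-high} (instead of ghost-bid absorption).
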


As we have argued in Proposition \ref{prop:xos} and Proposition \ref{prop:subadditive-1} there exist round-robin best-response dynamics with $(\alpha,\beta) = (1,1)$ for fractionally subadditive valuations and $(\alpha,\beta) = (1/\ln m,1)$ for subadditive valuations. So two corollaries of our theorem are point-wise welfare guarantees of $1/3$ and $\Omega(1/\log m)$ for 
the respective mechanisms.

We also show a welfare guarantee for the average social welfare, Theorem \ref{thm:average} below, that improves upon the pointwise guarantee for large $\beta$. Note that the term $(1-\frac{n}{T})$ is $1-o(1)$ for $T \in \omega(n)$ and at least $1/2$ for $T \geq 2n$.

\begin{theorem}\label{thm:average}
In a $\beta$-safe round-robin bidding dynamic with $\alpha$-aggressive bid updates the average social welfare in the first $T$ steps satisfies
\(
 	\frac{1}{T} \sum_{t=1}^T SW(b^t) \geq \frac{\alpha}{(2 \alpha + 1) \beta} \cdot \left(1 - \frac{n}{T} \right) \cdot OPT(v).
\)
\end{theorem}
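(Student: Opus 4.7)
I would prove \thmref{thm:average} by summing the $\alpha$-aggressive inequality across the entire trajectory and amortizing over rounds, which avoids the pointwise ``staleness'' penalty that enters \thmref{thm:main-1} as an extra factor of $\beta$ and thereby converts the quadratic dependence $(1+\alpha+\beta)\beta$ into the linear dependence $(2\alpha+1)\beta$.

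Concretely, for each time step $t$ with active bidder $i(t)$, the $\alpha$-aggressive condition combined with the deviation that bids just above the current competitor prices to secure $O_{i(t)}$ yields
\[
u_{i(t)}^D(b^t) \;\geq\; \alpha\left[v_{i(t)}(O_{i(t)}) \;-\; \sum_{j\in O_{i(t)}} \max_{k\neq i(t)} b^t_{k,j}\right].
\]
I would sum this over $t = 1, \dots, T$ and use that the round-robin schedule activates each bidder at least $\lfloor T/n\rfloor \geq (T-n)/n$ times to obtain
\[
\alpha\cdot\tfrac{T-n}{n}\cdot OPT(v) \;\leq\; \sum_{t=1}^T u_{i(t)}^D(b^t) \;+\; \alpha\sum_{t=1}^T \sum_{j\in O_{i(t)}} \max_{k\neq i(t)} b^t_{k,j};
\]
the $(1 - n/T)$ factor in the statement originates directly from this accounting.

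The remaining work is to upper-bound both right-hand terms by a suitable multiple of $\sum_t SW(b^t)$. For the declared-utility term, $\beta$-safety gives $u_{i(t)}^D(b^t) \leq \beta\, u_{i(t)}(b^t) \leq \beta\, v_{i(t)}(S_{i(t)}^t) \leq \beta\, SW(b^t)$. For the cross-term, I would swap the sums by item so that each $j$ contributes only at those time steps where $i(t)$ equals the OPT owner $i(j)$; each such contribution is at most the winning bid $w_j^t$ on $j$ at $t$, and these winning bids can be charged against $\sum_t DW(b^t)$, which is in turn at most $\beta \sum_t SW(b^t)$ by applying $\beta$-safety to declared welfare (using $DW(b^t) = \sum_i u_i^D(b^t) + \sum_j p_j^t$). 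Combining these estimates and solving for $\sum_t SW(b^t)/T$ then yields the claim.

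The main obstacle I anticipate is producing the cross-term bound with the right constant: the crude estimate $\max_{k\neq i(t)} b^t_{k,j} \leq \sum_k b^t_{k,j}$ is wasteful, so one must exploit carefully that each bidder updates exactly once per block of $n$ consecutive steps, letting the item-prices amortize across rounds rather than being charged against the welfare at every single step. It is precisely this round-level amortization (as opposed to the pointwise argument of \thmref{thm:main-1}, which must absorb an additional $\beta$ when comparing each bidder's last-activation bid against the current state) that produces the linear dependence on $\beta$ for the time-averaged welfare.
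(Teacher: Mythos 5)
Your high-level plan (sum the $\alpha$-aggressive inequality and split into a declared-utility term and a cross-term) is in the right spirit, but the bounds you propose for the two right-hand terms are each a factor of $n$ too loose, and this loss is not recovered. Summing the per-step inequality over $t=1,\dots,T$ gives a right-hand side of roughly $\alpha\,(T/n)\cdot OPT(v)$, since each bidder is activated only $\approx T/n$ times. You then bound $\sum_t u_{i(t)}^D(b^t)\le\beta\sum_t SW(b^t)$ and $\sum_t\sum_{j\in O_{i(t)}}\max_{k\ne i(t)}b^t_{k,j}\le\sum_t DW(b^t)\le\beta\sum_t SW(b^t)$, yielding $(1+\alpha)\beta\sum_t SW(b^t)\ge\alpha\lfloor T/n\rfloor\, OPT(v)$, i.e.\ $\frac1T\sum_t SW(b^t)\ge\frac{\alpha}{(1+\alpha)\beta n}(1-n/T)\,OPT(v)$ --- an extra $n$ in the denominator. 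The cross-term bound $\sum_j\sum_{t:i(t)=i(j)}p_j^t\le\sum_t\sum_j p_j^t$ is exactly the ``crude charge against welfare at every step'' that you yourself flag as the main obstacle; the flag is warranted, but you do not carry out the amortization that would fix it, so as stated the argument does not yield the claimed constant.

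The paper avoids this by working at the \emph{window} level rather than the step level. Lemma~\ref{lem:initial-low} takes a full block of $n$ consecutive updates from $b^{t-n}$ to $b^t$, uses $\alpha$-aggressiveness for each bidder at his unique update time within the block, amortizes the competitor prices via $\max_{k\ne i}b_{k,j}^{\tau}\le p_j^t+p_j^{t-n}$ (comparing only against the two endpoint profiles, not against every intermediate step), and uses the telescoping Lemma~\ref{lem:aux} to bound $\sum_i u_i^D(b^{\tau_i})\le DW(b^t)$. This gives $(\alpha+1)DW(b^t)+\alpha DW(b^{t-n})\ge\alpha\,OPT(v)$ per window, with a full $OPT(v)$ on the right-hand side. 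Summing this over $t=n+1,\dots,T$, absorbing both $DW$ sums into $\sum_{t=1}^T DW(b^t)$, and applying Lemma~\ref{lem:declared-vs-actual} once gives $(2\alpha+1)\sum_t DW(b^t)\ge(T-n)\alpha\,OPT(v)$ and hence the theorem. The reason the average bound depends only linearly on $\beta$ is also slightly different from what you suggest: Theorem~\ref{thm:main-1} needs Lemma~\ref{lem:initial-high} to replace $DW(b^{t-n})$ by $\frac{\beta}{\alpha}DW(b^t)$, which is where the extra $\beta$ enters, whereas Theorem~\ref{thm:average} sums the window inequality directly so the $DW(b^{t-n})$ terms contribute to the same average and no such replacement is needed.
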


This theorem shows that the best-response dynamics described in Proposition \ref{prop:subadditive-2} with $(\alpha,\beta) = (1,\ln m)$, whose point-wise welfare guarantee is only $\Omega(1/\log^2 m)$ by Theorem \ref{thm:main-1}, guarantees an average social welfare of $\Omega(1/\log m)$.

In Appendix \ref{app:one-third} we show that the point-wise welfare guarantee of $1/3$ for fractionally subadditive valuations is tight for the respective mechanism. In Section \ref{sec:lower-bound} we show that the $\Omega(1/\log m)$ bounds are essentially best possible in a more general sense.

%%%%%%%

%% SUBSECTION: PROOF OF THEOREM THM:MAIN-1
\subsection{Proof of Theorem \ref{thm:main-1}}

The core of our proof of the pointwise welfare guarantee are two lemmata. The first (Lemma \ref{lem:initial-low}) shows that the declared social welfare after a single round of updates is high when the initial declared welfare is low and the second (Lemma \ref{lem:initial-high}) shows that the declared welfare after a single round of updates is high when the initial declared welfare is high. To prove these lemmata we need the following auxiliary lemma.

\begin{lemma}\label{lem:aux}
Consider a sequence $b^0, \ldots, b^n$ in which bidder $i$ updates his bid in step $i$. Denote bidder $i$'s declared utility in step $i$ by $u_i^D(b^{i})$. Then, $\sum_{i = 1}^{n} u_i^D(b^i) \leq DW(b^n)$.
\end{lemma}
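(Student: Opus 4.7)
The proof proposal is to reduce the inequality to an item-by-item argument and then exploit a telescoping structure induced by the round-robin schedule.

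The first step is to observe two facts. Since bidders are activated once each in round-robin fashion, with bidder $i$ updating at step $i$, bidder $i$'s bid vector does not change between step $i$ and step $n$; hence $b^i_{i,j} = b^n_{i,j}$ for every bidder $i$ and item $j$. Second, the declared welfare is the sum of the winning bids, so $DW(b^n) = \sum_{j \in M} \max_i b^n_{i,j}$. Combining these, it suffices to prove for each item $j$ that the total contribution to $\sum_i u_i^D(b^i)$ coming from item $j$ is at most $\max_i b^n_{i,j}$. Then summing over $j$ gives the lemma.

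The heart of the proof is the per-item argument. Fix $j$ and abbreviate $x_i := b^n_{i,j} = b^i_{i,j}$. The contribution from item $j$ to $u_i^D(b^i)$ is $x_i - \max_{k \neq i} b^i_{k,j}$ if bidder $i$ wins item $j$ at step $i$, and zero otherwise. Let $i_1 < i_2 < \dots < i_r$ be the bidders who win item $j$ at their respective update steps. The key observation is that at step $i_s$, bidder $i_{s-1}$'s current bid on $j$ is already $x_{i_{s-1}}$ (since $i_{s-1}$ updated earlier and has not moved since). As $i_s$ wins at step $i_s$, his bid $x_{i_s}$ must dominate all other bids at that step; in particular $x_{i_s} \geq x_{i_{s-1}}$, so the winners' bids form an increasing sequence $x_{i_1} \leq x_{i_2} \leq \dots \leq x_{i_r}$. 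Moreover, the price $i_s$ is charged on $j$ at step $i_s$, namely $\max_{k \neq i_s} b^{i_s}_{k,j}$, is bounded below by $x_{i_{s-1}}$ for $s \geq 2$ (and trivially by $0$ for $s = 1$).

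Plugging these lower bounds on prices into the contributions gives
\[
\sum_{s=1}^{r} \bigl(x_{i_s} - \max_{k \neq i_s} b^{i_s}_{k,j}\bigr)
\;\leq\; x_{i_1} + \sum_{s=2}^{r} \bigl(x_{i_s} - x_{i_{s-1}}\bigr) \;=\; x_{i_r} \;\leq\; \max_i x_i,
\]
which is exactly the per-item bound. I anticipate the main obstacle to be bookkeeping rather than any conceptual difficulty: one has to be careful to handle ties consistently with the fixed tie-breaking rule, and to verify that the ``initial'' bids of not-yet-activated bidders $y_{i+1}, \dots, y_n$ appearing in $\max_{k \neq i_s} b^{i_s}_{k,j}$ do not spoil the inequality $\max_{k \neq i_s} b^{i_s}_{k,j} \geq x_{i_{s-1}}$ (they only make it larger, hence help us). Once this is set up cleanly, the telescoping and the sum over items give the claim.
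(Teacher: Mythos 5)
Your proof is correct and uses essentially the same telescoping idea as the paper, just organized per item: you fix item $j$, track the increasing sequence of winning bids $x_{i_1} \leq \cdots \leq x_{i_r}$, and telescope, whereas the paper defines $z_j^i = \max_{k\leq i} b_{k,j}^i$, bounds each bidder's declared utility by $\sum_j(z_j^i - z_j^{i-1})$, and telescopes over $i$. The two arguments coincide on the winning steps and differ only in the order of aggregation over items versus bidders.
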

\begin{proof}
Consider an arbitrary bidder $i$. Bidder $i$ updates his bid in step $i$. %Denote the corresponding bid profiles before and after the update by $b^{i-1}$ and $b^{i}$. 
Suppose bidder $i$'s update buys him the set of items $S'$. Then
\[
	u_i^D(b^{i}) = \sum_{j \in S'} \left(b^{i}_{i,j} - \max_{k \neq i} b_{k,j}^i \right)\enspace.
\]

For $i > 0$, let $z^i_j = \max_{k \leq i} b_{k, j}^i$ for all $j$. That is, $z^i_j$ is the maximum bid on item $j$ that is placed by one of the bidders $1, \ldots, i$, $z^0_j = 0$ for all $j$.

The crucial observation is that
\(
	\sum_{j \in S'} (b_{i,j}^{i} - \max_{k \neq i} b_{k,j}^i) \leq \sum_{j \in M} (z_j^i - z_j^{i-1}) \enspace.
\)
The reason is as follows. For $j \not\in S'$, we have $z_j^i \geq z_j^{i-1}$ by definition. For $j \in S'$, $b_{i, j}^i = z^i_j$ and $\max_{k \neq i} b_{k,j}^i \geq \max_{k < i} b_{k,j}^i = \max_{k < i} b_{k,j}^{i-1} = z_j^{i-1}$.

Summing over all players $i$ we obtain
\[
	\sum_{i \in N} u_i^D(b^{i}) \leq \sum_{i \in N} \sum_{j \in M} (z_j^i - z_j^{i-1})\enspace.
\]

The double sum is telescoping and $z_j^n = \max_{k} b^n_{k,j}$ and $z_j^0 = 0$ by definition. So,
\[
	\sum_{i \in N} u_i^D(b^{i}) \leq \sum_{j \in M} (z_j^n - z_j^{0}) = \sum_{j \in M} \max_{k} b^n_{k,j} = DW(b^n)\enspace,
\]
which proves the claim. %% HEREHERE
%\qed
\end{proof}

With the help of this lemma we can now prove our key lemmata.

\begin{lemma}\label{lem:initial-low}
Let $S^\ast_1, \dots, S^\ast_n$ be any feasible allocation, in which player $i$ receives items $S^\ast_i$. Consider a sequence $b^0, \ldots, b^n$ in which bidder $i$ updates his bid in step $i$ using an $\alpha$-aggressive bid. 
We have $(\alpha + 1) \cdot DW(b^n) + \alpha \cdot DW(b^0) \geq \alpha \cdot \sum_{i \in N} v_i(S^\ast_i)$.
\end{lemma}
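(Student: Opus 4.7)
The plan is to combine $\alpha$-aggressiveness with the telescoping bound from Lemma \ref{lem:aux}, and then to carefully control how large the ``external prices'' facing each bidder can be relative to $DW(b^0)$ and $DW(b^n)$.

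First I would use the $\alpha$-aggressive hypothesis. When bidder $i$ is activated at step $i$ against bids $b^i_{-i}$, one feasible deviation for him is to bid aggressively on exactly the items of $S^\ast_i$ so as to win that set. This deviation secures utility at least $v_i(S^\ast_i) - \sum_{j \in S^\ast_i}\max_{k\neq i} b^i_{k,j}$, and since this is a lower bound on $\max_{b'_i} u_i(b'_i,b^i_{-i})$, $\alpha$-aggressiveness yields
\[
u_i^D(b^i) \;\geq\; \alpha\Bigl(v_i(S^\ast_i) - \sum_{j \in S^\ast_i}\max_{k\neq i} b^i_{k,j}\Bigr).
\]
Summing this over $i$ and applying Lemma \ref{lem:aux} on the left-hand side gives
\[
DW(b^n) \;\geq\; \alpha \sum_{i \in N} v_i(S^\ast_i) \;-\; \alpha \sum_{i \in N}\sum_{j \in S^\ast_i}\max_{k\neq i} b^i_{k,j}.
\]

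The main obstacle is bounding the double sum $\sum_i \sum_{j \in S^\ast_i}\max_{k\neq i} b^i_{k,j}$ by $DW(b^n)+DW(b^0)$. Here I would exploit two facts: (a) because $(S^\ast_i)_{i\in N}$ is a partition, each item $j$ appears in at most one $S^\ast_i$, so the double sum collapses to $\sum_j \max_{k\neq i(j)} b^{i(j)}_{k,j}$; and (b) in round-robin updating, at step $i$ bidders $k<i$ already hold their final bid $b^n_{k,j}$, while bidders $k>i$ still hold their initial bid $b^0_{k,j}$. Splitting the max over $k\neq i$ into the two groups and using $\max(a,b)\leq a+b$ for nonnegative reals gives
\[
\max_{k\neq i} b^i_{k,j} \;\leq\; \max_{k<i} b^n_{k,j} + \max_{k>i} b^0_{k,j} \;\leq\; \max_k b^n_{k,j} + \max_k b^0_{k,j}.
\]
Summing this bound over $j$ yields $\sum_j \max_{k\neq i(j)} b^{i(j)}_{k,j} \leq DW(b^n) + DW(b^0)$.

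Combining the two displays, I get
\[
DW(b^n) \;\geq\; \alpha \sum_{i\in N} v_i(S^\ast_i) \;-\; \alpha\bigl(DW(b^n) + DW(b^0)\bigr),
\]
and rearranging gives exactly $(\alpha+1)\,DW(b^n) + \alpha\,DW(b^0) \geq \alpha\sum_{i}v_i(S^\ast_i)$, as desired. The only subtle point beyond the calculation is the observation that $b^i_{k,j} = b^n_{k,j}$ for $k<i$ and $b^i_{k,j} = b^0_{k,j}$ for $k>i$; this is precisely where round-robin order (each player updates exactly once, in the order $1,2,\dots,n$) is used.
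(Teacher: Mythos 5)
Your proof is correct and follows essentially the same route as the paper: the same deviation to $S^\ast_i$, the same $\alpha$-aggressiveness inequality, the same application of Lemma~\ref{lem:aux}, and the same observation that at step $i$ the bids of players $k<i$ are already final while those of $k>i$ are still initial, giving $\max_{k\neq i} b^i_{k,j} \leq p^n_j + p^0_j$. The only cosmetic difference is that you carry the $\max$-splitting explicitly rather than via the paper's $p^t_j$ notation.
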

\begin{proof}
%Without loss of generality, player $i$ gets to change his bid from $b^{i-1}$ to $b^i$. Otherwise, re-index the players. 
Consider player $i$'s action in time step $i$. Instead of choosing bid $b^i_i$, he could have bought the set of items $S^\ast_i$.
As $b^i_i$ is $\alpha$-aggressive, we get 
\[
	u_i^D(b^{i}) \geq \alpha \cdot \bigg( v_i(S^\ast_i) - \sum_{j \in S^\ast_i} \max_{k \neq i} b_{k,j}^i \bigg) \enspace.
\]

Define $p_j^t = \max_i b_{i,j}^t$ for all items $j$. That is, $p_j^t$ is the maximum bid that is placed on item $j$ in bid profile $b^t$. 
We claim that for every $j \in S^\ast_i$, $\max_{k \neq i} b_{k,j}^i \leq p_j^n + p_j^0$. This is correct because if $b_{k,j}^i$ attains its maximum for $k < i$ then $\max_{k \neq i} b_{k,j}^i \leq p_j^n$ as $k$'s bid on item $j$ will not change anymore. In the other case, if $k > i$, then $\max_{k \neq i} b_{k,j}^i \leq p_j^0$ because $k$ has not yet changed the bid on item $j$. Using that both $p_j^0$ and $p_j^n$ are never negative, the bound follows.

We thus have
\[
	u_i^D(b^{i}) + \alpha \cdot \sum_{j \in S^\ast_i} (p_j^n + p_j^0) \geq \alpha \cdot v_i(S^\ast_i)\enspace.
\]

Summing this inequality over all bidders $i \in N$ yields
\[
	\sum_{i = 1}^n u_i^D(b^{i}) + \alpha \cdot \sum_{i = 1}^n \sum_{j \in S^\ast_i} (p_j^n + p_j^0) \geq \alpha \cdot \sum_{i = 1}^n v_i(S^\ast_i) \enspace.
\]

We can upper bound the first sum by $DW(b^n)$ using Lemma \ref{lem:aux}. The double sum adds up every $j \in M$ exactly once and we have $\sum_{j \in M} p_j^n = DW(b^n)$ and $\sum_{j \in M} p_j^0 = DW(b^0)$. We obtain
\[
	(\alpha + 1) \cdot DW(b^n) + \alpha \cdot DW(b^0) \geq \alpha \cdot \sum_{i = 1}^n v_i(S^\ast_i)\enspace, %\enspace. %\qed %% HERE
\]
as claimed.
\end{proof}

\begin{lemma}\label{lem:initial-high}
Consider a $\beta$-safe bid sequence $b^0, \ldots, b^n$ in which player $i$ changes his bid from $b^{i-1}$ to $b^i$ using an $\alpha$-aggressive bid. Then, $DW(b^n) \geq \frac{\alpha}{\beta} \cdot DW(b^0)$.
\end{lemma}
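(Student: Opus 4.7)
The plan is to telescope the declared welfare across the sequence of updates. The first step is an accounting identity: for any bid profile $b$ and any player $k$,
\[
	DW(b) = u_k^D(b) + \sum_j \max_{k' \neq k} b_{k',j},
\]
since on items $k$ wins the top bid equals $k$'s margin plus the runner-up, while on items $k$ loses the top bid is a runner-up from $k$'s perspective. Applying this with $k = i$ to both $b^{i-1}$ and $b^i$, and using that $b^i_{-i} = b^{i-1}_{-i}$ so the opponent-sums cancel, I obtain the key identity $DW(b^i) - DW(b^{i-1}) = u_i^D(b^i) - u_i^D(b^{i-1})$. Summing over $i$ then yields
\[
	DW(b^n) = DW(b^0) + \sum_{i=1}^n \bigl[u_i^D(b^i) - u_i^D(b^{i-1})\bigr].
\]

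Next I would bound each updating player's declared-utility drop by $u_i^D(b^i) \geq (\alpha/\beta) \, u_i^D(b^{i-1})$. By $\alpha$-aggressiveness applied to the alternative bid $b'_i = b^{i-1}_i$ (whose actual utility against $b^i_{-i} = b^{i-1}_{-i}$ is precisely $u_i(b^{i-1})$), one gets $u_i^D(b^i) \geq \alpha \cdot u_i(b^{i-1})$; and $\beta$-safety at the reachable profile $b^{i-1}$ gives $u_i(b^{i-1}) \geq u_i^D(b^{i-1})/\beta$. Substituting into the telescoped identity yields $DW(b^n) \geq DW(b^0) - (1 - \alpha/\beta) \sum_i u_i^D(b^{i-1})$.

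The main obstacle is to establish the companion bound $\sum_i u_i^D(b^{i-1}) \leq DW(b^0)$, a backwards-in-time analogue of Lemma~\ref{lem:aux}. I would define $\bar{z}^t_j = \max_{k \geq t} b^0_{k,j}$, so that $\bar{z}^1_j = \max_k b^0_{k,j}$ and $\bar{z}^{n+1}_j = 0$, and the differences $\bar{z}^t_j - \bar{z}^{t+1}_j \geq 0$ telescope nicely in $t$. The crucial observation is that when player $i$ wins item $j$ at time $i-1$, then since $b^{i-1}_{i,j} = b^0_{i,j}$ and $b^{i-1}_{k,j} = b^0_{k,j}$ for every $k > i$ (those players have not yet updated), the winning condition forces $b^0_{i,j} \geq b^0_{k,j}$ for all $k > i$; hence $b^0_{i,j} = \bar{z}^i_j$, while $\max_{k \neq i} b^{i-1}_{k,j} \geq \max_{k > i} b^0_{k,j} = \bar{z}^{i+1}_j$. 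Thus item $j$'s contribution to $u_i^D(b^{i-1})$ is at most $\bar{z}^i_j - \bar{z}^{i+1}_j$ (trivially so for $j \notin W_i^{i-1}$, whose contribution is zero), and summing over items then telescoping over $i$ gives $\sum_i u_i^D(b^{i-1}) \leq \sum_j (\bar{z}^1_j - \bar{z}^{n+1}_j) = DW(b^0)$. Combining the three ingredients yields $DW(b^n) \geq DW(b^0) - (1 - \alpha/\beta) DW(b^0) = (\alpha/\beta) DW(b^0)$, as claimed.
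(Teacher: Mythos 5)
Your proof is correct, and it takes a genuinely different route from the paper's. Both arguments begin with the same per-step identity $DW(b^i) - DW(b^{i-1}) = u_i^D(b^i) - u_i^D(b^{i-1})$ and the same two ingredients ($\alpha$-aggressiveness applied to the option of keeping $b_i^{i-1}$, and $\beta$-safety at $b^{i-1}$), but they diverge in which side of the resulting comparison they retain. The paper rewrites $u_i^D(b^{i-1}) \leq \tfrac{\beta}{\alpha} u_i^D(b^i)$ so that the leftover sum is $\sum_i u_i^D(b^i)$, which it then bounds by $DW(b^n)$ by reusing Lemma~\ref{lem:aux}. You instead retain $\sum_i u_i^D(b^{i-1})$ and bound it by $DW(b^0)$ via a mirror-image of Lemma~\ref{lem:aux}: your $\bar z^i_j = \max_{k \geq i} b^0_{k,j}$ telescopes backward from the initial profile just as the paper's $z^i_j = \max_{k \leq i} b^i_{k,j}$ telescopes forward to the final one, and the observation that the not-yet-updated players $k\ge i$ still carry their initial bids at time $i-1$ is exactly what makes the backward version work. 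Both routes deliver $DW(b^n) \geq \tfrac{\alpha}{\beta} DW(b^0)$. The paper's choice is globally a bit more economical since Lemma~\ref{lem:aux} is shared with the proof of Lemma~\ref{lem:initial-low}, whereas your proof introduces a second, symmetric auxiliary lemma; on the other hand, your version makes the $b^0$-vs-$b^n$ symmetry of the bound explicit. One small thing worth stating: you implicitly use $1 - \alpha/\beta \geq 0$ (i.e.\ $\alpha \leq \beta$) when converting the inequality $DW(b^n) \geq DW(b^0) - (1-\alpha/\beta)\sum_i u_i^D(b^{i-1})$ into the final bound via $\sum_i u_i^D(b^{i-1}) \leq DW(b^0)$; the paper makes the analogous assumption ($\alpha \leq 1 \leq \beta$) explicit at the corresponding step, so this is not a gap, but it should be said.
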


\begin{proof}
Consider an arbitrary bidder $i$ and his update from $b^{i-1}$ to $b^{i}$. Denote the set of items that bidder $i$ won under bids $b^{i-1}$ by $S_i^{i-1}$, and the set of items that he wins under bids $b^{i}$ by $S_i^{i}$. So 
%\begin{align*}
%&u_i^D(b^{i-1}) = \sum_{j \in S_i^{i-1}} b_{i,j}^{i-1} - \sum_{j \in S_i^{i-1}} \max_{k \neq i} b_{k,j}^{i-1}  \quad \text{and,}\displaybreak[0]\\
%&u_i^D(b^{i}) = \sum_{j \in S_i^{i}} b_{i,j}^{i} - \sum_{j \in S_i^{i}} \max_{k \neq i} b_{k,j}^{i}\enspace.%
%\end{align*}
\begin{align*}
u_i^D(b^{i-1}) = \sum_{j \in S_i^{i-1}} b_{i,j}^{i-1} - \sum_{j \in S_i^{i-1}} \max_{k \neq i} b_{k,j}^{i-1}  \;\; \text{and,} \;\;
u_i^D(b^{i}) = \sum_{j \in S_i^{i}} b_{i,j}^{i} - \sum_{j \in S_i^{i}} \max_{k \neq i} b_{k,j}^{i}\enspace.%
\end{align*}

Using that for all $k \neq i$ and all $j$ we have $b_{k,j}^{i-1} = b_{k,j}^{i}$ we obtain that the difference in declared welfare over all bidders between steps $i-1$ and $i$ is equal to the difference in bidder $i$'s declared utility at these time steps. Formally,
\begin{align*}
DW(b^i)
&= \sum_{j \in M\setminus S_i^{i}} \max_{k \neq i} b_{k,j}^{i-1} + \sum_{j \in S_i^{i}} b_{i,j}^{i} \displaybreak[0]\\
&= \sum_{j \in M} \max_{k \neq i} b_{k,j}^{i-1} + \sum_{j \in S_i^{i}} b_{i,j}^{i} - \sum_{j \in S_i^{i}} \max_{k \neq i} b^{i}_{k,j}\displaybreak[0]\\
&= \sum_{j \in M} \max_{k \neq i} b_{k,j}^{i-1} + u_i^D(b_i)\displaybreak[0]\\
&= \sum_{j \in M\setminus S_i^{i-1}} \max_{k \neq i} b_{k,j}^{i-1} + \sum_{j \in S_i^{i-1}} \max_{k \neq i} b_{k,j}^{i-1} + u_i^D(b_i)\displaybreak[0]\\
&= \sum_{j \in M\setminus S_i^{i-1}} \max_{k \neq i} b_{k,j}^{i-1} + \sum_{j \in S_i^{i-1}} b_{i,j}^{i-1} + u_i^D(b_i) - \sum_{j \in S_i^{i-1}} b_{i,j}^{i-1} + \sum_{j \in S_i^{i-1}} \max_{k \neq i} b_{k,j}^{i-1}\\
&= DW(b^{i-1}) + u_i^D(b_i) - u_i^D(b^{i-1})\enspace.
\end{align*}

We now extend this identity to a lower bound on $DW(b^i)$. 
Since $b^i_i$ is $\alpha$-aggressive, 
we have $u^D_i(b^i) \geq \alpha \cdot u_i(b^{i-1})$. 
Since the bidding sequence is $\beta$-safe, $u^D_i(b^t) \leq \beta \cdot u_i(b^t)$ for all $t$. So,
\begin{align*}
DW(b^i) & = DW(b^{i-1}) + u^D_i(b^i) - u^D_i(b^{i-1}) \\
& \geq DW(b^{i-1}) + u^D_i(b^i) - \beta \cdot u_i(b^{i-1}) \\
%& \geq DW(b^{i-1}) + u^D_i(b^i) - \beta \cdot u_i(b^i) \\
& \geq DW(b^{i-1}) + u^D_i(b^i) - \frac{\beta}{\alpha} \cdot u^D_i(b^i) \\
& = DW(b^{i-1}) - \left(\frac{\beta}{\alpha}-1\right) \cdot u^D_i(b^i) \enspace.
\end{align*}

Summing this inequality over all bidders $i \in N$ and using the telescoping sum $\sum_{i \in N} (DW(b^i) - DW(b^{i-1}) = DW(b^n) - DW(b^0)$ we obtain
\[
DW(b^n) \geq DW(b^0)- \left(\frac{\beta}{\alpha}-1\right) \sum_{i \in N} u_i^D(b^i)\enspace.
\]
Since $\alpha \leq 1$ and $\beta \geq 1$ the factor $(\beta/\alpha-1) \geq 0$. We can therefore use Lemma \ref{lem:aux} to conclude that
\[
DW(b^n) \geq DW(b^0)- \left(\frac{\beta}{\alpha}-1\right) DW(b^n) \enspace,
\]
which concludes the proof.
%\qed
\end{proof}

We will use our key lemmata to show a lower bound on the declared welfare. To relate the declared welfare to the social welfare we will use the following lemma.

\begin{lemma}\label{lem:declared-vs-actual}
In a $\beta$-safe sequence of bid profiles $b^0, b^1, b^2, \dots$ for every $t \geq 0$, $DW(b^t) \leq \beta \cdot SW(b^t)$.
\end{lemma}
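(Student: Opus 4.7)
The plan is to unpack the definitions and reduce the claim to a per-player inequality. For a fixed time $t$, let $S_i = S_i^t$ denote the set of items player $i$ wins under $b^t$, write $B_i = \sum_{j \in S_i} b_{i,j}^t$ for $i$'s total winning bid, and $P_i = \sum_{j \in S_i} \max_{k \neq i} b_{k,j}^t$ for $i$'s total payment. Then by construction $DW(b^t) = \sum_i B_i$, $SW(b^t) = \sum_i v_i(S_i)$, $u_i^D(b^t) = B_i - P_i$, and $u_i(b^t) = v_i(S_i) - P_i$.

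The $\beta$-safety hypothesis says $B_i - P_i = u_i^D(b^t) \leq \beta \cdot u_i(b^t) = \beta (v_i(S_i) - P_i)$, which rearranges to
\[
B_i \leq \beta \cdot v_i(S_i) - (\beta - 1) P_i.
\]
Since $\beta \geq 1$ by definition of a $\beta$-safe dynamic and $P_i \geq 0$ (prices are maxima of non-negative bids), the term $(\beta-1) P_i$ is non-negative and can be dropped, giving the per-player bound $B_i \leq \beta \cdot v_i(S_i)$.

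Summing over all $i \in N$ then yields
\[
DW(b^t) = \sum_{i \in N} B_i \;\leq\; \beta \sum_{i \in N} v_i(S_i) \;=\; \beta \cdot SW(b^t),
\]
as claimed. There is no real obstacle here: the only subtlety worth flagging explicitly in the write-up is that dropping $(\beta-1) P_i$ requires both $\beta \geq 1$ (built into Definition \ref{def:safe}) and non-negativity of prices, and that the safety inequality is a hypothesis about every reachable profile so it applies to the particular $b^t$ in question.
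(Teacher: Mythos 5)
Your proof is correct and takes essentially the same route as the paper: unpack $u_i^D$ and $u_i$ in terms of the winning sets, apply the $\beta$-safety inequality, rearrange, and drop the non-negative $(\beta-1)$-weighted payment term using $\beta\ge 1$. The only cosmetic difference is that you rearrange per-player before summing, whereas the paper sums the safety inequality first and then rearranges; the algebra is identical.
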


\begin{proof}
Consider an arbitrary time step $t$. Since the bid profile $b^t$ is $\beta$-safe we know that for the allocation $T_1, \dots, T_n$ that corresponds to $b^t$,
\begin{align*}
	\sum_i u_i^D(b^t) 
	&= \sum_i \sum_{j \in T_i} \big(b_{i,j}^t - \max_{k \neq i} b_{k,j}^t\big)\\
	&\leq \beta \cdot \sum_i u_i(b)
	= \beta \cdot \sum_i \bigg(v_i(T_i) - \sum_{j \in T_i} \max_{k \neq i} b_{k,j}^t\bigg).
\end{align*}

Rearranging this and using that $\beta \geq 1$ we obtain
\begin{align*}
	DW(b^t) 
	= \sum_i \sum_{j \in T_i} b_{i,j}^t 
	\leq \beta \cdot SW(b^t) - (\beta-1) \sum_i \sum_{j \in T_i} \max_{k \neq i} b_{k,j}^t 
	\leq \beta \cdot SW(b^t)\enspace,
\end{align*}
and the claim follows.
\end{proof}

We are now ready to prove the theorem.

%% HERE ARXIV2 
\begin{proof}[Proof of Theorem \ref{thm:main-1}]
To prove the guarantee for time step $t \geq n$ consider the bid sequence of length $n+1$ from $b^{t-n}$ to $b^t$. At time steps $t-n+1$ to $t$ each bidder updates his bid exactly once. 
%Without loss of generality, we can assume this happens in order $1, \ldots, n$; otherwise, re-index the players. 
By the virtue of being a subsequence of a $\beta$-safe bidding sequence the sequence $b^{t-n}, \dots, b^t$ is $\beta$-safe. Moreover each bid update is $\alpha$-aggressive. 

%Combining Lemma \ref{lem:initial-low} with Lemma \ref{lem:initial-high} to the allocation $S^*_1, \dots, S^*_n$ that maximizes social welfare we obtain
Applying first Lemma \ref{lem:initial-high} and then Lemma \ref{lem:initial-low} with $b^t$ taking the role of $b^n$, $b^{t-n}$ taking the role of $b^0$, and setting $S^*_1, \dots, S^*_n$ to the allocation that maximizes welfare we obtain
\begin{align*}
(1+\alpha+\beta) \cdot DW(b^t)
&= (\alpha+1) \cdot DW(b^t) + \alpha \cdot \frac{\beta}{\alpha} DW(b^t)\\
&\geq (\alpha+1) \cdot DW(b^t) + \alpha \cdot DW(b^{t-n}) \\
&\geq \alpha \cdot OPT(v)\enspace.
\end{align*}

Now, by Lemma \ref{lem:declared-vs-actual}, $DW(b^t) \leq \beta \cdot SW(b^t)$. Combining this with the previous inequality yields
\[
	(1+\alpha+\beta) \cdot \beta \cdot SW(b^t) \geq \alpha \cdot OPT(v)\enspace,
\]
as claimed.
\end{proof}

%%%%%%%

%% SUBSECTION: PROOF OF THEOREM THM:AVERAGE
\subsection{Proof of Theorem~\ref{thm:average}}

With the proof of the pointwise welfare guarantee at hand we have already done the bulk of the work for proving our guarantee regarding the average welfare. The basic idea is to sum the lower bound on the declared welfare at any given time step as provided by Lemma \ref{lem:initial-low} over all time steps to obtain a lower bound on the average declare welfare, and to turn this into a lower bound on the actual social welfare using Lemma \ref{lem:declared-vs-actual}.

%% HERE ARXIV2
\begin{proof}[Proof of Theorem~\ref{thm:average}]
We first use Lemma~\ref{lem:initial-low} to relate the declared welfare at time steps $t$ and $t-n$ to the optimal social welfare. Namely, for all $t \geq n$,
\[
 	(\alpha + 1) \cdot DW(b^t) + \alpha \cdot DW(b^{t-n}) \geq \alpha \cdot OPT(v)\enspace.
\]

Next we take the sum over all time steps $t$ and use that $DW(b^t) \geq 0$ to obtain the following lower bound on the average declared welfare
\begin{align*}
 	\frac{1}{T} \cdot \sum_{t=1}^T DW(b^t)
	&\geq \frac{1}{T} \cdot \sum_{t=n+1}^T DW(b^t) \\
	&\geq \frac{\alpha}{\alpha + 1} \cdot \frac{1}{T} \cdot \sum_{t=n+1}^T \bigg( OPT(v) - DW(b^{t-n}) \bigg)\\
	&\geq \frac{\alpha}{\alpha + 1} \cdot \frac{T - n}{T} \cdot OPT(v) - \frac{\alpha}{\alpha + 1} \cdot \frac{1}{T} \cdot \sum_{t=1}^T DW(b^t) \enspace.
\end{align*}

Solving this inequality for $\frac{1}{T} \cdot \sum_{t=1}^T DW(b^t)$ and using Lemma \ref{lem:declared-vs-actual} to lower bound $SW(b^t)$ by  $1/\beta\cdot DW(b^t)$ we obtain
\begin{align*}
 	\frac{1}{T} \cdot \sum_{t=1}^T SW(b^t) 
	\geq \frac{1}{\beta} \cdot \frac{1}{T} \cdot \sum_{t=1}^T DW(b^t)
	\geq \frac{\alpha}{(2 \alpha + 1)\beta} \cdot \frac{T - n}{T} \cdot OPT(v) \enspace,
\end{align*}
which proves the claim. % \qed %% HEREHERE
\end{proof}

%%%%%%%

%% SECTION: LOWER BOUND FOR SUBADDITIVE CAS
\section{Lower Bound for Subadditive CAs}\label{sec:lower-bound}

Next we show our second main result (Theorem \ref{thm:lower-bound-two-bidders}), which shows that no best-response dynamics in which bidders do not overbid on the grand bundle can achieve a point-wise welfare guarantee that is significantly better than $1/\log m$. The assumption that bidders do not overbid on the grand bundle seems quite natural, and does allow overbidding on subsets of items. It is satisfied by all dynamics that we have described in Section \ref{sec:prelims} and more generally by all dynamics that have been proposed in the literature.

\begin{theorem}\label{thm:lower-bound-two-bidders}
For every positive integer $k \in \mathbb{N}_{>0}$ there exists an instance with $n = 2$ players, $m = 2^k-1$ items, and subadditive valuations $v = (v_1,v_2)$ such that in every best-response dynamics in which players do not overbid on the grand bundle there exist infinitely many time steps $t$ at which
\[
	SW(b^t) \leq \frac{1}{\Omega\left(\frac{\log m}{\log\log m}\right)} \cdot OPT(v).
\]
\end{theorem}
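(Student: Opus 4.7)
My plan is to construct a two-player instance whose subadditive valuations force any best-response dynamics (subject to no-overbidding on the grand bundle) to pass through infinitely many low-welfare allocations. Following the hint in the introduction about ``linearly independent vector spaces,'' I would take $m = 2^k - 1$ and identify the items with the non-zero vectors of $\mathbb{F}_2^k$. The valuations $v_1,v_2$ would be defined by a linear-algebraic gadget: roughly, I would layer subspaces $V_0 \subsetneq V_1 \subsetneq \dots \subsetneq V_r = \mathbb{F}_2^k$ of geometrically growing dimensions and let $v_i(S)$ measure the largest index $\ell$ such that $S$ contains a basis complement for $V_\ell$ inside $V_{\ell+1}$, suitably summed. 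The goal is that $v_i$ is subadditive (but not XOS), that $v_i(M) = \Theta(\log m / \log \log m)$, and that $M$ can be partitioned into two ``twin'' sets $S_1^*,S_2^*$ each supporting the full value, so that $OPT(v) = \Theta(\log m / \log \log m)$.

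Next, I would extract the following consequences of the no-overbidding-on-the-grand-bundle hypothesis. At every time $t$, the total price $\sum_j p_j^t$ is at most $\max_i v_i(M) = O(\log m / \log \log m)$, so the ``budget'' for any demand set is tightly bounded, and any demand set that contributes a large declared utility must consist of items whose prices sum to only $O(1)$. I would then prove the core structural lemma: because of the subspace-layered design, at the current price vector, every demand set $D$ with $\sum_{j \in D} p_j = O(1)$ has $v_i(D) = O(1)$; equivalently, to realize the full value of a bidder, the other bidder would need to have already invested almost the entire grand-bundle budget on a disjoint spanning set, which the no-overbidding constraint forbids for both players simultaneously.

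Iterating the lemma across the two players gives the desired oscillation. When bidder $i$ plays a best response, the budget constraint together with the lemma forces their demand set to have value $O(1)$; moreover, the prices they can lay down on the complementary items leave bidder $-i$'s maximum available utility at $O(1)$ as well. Thus after \emph{every} update step the allocation has $SW(b^t) = O(1)$, while $OPT(v) = \Theta(\log m/\log \log m)$, yielding the claimed $SW(b^t) \leq O(\log \log m / \log m) \cdot OPT(v)$ at infinitely many time steps $t$ — in fact at every $t \geq 1$. Because the argument never uses uniqueness of a best response and only uses the no-overbidding assumption on the grand bundle, it applies to any dynamics as required.

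The main obstacle will be engineering the subadditive valuation so that (i) it genuinely has the integrality-gap-like factor of $\log m / \log \log m$, (ii) it remains subadditive (not merely XOS, which would give only a factor of $2$ by the known PoA bounds), and (iii) the demand-set bound in the structural lemma holds uniformly over all price vectors with bounded $\ell_1$-norm, so as to neutralize every possible tie-breaking rule. I expect this to require a careful counting argument along the nested subspaces $V_\ell$ — essentially a dimension-counting dual to the set-cover LP integrality gap — showing that any set with low price mass cannot simultaneously witness many levels of the subspace chain, because each additional witnessed level forces the set to contain linearly independent vectors outside a cumulatively growing subspace, which by a pigeonhole on price mass becomes infeasible after $O(1)$ levels.
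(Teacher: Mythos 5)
Your proposal shares the paper's guiding ideas — identify the items with the nonzero vectors of $\mathbb{F}_2^k$, use the no-overbidding-on-the-grand-bundle hypothesis to bound $\sum_j b_{1,j}$, and argue via a pigeonhole on subspaces — but the construction you sketch is substantially different from the one the paper actually uses, and in ways that I do not believe can be made to work without a new idea.

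The paper's instance is deliberately \emph{asymmetric}: player~1 has a \emph{set-cover valuation} ($v_1(T)$ is the number of affine-hyperplane sets $S_i = \{j : \mathbf{i}\cdot\mathbf{j} = 1\}$ needed to cover $T$, so $v_1(M) = k = \Theta(\log m)$), while player~2 has a small ``subspace-grabber'' valuation whose maximum value on any bundle is only $O(1)$ (it is a scaled step function peaked on $d$-dimensional subspaces of $\mathbb{F}_2^k$ with $d = k - \log_2 k$). The optimum is $\Theta(\log m)$, attained by giving everything to player~1. The core lemma is then that player~2's best response, at prices induced by any bid of player~1 with $\sum_j b_{1,j} \le k$, always contains a full $d$-dimensional subspace $D'$ (averaging over $\mathcal{D}$ using the symmetry of $q$-binomial coefficients), and that \emph{removing} such a subspace crashes player~1's set-cover value: $v_1(M\setminus D') \le k - d = \log_2 k$. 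So the welfare right after each move by player~2 is $O(\log\log m)$, giving the $\Omega(\log m/\log\log m)$ gap. Note the paper only certifies low welfare at the (infinitely many) time steps immediately following player~2's updates, not at every step.

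Your proposal instead describes a \emph{symmetric} layered-subspace valuation with twin optimal bundles $S_1^*, S_2^*$, each of value $\Theta(\log m/\log\log m)$, and claims $SW(b^t) = O(1)$ at \emph{every} time step. This is where the concrete gap lies. First, the key structural lemma — ``every demand set with $O(1)$ total price has $O(1)$ value'' — is precisely the engineering step that needs a concrete valuation to be checked against, and the layered subspace scheme (``largest $\ell$ such that $S$ contains a basis complement for $V_\ell$ inside $V_{\ell+1}$, suitably summed'') is too vague to verify subadditivity or the integrality-gap-like behavior; in particular, it is unclear that for this valuation low price mass \emph{uniformly} forces low value, regardless of how the opponent's bids are distributed. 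Second, by aiming for a symmetric design you lose the mechanism that actually drives the paper's proof: in the paper, player~2 is a weak player whose only role is to steal a structurally critical subspace, and the welfare collapse comes entirely from $v_1(M\setminus D')$ being small. With two rich symmetric players there is no obvious reason a best response by either one wrecks the other's residual value, since each could still be holding a large fraction of their own twin bundle. Third, the claim of $O(1)$ welfare at \emph{every} step is strictly stronger than what the theorem requires and stronger than what the paper establishes (after player~1 moves the welfare can be large — indeed in the paper it is); I see no argument in your sketch that rules this out. To repair the proposal you would need to commit to an explicit valuation pair, prove its subadditivity, and establish a version of the paper's three lemmata (low-price subspace exists; demand set swallows a subspace; residual value is $O(\log\log m)$); the cleanest route is the asymmetric set-cover-plus-subspace-grabber design.
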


To prove this theorem we show that whenever the second player has updated is bid social welfare will be low. This does not imply that the average welfare will be low as well. However, if we restrict attention to round-robin dynamics, then we can extend the construction by adding additional players after the second player that play a low-stakes game on separate items forcing the average welfare to be low as well.

%%%%%%%

%% SUBSECTION: PROOF OF THEOREM THM:LOWER-BOUND-TWO-BIDDERS
\subsection{Proof of Theorem \ref{thm:lower-bound-two-bidders}}

Our proof of the lower bound is built around the following family of hard instances, with $n = 2$ players and $m = 2^k-1$ items. The valuations of the first player are based on an example that demonstrates the worst-case integrality gap for set cover linear programs (see, e.g, \cite[Example 13.4]{Vazirani2001}), and has been used in the context of combinatorial auctions with item bidding before \cite{BhawalkarR11}. The crux of our construction is in the design of the second player's valuation function, and its interplay with the valuation function of the first player.

\begin{definition}\label{def:hard-instances}
For every positive integer $k \in \mathbb{N}_{>0}$ the hard instance $\mathcal{I}_k$ consists of $n = 2$ bidders and $m = 2^k-1$ items and the following subadditive valuations:  
%%
%% HERE ARXIV2
\begin{enumerate}%[leftmargin=0.25in]
\item First bidder: Number the items from $1$ to $m$ and let $\mathbf{i}$ be a $k$-bit binary vector representing the integer $i$. Interpret $\mathbf{i}$ as a $k$-dimensional vector over  $\mathbb{F}_2$. Write $\mathbf{i} \cdot \mathbf{j}$ as the dot product of the two vectors. Let $S_i = \{j \mid \mathbf{j} \cdot \mathbf{i} = 1 \}$. Note that each such set contains $(m+1)/2$ items, and each item is contained in $(m+1)/2$ such sets. For each set of items $T \subseteq M$ let $v_1(T)$ be the minimum number of sets $S_i$ required to cover the items in $T$. 

\item Second bidder: Set $\rho = 4 \frac{k}{m}$ and $d = k - \log_2 k$. Let $\mathcal{D}$ denote the set of all $d$-dimensional subspaces of $\mathbb{F}_2^k$ excluding the zero vector. Then for any set of items $T$ let
\begin{align*}
	&v_2(T) = \rho \cdot \max_{D \in \mathcal{D}} w_D(T)\enspace,
\quad \text{where}\\
	&w_D(T) = 
	\begin{cases}
		0 & \text{for $|T| = 0$} \\
		\frac{|D|}{2}& \text{for $0 < |T \cap D| < |D|$} \\
		|D| & \text{else}
	\end{cases}
	\enspace.
\end{align*}
\end{enumerate}
\end{definition}

Note that, in the instances just described, the first player has a valuation of $v_1(M) \geq k = \log_2(m+1)$ for the grand bundle, while the second player has a maximum valuation of $\max_{T} v_2(T) = \rho \cdot |D| = \rho \cdot (2^d-1) \leq \rho \cdot 2^d = 4$ for any set of items.

To prove the theorem we first use linear algebra to derive a symmetry property of $\mathcal{D}$, which together with weak no-overbidding of the first player on the grand bundle implies the existence of a subset of items $D \in \mathcal{D}$ with low prices (Lemma \ref{lem:prices}). Intuitively, this is because the sets of items that the second player is interested in are rather small (of size about $m/\log_2 m$), and there are sufficiently many of these sets. 
We then show that every demand set of the second player under these prices includes some set of items $D' \in \mathcal{D}$ (Lemma \ref{lem:demand-set}). In the final step, we show that if the second player buys any such set $D'$, then the first player's valuation for the remaining items $M \setminus D'$ and hence the overall social welfare is at most $O(\log\log m)$ (Lemma \ref{lem:value}).

\begin{lemma}\label{lem:prices}
Let $k \in \mathbb{N}_{>0}$. Consider the hard instance $\mathcal{I}_k$. For every vector of bids $b$ such that the first player does not overbid on the grand bundle there is a $d$-dimensional subspace $D \in \mathcal{D}$ such that $\sum_{j \in D} b_{1,j} <  \rho \cdot \frac{\lvert D \rvert}{2}$.
\end{lemma}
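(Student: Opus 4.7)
The plan is to proceed by an averaging argument over the family $\mathcal{D}$ of $d$-dimensional subspaces. Concretely, I would show that the average value of $\sum_{j \in D} b_{1,j}$ as $D$ ranges over $\mathcal{D}$ is strictly smaller than $\rho \lvert D \rvert/2$, so some $D$ achieves the bound pointwise.

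First I would pin down the value of $v_1(M)$. The $k$ sets $S_{e_1}, \dots, S_{e_k}$ corresponding to the standard basis vectors cover every nonzero $\mathbf j \in \mathbb F_2^k$ (since at least one coordinate of $\mathbf j$ is $1$), so $v_1(M) \le k$. Conversely, for any $t < k$ vectors $\mathbf i_1, \dots, \mathbf i_t$, the linear map $\mathbf j \mapsto (\mathbf j \cdot \mathbf i_1, \dots, \mathbf j \cdot \mathbf i_t)$ on $\mathbb F_2^k$ has nontrivial kernel, giving an uncovered nonzero $\mathbf j$. Hence $v_1(M) = k$. The assumption that player $1$ does not overbid on the grand bundle then yields
\[
\sum_{j \in M} b_{1,j} \;\le\; v_1(M) \;=\; k.
\]

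Next I would use symmetry under $GL_k(\mathbb F_2)$ (or a direct double count) to show that every nonzero vector lies in the same number $N_d$ of subspaces $D \in \mathcal D$. Counting incidences $\{(j, D) : j \in D\}$ in two ways gives $N_d \cdot m = |\mathcal D| \cdot (2^d - 1) = |\mathcal D| \cdot |D|$, so
\[
\frac{1}{|\mathcal D|} \sum_{D \in \mathcal D} \sum_{j \in D} b_{1,j}
\;=\; \frac{N_d}{|\mathcal D|} \sum_{j \in M} b_{1,j}
\;=\; \frac{|D|}{m} \sum_{j \in M} b_{1,j}
\;\le\; \frac{k\,|D|}{m}.
\]

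Finally I would compare this average to the target threshold. Since $\rho = 4k/m$, we have $\rho \lvert D \rvert/2 = 2k |D|/m$, which is strictly larger than the upper bound $k |D|/m$ on the average. Hence there must exist at least one $D \in \mathcal D$ with $\sum_{j \in D} b_{1,j} < \rho |D|/2$, as required. The only genuinely non-routine step is the double count for $N_d$; the rest is bookkeeping around the definitions of $\rho$, $d$, and $v_1(M)$.
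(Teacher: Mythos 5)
Your proposal is correct and follows essentially the same argument as the paper: an averaging argument over $\mathcal{D}$, using the fact that every nonzero vector of $\mathbb{F}_2^k$ lies in the same number of $d$-dimensional subspaces (the paper cites the Gaussian binomial $\binom{k-1}{d-1}_2$; your double count derives the same ratio $N_d/\lvert\mathcal D\rvert = \lvert D\rvert/m$), together with $\sum_j b_{1,j}\le v_1(M)=k$ and the observation $k/m < \rho/2$. You additionally spell out the verification that $v_1(M)=k$, which the paper takes for granted.
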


\begin{proof}
Since the first player does not overbid on the grand bundle we have $\sum_{j \in M} b_{1, j} \leq v_1(M) = k$, so the average bids are bounded by $\frac{1}{m} \sum_{j \in M} b_{1, j} \leq \frac{k}{m}$.

Observe that the number of $d$-dimensional subspaces of $\mathbb{F}_2^k$ that contain a vector $0 \neq x \in \mathbb{F}_2^k$ is given as $\binom{k-1}{d-1}_2$, where $\binom{\,\cdot\,}{\,\cdot\,}_q$ refers to the $q$-binomial coefficient (see, e.g., \cite{Prasad2010}). So, in particular, this number is independent of $x$. Therefore,
instead of taking the average over all items $M$, we can take the average over all sets $D \in \mathcal{D}$ and take the average within such a set, i.e., $\frac{1}{m} \sum_{j \in M} b_{1, j} = \frac{1}{\lvert \mathcal{D} \rvert} \sum_{D \in \mathcal{D}} \frac{1}{\lvert D \rvert} \sum_{j \in D} b_{1, j}$.

In combination, there has to be a $D$ such that $\frac{1}{\lvert D \rvert} \sum_{j \in D} b_{1, j} \leq \frac{1}{m} \sum_{j \in M} b_{1, j} \leq \frac{k}{m}$. Since $\frac{k}{m} < \frac{\rho}{2} = 2 \frac{k}{m}$ the claim follows. 
%\qed
\end{proof}

\begin{lemma}\label{lem:demand-set}
Let $k \in \mathbb{N}_{>0}$. Consider the hard instance $\mathcal{I}_k$. 
%Consider any $D \in \mathcal{D}$. For a given price vector $p$ such that $\sum_{j \in D} p_j < \rho \cdot \frac{|D|}{2}$ each demand set of the second player under $v_2$ includes some $D' \in \mathcal{D}$. 
If the prices $p$ as seen by the second player are such that $\sum_{j \in D} p_j < \rho \cdot |D|/2$ for some $D \in \mathcal{D}$, then each demand set of the second player under these prices includes some $D' \in \mathcal{D}$. 
\end{lemma}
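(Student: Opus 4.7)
The plan is a short direct argument that compares any demand set of player 2 to the benchmark deviation of simply buying the hypothesized low-priced set $D$.

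First I would pin down a lower bound on the utility achieved by a demand set. Since $D \in \mathcal{D}$ trivially satisfies $D \supseteq D$, the case analysis in Definition~\ref{def:hard-instances} gives $w_D(D) = |D|$ and hence $v_2(D) = \rho \cdot |D|$. Combined with the hypothesis $\sum_{j \in D} p_j < \rho |D|/2$, this yields
\[
u_2(D) \;=\; v_2(D) - \sum_{j \in D} p_j \;>\; \rho |D| - \frac{\rho |D|}{2} \;=\; \frac{\rho |D|}{2}.
\]
Any demand set $T^\ast$ of player~2 at prices $p$ must therefore satisfy $u_2(T^\ast) \geq u_2(D) > \rho |D|/2$.

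Next I would argue by contradiction. Suppose $T^\ast$ contains no $D' \in \mathcal{D}$; then $|T^\ast \cap D'| < |D'|$ for every $D' \in \mathcal{D}$, so by the piecewise definition of $w_{D'}$ we get $w_{D'}(T^\ast) \leq |D'|/2$ in each case. All elements of $\mathcal{D}$ share the cardinality $|D'| = 2^d - 1 = |D|$, so maximizing over $\mathcal{D}$ gives $v_2(T^\ast) \leq \rho |D|/2$. Since prices are non-negative,
\[
u_2(T^\ast) \;=\; v_2(T^\ast) - \sum_{j \in T^\ast} p_j \;\leq\; \frac{\rho |D|}{2},
\]
contradicting the lower bound from the first step. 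Hence $T^\ast \supseteq D'$ for some $D' \in \mathcal{D}$.

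No step poses a real obstacle; the one point requiring care is reading the piecewise definition of $w_{D'}$ so that any set $T^\ast$ which fails to include some $D'$ is confirmed to have value at most $\rho|D|/2$. Once that is in hand, the proof reduces to the two-line deviation argument above.
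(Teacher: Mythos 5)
Your proof is correct and mirrors the paper's argument exactly: bound $u_2(D) > \rho|D|/2$ using the price hypothesis, then observe that any set failing to contain a full $D' \in \mathcal{D}$ has $v_2$-value (hence utility) at most $\rho|D|/2 < u_2(D)$ and so cannot be a demand set. The only cosmetic difference is that you phrase it as a contradiction whereas the paper argues directly, and you spell out the uniform cardinality $|D'| = 2^d - 1$ that the paper leaves implicit.
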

%%
%\begin{proof}
%The demand set $S$ under $\rho \cdot w_D$ must be a subset of $D$ and it either has size $\ell \leq \lvert D \rvert/2$ or $\ell = \lvert D \rvert$. 
%%%
%By our assumption on the sum of the prices of the items in $D$, $u(D) = \rho \cdot w_D(D) - \sum_{j \in D} p_j > \rho \cdot \frac{\lvert D \rvert}{2}$, while for any set $T$ with $|T \cap D| \leq \frac{D}{2}$, we have $	u(T) \leq \rho \cdot w_D(T \cap D) - \sum_{j \in T} p_j \leq \rho \cdot w_D(T \cap D) \leq \rho \cdot \frac{\lvert D \rvert}{2}$.
%\qed\end{proof}

\begin{proof}
By our assumption on the sum of the prices of the items in $D$, $u(D) = v_2(D) - \sum_{j \in D} p_j = \rho \cdot w_D(D) - \sum_{j \in D} p_j > \rho \cdot \frac{\lvert D \rvert}{2}$. Now, let $S \subseteq M$ be a demand set under $v_2$. If $\lvert S \cap D' \rvert < \lvert D' \rvert$ for all $D' \in \mathcal{D}$, then we have $u(S) = v_2(S) - \sum_{j \in S} p_j \leq v_2(S) = \rho \cdot\max_{D' \in \mathcal{D}} w_{D'}(S) \leq  \rho \cdot \max_{D' \in \mathcal{D}} \frac{\lvert D' \rvert}{2} < u(D)$. This means, $S$ can only be a demand set if $\lvert S \cap D' \rvert = \lvert D' \rvert$ for some $D' \in \mathcal{D}$.
%\qed
\end{proof}

\begin{lemma}\label{lem:value}
Let $k \in \mathbb{N}_{>0}$. Consider the hard instance $\mathcal{I}_k$. Then for $D' \in \mathcal{D}$ we have $v_1(M \setminus D') \leq k - d$. 
\end{lemma}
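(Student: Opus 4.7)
The plan is to exploit the linear-algebraic structure of $\mathbb{F}_2^k$. Recall that $v_1$ is defined as a set-cover valuation with respect to the sets $S_i = \{\mathbf{j} : \mathbf{i}\cdot\mathbf{j}=1\}$, indexed by nonzero $\mathbf{i}\in\mathbb{F}_2^k$. To upper bound $v_1(M\setminus D')$ by $k-d$, I just need to exhibit $k-d$ sets of the form $S_i$ whose union covers $M\setminus D'$.

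The natural choice of indices comes from the orthogonal complement
\[
  D'^{\perp} \;=\; \{\mathbf{x}\in\mathbb{F}_2^k : \mathbf{x}\cdot\mathbf{y}=0 \text{ for all } \mathbf{y}\in D'\},
\]
which is a $(k-d)$-dimensional subspace of $\mathbb{F}_2^k$ by the standard dimension formula for nondegenerate bilinear forms over finite fields. I would fix any basis $\mathbf{i}_1,\ldots,\mathbf{i}_{k-d}$ of $D'^{\perp}$ (these are nonzero, hence valid item indices) and propose the cover $\{S_{\mathbf{i}_1},\ldots,S_{\mathbf{i}_{k-d}}\}$.

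The key step is to check that this family covers every $\mathbf{j}\in M\setminus D'$, i.e.\ that every such $\mathbf{j}$ satisfies $\mathbf{i}_t\cdot\mathbf{j}=1$ for at least one $t$. Suppose for contradiction that $\mathbf{i}_t\cdot\mathbf{j}=0$ for all $t$. Since $\mathbf{i}_1,\ldots,\mathbf{i}_{k-d}$ span $D'^{\perp}$, this means $\mathbf{j}$ is orthogonal to all of $D'^{\perp}$, and therefore $\mathbf{j}\in (D'^{\perp})^{\perp}=D'$, contradicting $\mathbf{j}\in M\setminus D'$. Hence some $\mathbf{i}_t\cdot\mathbf{j}$ must be nonzero, and over $\mathbb{F}_2$ that means it equals $1$, so $\mathbf{j}\in S_{\mathbf{i}_t}$. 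This yields $v_1(M\setminus D')\leq k-d$ as claimed.

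The main obstacle, such as it is, is getting the orthogonality direction right: one must invoke the identity $(D'^{\perp})^{\perp}=D'$, which holds because the standard bilinear form on $\mathbb{F}_2^k$ is nondegenerate (so $\dim D + \dim D^{\perp} = k$ and $D^{\perp\perp}=D$). Everything else — that basis vectors are nonzero, that $S_{\mathbf{i}_t}\subseteq M\setminus D'$ is not even needed (we only need them to cover $M\setminus D'$) — is immediate, and the counting $k-d=\log_2 k$ is what makes this bound useful when combined with \lemref{lem:prices} and \lemref{lem:demand-set}.
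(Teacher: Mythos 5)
Your proof is correct and is essentially the same argument as the paper's, phrased in the language of orthogonal complements rather than dual bases. The paper extends a basis $x_1,\ldots,x_d$ of $D'\cup\{0\}$ to a basis of $\mathbb{F}_2^k$ and uses the rows $d+1,\ldots,k$ of $X^{-1}$ as the covering indices; those rows are exactly a basis of $(D'\cup\{0\})^\perp$, so the two constructions coincide up to notation, and your appeal to $\dim D + \dim D^\perp = k$ and $D^{\perp\perp}=D$ for the nondegenerate standard form over $\mathbb{F}_2$ replaces the paper's ``some coordinate $>d$ is nonzero'' step with the equivalent ``not orthogonal to all of $D'^\perp$''.
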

\begin{proof}
To show the bound on $v_1$, we use that $D' \cup \{ 0 \}$ is a subspace of $\mathbb{F}_2^k$ of dimension $d$. That is, any basis $x_1, \ldots, x_d$ of $D' \cup \{ 0 \}$ can be extended by $x_{d+1}, \ldots, x_k$ to a basis of $\mathbb{F}_2^k$. Let $X = (x_1, \ldots, x_k)$. This way, $X^{-1}$ is the matrix that expresses $j \in \mathbb{F}_2^k$ as a linear combination of $x_1, \ldots, x_k$. As $x_1, \ldots, x_d$ is a basis of $D' \cup \{ 0 \}$, we know that for every $j \not\in D' \cup \{ 0 \}$ the vector $X^{-1} j$ cannot be zero in all components $d + 1, \ldots, k$. This implies that the set $M \setminus D'$ can be covered by sets $S_i$ for $i$ being the rows $d + 1, \ldots, k$ of $X^{-1}$. Therefore $v_1(M \setminus D') \leq k - d$.
%\qed
\end{proof}

%% HERE ARXIV2
\begin{proof}[Proof of Theorem \ref{thm:lower-bound-two-bidders}]
Any best-response dynamics has to ask every bidder infinitely often. We claim that the social welfare is $O(\log\log m)$ right after each update of the second player. Since the optimal social welfare is $\Omega(\log m)$ this shows the claim.

Let $b^t$ be a bid vector after the second player has made a move. Using Lemma~\ref{lem:prices}, we know that there is a set $D \in \mathcal{D}$ with $\sum_{j \in D} b^{t-1}_{1,j} < \rho \cdot \frac{\lvert D \rvert}{2}$. By Lemma~\ref{lem:demand-set}, the second player then buys a superset of some $D' \in \mathcal{D}$. Therefore, right after the second player has updated his bid the first player is allocated a subset of the items $M \setminus D'$. Lemma~\ref{lem:value} implies that the social welfare for this allocation is no higher than $k - d + \rho 2^d = O(\log\log m)$.
%\qed
\end{proof}

%%%%%%%

%% SECTION: BEYOND ROUND-ROBIN ACTIVATION
\section{Beyond Round-Robin Activation}\label{sec:activation}

Our positive results make use of the fact that bidders are activated to update their bid in round-robin fashion. That is, between two activations of a bidder, each other bidder is activated exactly once. In this section, we investigate alternative activation protocols. 

%% SUBSECTION: RANDOMIZED ACTIVATION
\subsection{Randomized Activation}
We first show that our positive results extend to the case where at each step a random player gets to update his bid.

\begin{theorem}\label{thm:randomized-activation}
Consider a $\beta$-safe sequence of bids that is generated by choosing at each time step a player uniformly at random and letting this player update his bid to an $\alpha$-aggressive bid. Then for any time step $T \geq n$, $\Ex{SW(b^T)} \geq \frac{\alpha}{2 (1 + 4 \alpha) \beta} \cdot OPT(v)$.
\end{theorem}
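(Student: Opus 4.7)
The plan is to mimic the pointwise argument of Theorem~\ref{thm:main-1}, but in expectation, via a single-step recursion on $D_t := \Ex{DW(b^t)}$.

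First I would observe that the identity $DW(b^t) - DW(b^{t-1}) = u_{i_t}^D(b^t) - u_{i_t}^D(b^{t-1})$ derived inside the proof of Lemma~\ref{lem:initial-high} does not rely on round-robin order and continues to hold when the active player $i_t$ is selected at random. Taking expectation over a uniformly random $i_t$, this yields
\[
\Ex{DW(b^t) - DW(b^{t-1}) \growingmid b^{t-1}} = \frac{1}{n}\sum_{i \in N}\bigl(u_i^D(b^{t,i}) - u_i^D(b^{t-1})\bigr),
\]
where $b^{t,i}$ denotes the profile that would result if $i$ were activated. The sum $\sum_i u_i^D(b^{t,i})$ can be bounded below by $\alpha\cdot OPT(v) - \alpha\cdot DW(b^{t-1})$ by the same argument used to prove Lemma~\ref{lem:initial-low}: each $u_i^D(b^{t,i}) \geq \alpha\bigl(v_i(S^*_i) - \sum_{j \in S^*_i} \max_{k\neq i} b^{t-1}_{k,j}\bigr)$ by $\alpha$-aggressiveness, and summing the price terms over the partition $(S^*_i)_{i \in N}$ contributes at most $\alpha\cdot DW(b^{t-1})$. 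The subtracted sum $\sum_i u_i^D(b^{t-1})$ can be bounded above either by $\beta$-safety combined with $SW(b^{t-1}) \leq OPT(v)$, or more tightly via the second-price identity $\sum_i u_i^D(b^{t-1}) = DW(b^{t-1}) - \mathrm{Rev}(b^{t-1}) \leq DW(b^{t-1})$.

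Together these give a one-step recursion of the form $D_t \geq (1 - c/n)\cdot D_{t-1} + (\alpha/n)\cdot OPT(v)$ for a suitable constant $c$ of order $1 + \alpha$ (absorbing also the $\beta$-factor if needed). I would then iterate: writing $D_t - D^* \geq (1 - c/n)(D_{t-1} - D^*)$ with $D^* = (\alpha/c)\cdot OPT(v)$ the fixed point, one gets $D_T \geq D^*\cdot\bigl(1 - (1 - c/n)^T\bigr)$. For $T \geq n$ the factor $(1 - c/n)^T$ is at most $e^{-c} \leq 1/e$, so $D_T \geq \tfrac{1}{2}D^*$ using $1 - 1/e > 1/2$. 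Finally, Lemma~\ref{lem:declared-vs-actual} converts this into a lower bound on $\Ex{SW(b^T)}$ after dividing by $\beta$, yielding a bound of the form $\alpha/(2c\beta)\cdot OPT(v)$.

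The main obstacle will be choosing the right upper bound on $\sum_i u_i^D(b^{t-1})$ so that the fixed point $D^*$ remains a meaningful fraction of $OPT(v)$. The naive $\beta$-safety bound produces a free term of the form $(\alpha - \beta)\cdot OPT(v)$ on the right-hand side of the recursion, which is non-positive whenever $\beta > \alpha$ (the typical regime). To resolve this I expect the argument to combine $\beta$-safety with the second-price identity so that the $\beta$-factor is absorbed into the coefficient multiplying $DW(b^{t-1})$ rather than appearing as a subtractive $OPT(v)$ term, keeping the fixed point proportional to $\alpha\cdot OPT(v)$ up to a $1 + O(\alpha)$ factor; after converting from $DW$ to $SW$ this matches the stated $\alpha/(2(1+4\alpha)\beta)$ ratio.
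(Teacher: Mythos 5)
Your proposal is correct, and it takes a genuinely different route than the paper's proof. The paper's argument is an $n$-step (block) argument: it proves a variant of Lemma~\ref{lem:initial-low} that does not rely on round-robin order (stated as Lemma~\ref{lem:max-low}), whose price term involves $\sum_{j}\max_{t\le T}\max_i b^t_{i,j}$, and then controls the expectation of that ``running maximum'' via a separate probabilistic lemma (Lemma~\ref{lem:max-vs-final}), using the bounds $\Pr{i \text{ is activated in the last } n \text{ steps}} = 1-(1-1/n)^n \ge 1/2$ and $(1-1/n)^{-n}\le 4$; these two numerical estimates are precisely where the $2$ and $4$ in the stated constant come from. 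Your one-step recursion on $D_t = \Ex{DW(b^t)}$ bypasses both auxiliary lemmas. It reuses the per-step identity $DW(b^t)-DW(b^{t-1}) = u_{i_t}^D(b^t)-u_{i_t}^D(b^{t-1})$ from the proof of Lemma~\ref{lem:initial-high}, combines it with the aggressiveness bound $\sum_i u_i^D(b^{t,i}) \ge \alpha\bigl(OPT(v)-DW(b^{t-1})\bigr)$ (the single-step analogue of Lemma~\ref{lem:initial-low}, which is simpler here because no round-robin price bookkeeping $p_j^{t_i}\le p_j^n+p_j^0$ is needed), and with the second-price identity $\sum_i u_i^D(b) = DW(b)-\mathrm{Rev}(b) \le DW(b)$. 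This yields $D_t \ge \bigl(1-\tfrac{1+\alpha}{n}\bigr)D_{t-1} + \tfrac{\alpha}{n}\,OPT(v)$, whose fixed point is $\tfrac{\alpha}{1+\alpha}OPT(v)$, and since $\bigl(1-\tfrac{1+\alpha}{n}\bigr)^T \le e^{-(1+\alpha)} < 1/2$ for $T\ge n$ (and $n\ge 2$, $\alpha\le 1$), one gets $D_T \ge \tfrac{\alpha}{2(1+\alpha)}OPT(v)$ and hence, via Lemma~\ref{lem:declared-vs-actual}, $\Ex{SW(b^T)}\ge\tfrac{\alpha}{2(1+\alpha)\beta}OPT(v)$. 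Note that this is in fact a \emph{stronger} constant than the paper's $\tfrac{\alpha}{2(1+4\alpha)\beta}$, not merely a ``matching'' one as you say at the end, since $1+\alpha\le 1+4\alpha$. What the paper's route buys is that it is structurally parallel to the round-robin proof (same pair of key lemmas, adapted), whereas your route buys a cleaner contraction-to-fixed-point argument, fewer intermediate lemmas, and a sharper bound. One small caveat to keep in mind when writing this up: the coefficient $1-\tfrac{1+\alpha}{n}$ must be nonnegative for the contraction to make sense, which holds for $n\ge 2$ given $\alpha\le 1$; the $n=1$ case is trivial since a single bidder wins all items at price zero.
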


The key difference to the previous positive results is as follows. In the case of round-robin activation, we could bound the price that a bidder has to pay for an item $j$ at any time by the sum of the maximum bid before the first and after the $n$-th step. As now, in the case of random activation, a bidder can potentially be activated multiple times during the first $n$ steps, this is not true anymore. Instead, we can show the following lemma.

\begin{lemma}
\label{lem:max-vs-final}
Consider a sequence of bids that is generated by choosing at each time step a player uniformly at random and letting this player update his bid. Then, for all items $j \in M$ and all lengths of the sequence $T \geq 0$, we have
\[
\Ex{\max_{t \leq T} \max_i b_{i, j}^t} \leq \left( 1 - \frac{1}{n} \right)^{-T} \Ex{\max_i b_{i, j}^T} .
\]
\end{lemma}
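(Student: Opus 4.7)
The plan is to reduce the lemma to the stronger pointwise tail bound
\[
\Pr{\max_i b_{i,j}^T \geq x} \geq (1-1/n)^T \Pr{\max_{t \leq T} \max_i b_{i,j}^t \geq x} \quad \text{for every } x > 0,
\]
and then recover the expectation bound by integrating both sides using the layer-cake identity $\Ex{Y} = \int_0^\infty \Pr{Y \geq x}\,dx$.

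To prove the tail bound, I would write $B_t := \max_i b_{i,j}^t$ and define
\[
\tau_x := \min\{t \leq T : B_t \geq x\},
\]
with the convention $\tau_x := T+1$ when no such $t$ exists. The point of working at a level-set is that $\tau_x$ is a genuine stopping time in the natural filtration generated by the activations $a_1, a_2, \ldots$, because $B_t$ depends only on $a_1, \ldots, a_t$. By contrast, the first time the running maximum is attained is not a stopping time (it looks into the future), which is the reason for parameterizing by the threshold $x$ in the first place. Next, let $i^\ast$ be a player holding a bid at least $x$ on item $j$ at time $\tau_x$, selected by any deterministic tiebreaking rule; in particular $i^\ast$ is determined by $a_1, \ldots, a_{\tau_x}$. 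The key observation is that if $i^\ast$ happens not to be activated at any step in $\{\tau_x+1, \ldots, T\}$, then $b_{i^\ast, j}^T = b_{i^\ast, j}^{\tau_x} \geq x$ and hence $B_T \geq x$.

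With this in hand the main calculation is straightforward. Partitioning by the value of $\tau_x$ gives
\[
\Pr{B_T \geq x} \geq \sum_{t=0}^T \Pr{\tau_x = t,\, a_s \neq i^\ast \text{ for all } s \in \{t+1, \ldots, T\}}.
\]
Since $\tau_x$ is a stopping time, the event $\{\tau_x = t\}$ and the identity of $i^\ast$ are determined by $a_1, \ldots, a_t$, whereas $a_{t+1}, \ldots, a_T$ are i.i.d.\ uniform and independent of $a_1, \ldots, a_t$, so the conditional probability that none of them equals $i^\ast$ is exactly $(1-1/n)^{T-t}$. Substituting,
\[
\Pr{B_T \geq x} \geq \sum_{t=0}^T (1-1/n)^{T-t} \Pr{\tau_x = t} \geq (1-1/n)^T \Pr{\tau_x \leq T} = (1-1/n)^T \Pr{\max_{t \leq T} B_t \geq x},
\]
using that $\{\tau_x \leq T\}$ and $\{\max_{t \leq T} B_t \geq x\}$ denote the same event. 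Integrating in $x$ and rearranging then gives $\Ex{\max_{t \leq T} B_t} \leq (1-1/n)^{-T} \Ex{B_T}$, as required.

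The hard part is the conceptual move of passing from the expectation bound to a level-set tail bound. Once this reformulation is in place, everything reduces to the fact that the level-set hitting time $\tau_x$ is a stopping time, together with the i.i.d.\ independence of the activations.
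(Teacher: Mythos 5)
Your proposal is correct and takes essentially the same route as the paper: both reduce the claim to the level-set tail inequality $\Pr{\max_{t\le T} B_t \ge x} \le (1-1/n)^{-T}\Pr{B_T \ge x}$, prove it by conditioning on the first time the running maximum crosses $x$ and noting that the player holding that bid survives to time $T$ unless re-activated, and then pass back to expectations. The only cosmetic difference is that you invoke the layer-cake identity directly while the paper integrates via an $\epsilon$-grid Riemann sum, and you phrase the conditioning explicitly in the language of stopping times.
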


The proof can be found in Appendix~\ref{app:randomized-activation}. The overall idea is to bound the probability that a bidder who causes a high bid is  activated again. Using this lemma, we can follow a similar pattern as when proving Theorem~\ref{thm:main-1}.

\begin{proof}[Proof of Theorem \ref{thm:randomized-activation}]
Since all of our arguments apply starting from any vector of bids, we can without loss of generality assume that $T$ is the final of a sequence of $n$ bid updates, and so $T = n$. Let $N'$ be the set of players that are selected to bid at least once during this sequence of bid updates. Denote by $S^\ast_1, \dots, S^\ast_n$ the allocation that maximizes social welfare.
By a variant of Lemma \ref{lem:initial-low}, which does not make use of round-robin activation and is given as Lemma \ref{lem:max-low} in Appendix~\ref{app:randomized-activation}, we have
\[
 	DW(b^T) + \alpha \sum_{j \in M} \max_{t \leq T} \max_{i} b_{i,j}^t \geq \alpha \sum_{i \in N'} v_i(S^\ast_i)\enspace.
\]
Note that $DW(b^T)$, $\max_{t \leq T} \max_{i} b_{i,j}^t$, and $N'$ are now random variables. Taking expectations of both sides, we get
\[
 	\mathbf{E}\bigg[ DW(b^T) + \alpha \sum_{j \in M} \max_{t \leq T} \max_{i} b_{i,j}^t \bigg] \geq \mathbf{E}\bigg[ \alpha \sum_{i \in N'} v_i(S^\ast_i)\bigg]\enspace.
\]
By linearity of expectation, this implies
\[
 	\Ex{DW(b^T)} + \alpha \sum_{j \in M} \Ex{\max_{t \leq T} \max_{i} b_{i,j}^t} \geq \alpha \sum_{i \in N} \Pr{ i \in N' } v_i(S^\ast_i)\enspace.
\]
The probability of each player to be selected at least once is $\Pr{ i \in N' } = 1 - \left( 1 - \frac{1}{n} \right)^T$. %Therefore
%\[
%\sum_{i \in N} \Pr{ i \in N' } v_i(S^\ast_i) \geq \bigg( 1 - \left( 1 - \frac{1}{n} \right)^T \bigg) \cdot \sum_{i \in N} v_i(S^\ast_i) \enspace.
%\]
%
Lemma~\ref{lem:max-vs-final} shows that $\Ex{\sum_{j \in M} \max_{t \leq T} \max_{i} b_{i,j}^t} \leq \left( 1 - \frac{1}{n} \right)^{-T} \Ex{DW(b^T)}$. 

We obtain
\[
	\bigg( 1 + \alpha \left( 1 - \frac{1}{n} \right)^{-T} \bigg) \Ex{DW(b^T)} \geq \alpha \bigg( 1 - \left( 1 - \frac{1}{n} \right)^T \bigg) \cdot \sum_{i \in N} v_i(S^\ast_i) \enspace,
\]
and therefore
\[
 	\Ex{DW(b^T)} \geq \alpha \cdot \frac{1 - \left( 1 - \frac{1}{n} \right)^T }{1 + \alpha \left( 1 - \frac{1}{n} \right)^{-T}} \cdot \sum_{i \in N} v_i(S^\ast_i)\enspace.
\]

Finally, we use Lemma \ref{lem:declared-vs-actual} to relate the declared social welfare to the actual social welfare and the fact that $T = n \geq 2$ to lower bound $1-(1-1/n)^n \geq 1/2$ and upper bound $(1-1/n)^{-n} \leq 4$. This yields,
\[
 	\Ex{SW(b^T)} 
	%% HERE ARXIV2
	\geq \frac{\alpha}{2 (1 + 4 \alpha) \beta} \cdot OPT(v) \enspace. \qedhere
\]
\end{proof}

%% SUBSECTION: ADVERSARIAL ACTIVATION
\subsection{Adversarial Activation}
We conclude by showing that our positive results that show quick convergence to states of high welfare 
no longer apply if an adversary chooses the order in which players get to update their bids.  
Our result concerns XOS valuations, and $1$-safe bidding sequences in which each bid
update is to a $1$-aggressive best response. It applies even if players 
update their bids as in the Potential Procedure of \cite{ChristodoulouKS16}. That is, unless the 
activated player already plays a best response, he chooses an arbitrary demand set and bids 
his supporting additive valuation on the respective set and zero on all other items.
\begin{theorem}\label{thm:adversarial-activation}
For every $\epsilon > 0$, $n$, and $k$, there is an instance with $n$ agents with XOS valuations and $(n-1) \cdot (k+1)$ items, 
an initial bid vector $b^0$, and an activation sequence such that, even if each activated agent
updates his bid as in the Potential Procedure, until
each agent has been activated $\Omega(2^k)$ times the welfare has
never exceeded a $\frac{1+\epsilon}{n-1}$ fraction of the optimum. 
\end{theorem}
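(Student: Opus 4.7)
The plan is to exhibit a concrete family of instances together with an initial bid profile and an adversarial activation schedule on which the Potential Procedure is forced to linger in low-welfare states for $\Omega(2^k)$ activations per bidder. The design will replicate, across $n-1$ item groups, a two-player doubling gadget in the spirit of the exponential convergence example of~\cite{ChristodoulouKS16} for the Potential Procedure.

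I would partition the $(n-1)(k+1)$ items into disjoint groups $G_1,\ldots,G_{n-1}$ of $k+1$ items each. Bidders $1,\ldots,n-1$ act as \emph{specialists}: specialist $i$ has an XOS valuation supported only on $G_i$, with value $1$ on $G_i$, and $k+1$ supporting additive vectors scaled geometrically so that a bidding war confined to $G_i$ can be dragged out for $\Omega(2^k)$ updates. Bidder $n$ is the \emph{generalist}; his XOS representation contains, for every $i$, a support that mirrors specialist $i$'s supports on $G_i$ up to an $\epsilon$ surplus, so that specialist $i$ and the generalist can play the same slow doubling war on $G_i$ that appears in the two-player lower bound. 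The welfare-optimal allocation assigns each $G_i$ to specialist $i$ for $OPT(v)=n-1$, while in every state the adversary keeps reachable, at most one specialist realises non-trivial value and the generalist ends up with a bundle of actual value at most $\epsilon$, giving $SW(b^t) \le 1+\epsilon \le \frac{1+\epsilon}{n-1}\cdot OPT(v)$. The initial bid vector starts every group in the middle of its doubling war, and the adversary cycles through the groups, advancing one war by one step before interleaving activations of all remaining bidders, so that after $\Theta(n \cdot 2^k)$ steps each agent has been activated $\Omega(2^k)$ times while the dynamics are still inside the low-welfare regime.

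The main obstacle is the explicit design of the doubling gadget and the bookkeeping needed to show that it survives inside a multi-group instance. Because the Potential Procedure bids zero on every item outside the currently chosen demand set, activating the generalist on $G_j$ wipes out his bids on every other $G_i$; I need to choose the XOS supports so that the state of the bidding war on each group is resilient to such resets (so that whichever specialist owns $G_i$ at that moment continues to own it at a nominal price and realises value at most $1$), and so that no Potential-Procedure demand-set choice at any intermediate point can spike the welfare above the claimed threshold. The welfare inequality itself then drops out of the invariant that at most one specialist wins positive value at any moment and the generalist wins only items on which he has declared an $\epsilon$-surplus; the geometrically scaled supports handle the timing lower bound. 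I expect most of the technical work to sit in constructing XOS supports that simultaneously encode the doubling dynamics, bound the welfare of every reachable allocation, and tolerate demand-set resets between groups.
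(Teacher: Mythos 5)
Your high-level plan closely tracks the paper's: a generalist playing against $n-1$ specialists, $\Omega(2^k)$ steps enforced by replicating the exponential two-player gadget of \cite{ChristodoulouKS16}, and an activation schedule that cycles the generalist against each specialist in turn. However, the proposal leaves unresolved precisely the issue that the paper's construction is engineered around, and that issue is fatal to the plan as you have laid it out.

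You partition the $(n-1)(k+1)$ items into $n-1$ groups of $k+1$ items, with each group $G_i$ serving double duty: it is simultaneously specialist $i$'s valuable bundle (worth $1$) and the arena of the doubling war. You then observe that when the generalist switches his demand set to one group the Potential Procedure zeroes his bids on all other groups, and you flag as your ``main obstacle'' the need for XOS supports that are ``resilient to such resets.'' This is not bookkeeping; it is the crux. If the generalist's bids on $G_i$ are wiped, specialist $i$'s best response is to grab $G_i$ at near-zero price, and if several specialists simultaneously hold their groups the welfare jumps to $\Theta(n)$, destroying the $\frac{1+\epsilon}{n-1}$ bound. The invariant ``at most one specialist realises non-trivial value'' cannot be maintained by a generalist whose XOS supports are confined to one group at a time, and the $\epsilon$-surplus trick does not help because the specialist's realized value of $1$ on $G_i$ is an actual, not a declared, value.

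The paper resolves both problems with one structural idea you are missing: it \emph{separates} value from gadget. Items $1,\ldots,n-1$ are the only valuable items (each worth $1$ to a distinct specialist, $1+\epsilon$ to the generalist); the gadget items $C_2,\ldots,C_n$ are worth only $O(\epsilon)$ to everyone, so any allocation of them contributes $O(\epsilon)$ welfare and cannot spoil the bound. The low-welfare invariant is then exactly that of Proposition~\ref{prop:adversarial-activation}: the generalist overbids $1+\epsilon$ on exactly one valuable item at a time, pricing out the specialist about to be activated, who then retreats to the cheap gadget items. The gadgets exist solely to make each generalist update a \emph{strict} improvement so the Potential Procedure does not halt. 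And, crucially, the generalist's XOS support $v_1^t$ is defined to be additive \emph{over all gadget groups $C_i$ simultaneously} — one valuable-item term plus an $\epsilon$-scaled doubling term for every group — so each update places fresh bids on every gadget group in a single step and no war is ever reset. Without this simultaneous-support construction, and without the value/gadget separation, the welfare invariant you rely on does not hold and the argument does not close.
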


At the core of our proof (in Appendix \ref{app:adversarial-activation}) is the following
proposition that applies even if players have unit-demand valuations, i.e.,
a player's valuation for a set of items is the maximum value for any item in the set. It
shows the existence of a cyclic activation pattern in which each player gets to 
update his bid, but the dynamic remains in states of low welfare.
The construction assumes that players also update their bid if this does not strictly improve 
their utility, and that ties among multiple best responses are broken in our favor.

\begin{proposition}
\label{prop:adversarial-activation}
For every $\epsilon > 0$ and $n$, there is an instance of $n$ agents with unit-demand valuations for $n-1$ items, an
initial bid vector $b^0$, and a cyclic activation pattern in which every agent is activated
at least once
and bid updates are as in the Potential Procedure except that updates need not be
strict improvements and ties among multiple best responses are broken in our favor,
but the social welfare is always at most a $\frac{1 + \epsilon}{n - 1}$ fraction of the optimal welfare. 
\end{proposition}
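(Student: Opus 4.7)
The plan is to exhibit an instance with a ``central'' unit-demand bidder whose valuation is uniform across all items and $n-1$ ``specialised'' unit-demand bidders each interested in a single distinct item, combined with an initial bid in which the central bidder overbids on the grand bundle and a single round-robin cycle that is welfare-neutral at every step. Concretely, for $i = 1,\ldots,n-1$ set $v_i(\{i\}) = 1$ and $v_i(\{j\}) = 0$ for $j \neq i$, and set $v_n(\{j\}) = 1$ for every item $j$; the welfare-maximising allocation assigns item $i$ to agent $i$, so $OPT(v) = n - 1$. Take $b^0$ with $b^0_{n,j} = 1$ for every $j$ and $b^0_{i,j} = 0$ otherwise, so that agent $n$ strictly outbids every other agent on every item and $SW(b^0) = v_n(M) = 1$.

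The activation pattern is the length-$n$ sequence $1,2,\ldots,n-1,n$ with ties broken in favour of agent $n$. When $i < n$ is activated, every item has price $1$ from $i$'s perspective (the central bidder's bid), so $i$'s best-response utility is $0$, attained both by $\emptyset$ and by $\{i\}$; picking $\emptyset$ in our favour, the Potential Procedure leaves agent $i$'s bid at the zero vector and the allocation is unchanged. When agent $n$ is finally activated, every item has price $0$ from its perspective and every non-empty demand set attains the maximum utility $1$; but the supporting XOS valuations of the unit-demand $v_n$ are exactly the standard unit vectors $e_\ell$, so the PP update collapses agent $n$'s bid to $(0,\ldots,0,1,0,\ldots,0)$ on some chosen coordinate $j^{*}$. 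Agent $n$ still strictly wins $j^{*}$, and on the $n-2$ remaining items every bid is now $0$; resolving those ties in agent $n$'s favour keeps all items with agent $n$, so $SW = v_n(M) = 1$ after the update as well.

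Since $SW(b^t) = 1$ at every step of the cycle while $OPT(v) = n - 1$, the ratio is $\tfrac{1}{n-1} \leq \tfrac{1 + \epsilon}{n - 1}$, as required. Both hypothesis relaxations are used crucially: at $b^0$ agent $n$'s overbid and every other agent's zero bid are already best responses, so the ordinary Potential Procedure would refuse to update anyone, and the ``updates need not be strict improvements'' clause is exactly what lets us still walk the cycle; meanwhile, the ``ties broken in our favour'' clause is invoked three times, to pick the empty demand set for each $i < n$, to select which coordinate $j^{*}$ the PP bid of agent $n$ collapses onto, and to decide the winners of the $n - 2$ zero-bid items after agent $n$'s update. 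The one delicate point in the write-up is verifying that the only moment at which welfare could plausibly rise—namely when agent $n$ is forced off its overbid and onto a PP-compliant bid—does not actually raise it, because the zero-bid items it no longer outbids the others on can all still be tied to agent $n$, whose unit-demand value caps its total contribution at $1$ regardless of how many items it holds.
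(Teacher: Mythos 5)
Your construction fails once the activation cycle repeats. When agent $n$ makes his Potential-Procedure update at step $n$, his bid collapses from the all-ones vector to a single unit vector $e_{j^\ast}$, so every item $j \neq j^\ast$ now has \emph{all} bids equal to zero. On the second pass through the cycle, each specialised agent $i < n$ with $i \neq j^\ast$ sees item $i$ at price zero, and buying $\{i\}$ gives utility $1 > 0$, which is a \emph{strict} improvement over his current empty bundle. The clause ``updates need not be strict improvements'' only lets the adversary take a non-improving update when one exists among the best responses; it cannot prevent an agent from taking a strictly improving best response. So by the end of the second cycle the welfare is at least $n - 2$, and the proposition's ``always'' requirement is violated. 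The paper's construction avoids exactly this trap by interleaving activations ($1, 2, 1, 3, 1, 4, \ldots$): the central bidder, whose value is $1 + \epsilon$ on every item, is activated immediately before each specialised bidder and exploits a best-response \emph{tie} to hop onto that bidder's item, blocking him. A central bidder whose PP bid is a single unit vector can block only one specialised bidder per cycle, which is why a single $n$-step pass cannot be made to repeat.

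There is a second issue with the initial bid vector. You take $b^0_{n,j} = 1$ for all $j$, so $\sum_j b^0_{n,j} = n-1 \gg v_n(M) = 1$, i.e., agent $n$ grossly overbids on the grand bundle. This proposition supports Theorem~\ref{thm:adversarial-activation}, which concerns $1$-safe dynamics; the paper's proof therefore starts from the all-zero profile (which is strongly no-overbidding) and relies on Proposition~\ref{prop:xos} to keep the sequence $1$-safe throughout. Planting a non-safe overbid in $b^0$ sidesteps the positive results rather than exhibiting their limitation, since those results presume a safe starting state. The interesting content of the proposition is that adversarial \emph{ordering} alone, even from the zero profile, traps the dynamics in low welfare, and that requires the interleaved-activation idea rather than a one-shot overbid.
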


\begin{proof}
There are $n$ bidders and $n - 1$ items. Player $i$'s valuation for a set $S \subseteq M$ is given as $v_i(S) = \max_{j \in S} v_{i, j}$. For bidder $1$, we let $v_{1, 1} = \ldots, v_{1, n - 1} = 1 + \epsilon$. For bidder $i > 1$, define $v_{i, i-1} = 1$ and $v_{i, j} = 0$ for $j \neq i - 1$. The social optimum assigns item $j$ to bidder $j + 1$ and has welfare $n - 1$.

In the initial bid vector $b^0$ all players bid zero. The activation scheme is as follows: In every odd step bidder $1$ makes a move, while in even steps bidders $i > 1$ are activated in a round-robin way. That is, the activation works repeatedly as $1, 2, 1, 3, 1, 4, \ldots, 1, n-1, 1, n$.

With this activation order, it's possible that bidder $1$ bids $1+\epsilon$ on item $t$ the $t$-th time he is activated, while bidders $i > 1$, when activated, see a bid of $1+\epsilon$ on the item they are interested in, and therefore bid $0$ on all items. This way the social welfare at any time step $t \geq 1$ is $1+\epsilon$.
\end{proof}

Our proof in the appendix combines this construction with several copies of the exponential 
lower-bound construction of Theorem 3.4~in~\cite{ChristodoulouKS16}, and thus ensures that 
each update is a strict improvement and unique.

%%%%%%%

%% SECTION: CONCLUDING REMARKS AND OUTLOOK
\section{Concluding Remarks and Outlook}

In our analysis we focused on fractionally subadditive and subadditive valuations, which do not exhibit complements. A natural question is whether similar results can be obtained for classes of valuations that exhibit complements. In Appendix~\ref{app:mphk}, we discuss an example with MPH-$k$ valuations \cite{FeigeFIILS15} that highlights the difficulties that arise.
Another interesting follow-up question is whether there is a general result that translates a Price of Anarchy guarantee for a given mechanism that is provable via smoothness into a result that shows that best-response sequences reach states of good social welfare quickly. The example with MPH-$k$ valuations in Appendix~\ref{app:mphk} already limits the potential scope of such a result. It would still be interesting to identify natural sufficient conditions. One such condition could be that the mechanism admits some kind of potential function (as the procedure for XOS valuations), but our results already show that this condition is certainly not necessary.

%%%%%%%

%% %% %% %% %%

%% bibliography
\bibliographystyle{abbrvnat}
\bibliography{best-response}

%\clearpage

%% APPENDIX
\appendix

%% SECTION: NON-EXISTENCE OF WEAK NO-OVERBIDDING EQUILIBRIA
\section{Non-Existence of Weak No-Overbidding Pure Nash Equilibria for Subadditive Valuations}\label{app:no-pne}
We can also leverage our novel insights regarding hard instances (Definition \ref{def:hard-instances}) for subadditive valuations to show that there need not be a pure Nash equilibrium, even if players are only required to use weakly no-overbidding strategies. 

\begin{theorem}
Let $k \in \mathbb{N}_{>0}$. Consider the hard instance $\mathcal{I}_k$ with $n = 2$ players and $m = 2^k-1$ items. There is no pure Nash equilibrium in weakly no-overbidding strategies if $k \geq 8$. This remains true if we define a bid profile to be at equilibrium if no player has a beneficial deviation to a weakly no-overbidding strategy.
\end{theorem}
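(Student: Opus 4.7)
The plan is to derive a contradiction from the existence of such an equilibrium by combining the three structural lemmas of \secref{sec:lower-bound} with a single strictly improving deviation for the first player. I would start by supposing $b=(b_1,b_2)$ is the alleged equilibrium, denoting by $S_2$ the set won by the second player and by $S_1=M\setminus S_2$ its complement. The first step is to use weak no-overbidding on each bidder's winning set, together with the fact that the first player does not outbid the second on $S_2$, to show $\sum_{j\in S_1}b_{1,j}\le v_1(S_1)\le k$ and $\sum_{j\in S_2}b_{1,j}\le \sum_{j\in S_2}b_{2,j}\le v_2(S_2)\le \rho(2^d-1)<4$; these combine into $\sum_{j\in M}b_{1,j}<k+4$.

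Feeding this bound into the symmetry-based averaging step of \lemref{lem:prices}, I would obtain some $D\in\mathcal{D}$ with $\sum_{j\in D}b_{1,j}<\rho|D|/2$ as soon as $k>4$. \lemref{lem:demand-set} then forces every demand set of the second player at prices $b_{1,\cdot}$, and in particular the equilibrium winning set $S_2$, to contain some $D'\in\mathcal{D}$; the case $S_2=\emptyset$ is ruled out directly because the second player would then strictly profit by the weakly no-overbidding deviation to $D$. Finally, \lemref{lem:value} delivers the crucial value bound $v_1(S_1)\le v_1(M\setminus D')\le k-d=\log_2 k$.

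With this structural information in hand I would exhibit the deviation. Let the first player switch to $b_{1,j}'=b_{2,j}+\varepsilon$ on every item for a small $\varepsilon>0$, so that he wins the grand bundle at second-price cost $\sum_j b_{2,j}$. Weak no-overbidding of the deviation holds because $\sum_j b_{2,j}\le v_1(S_1)+v_2(S_2)<\log_2 k+4$, which is strictly below $v_1(M)=k$ once $k\ge 8$ (as $3+4<8$), and $\varepsilon$ can be chosen to absorb the remaining slack. The utility improvement equals
\[
u_1'-u_1 \;=\; k-v_1(S_1)-\sum_{j\in S_2}b_{2,j}-m\varepsilon \;\ge\; k-\log_2 k-\rho(2^d-1)-m\varepsilon,
\]
which is strictly positive precisely when $k\ge 8$, contradicting the equilibrium hypothesis.

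For the stronger second statement the same construction works verbatim. Both bounds $\sum_{j\in S_2}b_{1,j}\le v_2(S_2)$ and $\sum_{j\in S_1}b_{2,j}\le v_1(S_1)$ that drive the argument are already implied by $u_i\ge 0$, which the relaxed equilibrium notion enforces because bidding zero is always a weakly no-overbidding deviation; and the deviation I constructed is itself weakly no-overbidding and hence admissible. The delicate point I expect to require the most care is the bound $\sum_{j\in S_1}b_{1,j}\le v_1(S_1)$ used inside the averaging step, which no longer follows from $b$ itself being weakly no-overbidding: I would handle this by invoking the convention that a weakly no-overbidding strategy also bids zero on non-won items (letting one replace $b_1$ by its equivalent-utility weakly no-overbidding counterpart in the averaging argument), or alternatively by observing that if the averaging failed then the second player's non-deviation conditions would force $\sum_{j\in D}b_{1,j}\ge \rho|D|/2$ for \emph{every} $D\in\mathcal{D}$, and averaging a second time would contradict the no-overbidding-on-the-grand-bundle assumption underlying the entire lower-bound section by yielding $2k\le k$.
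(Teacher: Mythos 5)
Your proof takes essentially the same route as the paper's: both bound $\sum_{j} b_{1,j}$ by $k+4$ via weak no-overbidding, invoke the averaging symmetry of Lemma~\ref{lem:prices} to locate a cheap $D\in\mathcal D$, apply Lemma~\ref{lem:demand-set} (together with the fact that the second player's winning set must be a demand set at equilibrium) to conclude that $S_2$ contains some $D'\in\mathcal D$, use Lemma~\ref{lem:value} to bound $v_1(M\setminus D')\le k-d$, and finish by exhibiting a weakly no-overbidding deviation in which player~$1$ bids marginally above $b_2$ and wins the grand bundle, which is exactly the paper's deviation. Two small remarks. First, your utility-gain formula carries a spurious $-m\varepsilon$ term: with second-price payments the deviator pays $\sum_j b_{2,j}$, not his own bid, so no $\varepsilon$ enters; this only loosens your bound and is harmless. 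Second, the concern in your final paragraph is unfounded: in both readings of the statement the profile $b$ is still assumed to be weakly no-overbidding (what changes is only which \emph{deviations} the equilibrium notion quantifies over), so the bound $\sum_{j\in S_1}b_{1,j}\le v_1(S_1)$ is available directly and the workarounds you sketch (the ``zero on non-won items'' convention, or a second averaging pass) are unnecessary.
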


\begin{proof}
Assume that $b$ is a weakly no-overbidding pure Nash equilibrium. Suppose the second player wins the set of items $W \subseteq M$ in $b$, then the first player wins the set of items $M \setminus W$. By weak no-overbidding, we have
\[
\sum_{j \in M \setminus W} b_{1, j} \leq v_1(M \setminus W) \;\; \text{and} \;\; \sum_{j \in W} b_{2, j} \leq v_2(W) \enspace.
\]

The first player does not win the items in $W$, which means that $b_{1, j} \leq b_{2, j}$ for all items $j \in W$. Consequently, we have
\begin{align*}
\sum_{j \in M} b_{1, j} 
&\leq v_1(M \setminus W) + v_2(W)\\
&\leq v_1(M) + v_2(M)\\ 
&= k + \rho \cdot 2^d \\
&= k + 4 \cdot \frac{k}{m} \cdot 2^{k - \log_2 k} \\
&= k + 4 \enspace.
\end{align*}

%As we have shown in Lemma \ref{lem:isomorph} 
By the same argument as in Lemma \ref{lem:prices}, each item $j \in M$ is included in the same number of sets $D \in \mathcal{D}$. Therefore,
\[
\frac{1}{\lvert \mathcal{D} \rvert} \sum_{D \in \mathcal{D}} \frac{1}{\lvert D \rvert} \sum_{j \in D} b_{1, j} = \frac{1}{m} \sum_{j \in M} b_{1, j} \leq \frac{k + 4}{m} \enspace.
\]
This implies that there is a set $D \in \mathcal{D}$ such that
\[
\frac{1}{\lvert D \rvert} \sum_{j \in D} b_{1, j} \leq \frac{k + 4}{m} \enspace.
\]
Since $k \geq 8$ by assumption, $m > 2k + 8$, and therefore 
\[
\sum_{j \in D} b_{1, j} \leq \frac{k + 4}{m} \cdot \lvert D \rvert < \frac{\lvert D \rvert}{2} \enspace.
\]
By Lemma~\ref{lem:demand-set} and because the second player plays a best response, we have $W \supseteq D'$ for some $D' \in \mathcal{D}$.

In the remainder, we will show that this implies that the first player has a beneficial weakly no-overbidding deviation $b_1'$. 

Let $b_{1, j}' = b_{2, j} + \frac{1}{m}$ for $j \in W$ and $b_{1, j}' = b_{1, j}$ for $j \in M \setminus W$. Observe that in $(b_1', b_2)$ the first player wins all items $M$. This bid fulfills the weak no-overbidding property because
\begin{align*}
\sum_{j \in M} b_{1, j}' 
&= \sum_{j \in W} \left( b_{2, j} + \frac{1}{m} \right) + \sum_{j \in M \setminus W} b_{1, j}\\
&\leq v_2(W) + 1 + v_1(M \setminus W) \\
&\leq v_2(D') + 1 + v_1(M \setminus D') \\
&\leq \rho 2^d + 1 + k - d \\
&= 4 + 1 + \log_2 k \\
&\leq k \\
&= v_1(M) \enspace,
\end{align*}
where the first inequality uses that $b$ is weakly no-overbidding, the second inequality exploits the definition of $v_2$, the third inequality holds by Lemma \ref{lem:value}, and the final inequality holds because we have assumed $k \geq 8$.

The deviation by the first player is beneficial because 
\begin{align*}
u_1(b_1', b_2) 
&= v_1(M) - \sum_{j \in M} b_{2, j} \\
&= k - d - \sum_{j \in M \setminus W} b_{2, j} + d - \sum_{j \in W} b_{2, j}\\
&\geq u_1(b) + d - v_2(W) \\
&\geq u_1(b) + d - 4 > u_1(b) \enspace,
\end{align*}
where the first inequality uses Lemma \ref{lem:value}, the second inequality uses that $v_2(W) \leq v_2(D') = 4$, and the final inequality follows from the definition of $d = k - \log_2 k$ and the assumption that $k \geq 8$ and so $d > 4$.
%\qed
\end{proof}

%% SECTION: MISSING PROOFS FROM SECTION 2
\section{Missing Proofs from Section~\ref{sec:prelims}}
\label{apx:prelims}
In this appendix we prove the propositions that establish the existence of aggressive and safe bidding dynamics for XOS and subadditive valations.

%% SUBSECTION: SUFFICIENCY OF STRONG NO-OVERBIDDING
\subsection{Sufficiency of Strong No-Overbidding}
We first show that in order to have a $1$-safe dynamic it suffices that initial bids and the subsequent updates fulfill no-overbidding in the strong sense. 
A bid vector $b$ satisfies strong no-overbidding if $\sum_{j \in S} b_{i,j} \leq v_i(S)$ for every bidder $i$ and every set of items $S$. 
A best response $b_i$ by bidder $i$ against bids $b_{-i}$ satisfies strong no-overbidding if $\sum_{j \in S} b_{i,j} \leq v_i(S)$.

\begin{lemma}\label{lem:no-overbidding}
If the initial bid vector $b^0$ satisfies strong no-overbidding and at each time step $t \geq 1$ some bidder $i$ gets to update his bid to a best response, which satisfies  strong no-overbidding, then the resulting best-response dynamic is $1$-safe.
\end{lemma}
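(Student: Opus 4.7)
The plan is to unpack the definitions and then use a simple invariant. Recall that $1$-safe means $u_i^D(b^t) \le u_i(b^t)$ for every bidder $i$ and every reachable profile $b^t$. If $S_i^t$ denotes the set won by $i$ under $b^t$, then
\[
	u_i^D(b^t) - u_i(b^t) = \sum_{j \in S_i^t} b_{i,j}^t - v_i(S_i^t),
\]
since the second-price terms cancel. So the desired inequality at every time step reduces to showing that $\sum_{j \in S_i^t} b_{i,j}^t \le v_i(S_i^t)$ at every step $t$ and for every bidder $i$. This is immediate if we can show the stronger invariant that each individual bid $b_i^t$ satisfies strong no-overbidding at every time step.

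The main step, then, is to prove the invariant: for every $t \ge 0$ and every bidder $i$, $\sum_{j \in S} b_{i,j}^t \le v_i(S)$ for all $S \subseteq M$. I would do this by induction on $t$. The base case $t=0$ is the hypothesis on the initial bid vector $b^0$. For the inductive step, note that from $b^{t-1}$ to $b^t$ only the activated bidder $i$ changes his bid; by assumption his update is a best response satisfying strong no-overbidding, so $b_i^t$ satisfies strong no-overbidding, while $b_k^t = b_k^{t-1}$ for $k \neq i$ satisfies strong no-overbidding by the inductive hypothesis.

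Putting the two pieces together: applying the invariant at any reachable profile $b^t$ with $S = S_i^t$ yields $\sum_{j \in S_i^t} b_{i,j}^t \le v_i(S_i^t)$, which by the opening identity gives $u_i^D(b^t) \le u_i(b^t)$, i.e.\ $\beta = 1$-safety. There is essentially no obstacle here; the only thing to be careful about is that strong no-overbidding is a property of the individual bid functions (so it is preserved componentwise), which is exactly why the one-bidder-at-a-time structure of the dynamic makes the induction trivial.
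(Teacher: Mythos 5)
Your proof is correct and follows essentially the same route as the paper: establish the invariant that every bidder's bid vector satisfies strong no-overbidding at every time step (the paper states this directly; you make the easy induction explicit), then subtract the second-price terms to pass from $\sum_{j\in S} b^t_{i,j} \le v_i(S)$ to $u_i^D(b^t) \le u_i(b^t)$.
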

\begin{proof}
Since the initial bid vector and each update satisfy strong no-overbidding we have $\sum_{j \in S} b^t_{i,j} \leq v_i(S)$ for all bidders $i$, time steps $t \geq 0$, and sets of items $S$. Subtracting $\sum_{j \in S} \max_{k \neq i} b_{k,j}^t$ from both sides shows the claim.
%\qed
\end{proof}

%% SUBSECTION: PROOF OF PROPOSITION PROP:XOS
\subsection{Proof of Proposition \ref{prop:xos}}
%%
%\begin{proof}
Consider an arbitrary bidder $i$ and his update to bid $b_i^t$. The bid $b_i^t$ satisfies strong no-overbidding by definition. Hence Lemma \ref{lem:no-overbidding} shows that the bid sequence is $1$-safe. It remains to show that $b_i^t$ is a $1$-aggressive best response.

We first show that the bid $b_i^t$ is a best response to $b_{-i}^t$. Let $S_i$ denote the set of items that bidder $i$ wins with bid $b^t_i$ against bids $b^t_{-i}$ and let $D$ be the demand set on the basis of which $b_i^t$ is defined. Then, 
\begin{align*}
u_i(b^t) &= v_i(S_i) - \sum_{j \in S_i} \max_{k \neq i} b^t_{k,j}  
\geq \sum_{j \in S_i} (b^t_{i,j} - \max_{k \neq i} b^t_{k,j}) \\
&\geq \sum_{j \in D} (b^t_{i,j} - \max_{k \neq i} b^t_{k,j})
= v_i(D) - \sum_{j \in D} \max_{k \neq i} b^{t}_{k,j} \\
&\geq \max_{S} \bigg( v_i(S) - \sum_{j \in S} \max_{k \neq i} b^t_{k,j} \bigg),
\end{align*}
where the first inequality uses that $v_i$ is XOS, the second uses that $\max_{k \neq i} b^t_{k,j} = b^t_{i,j}$ for $j \in D \setminus S_i$ and $\max_{k \neq i} b^t_{k,j} \leq b^t_{i,j}$  for  $j \in S_i \setminus D$, the following equality exploits the definition of $b^t_i$, and the final inequality uses that $D$ is a demand set.

To show that $b_i^t$ is $1$-aggressive it suffices to show that bidder $i$'s declared and actual utility at time step $t$ coincide.  
Since the right-hand side in the preceding chain of inequalities is at least $v_i(S_i) - \sum_{j \in S_i} \max_{k \neq i} b^t_{k,j}$ all inequalities in the chain of inequalities must be equalities.
This implies that
\begin{align*}
	u_i(b^t) 
	= v_i(S_i) - \sum_{j \in S_i} \max_{k \neq i} b^t_{k,j} 
	= \sum_{j \in S_i} (b^t_{i,j} - \max_{k \neq i} b^t_{k,j}) 
	= u_i^D(b^t)\enspace.
\end{align*}
%\qed\end{proof}

%% SUBSECTION: PROOF OF PROPOSITION SUBADDITIVE-1
\subsection{Proof of Proposition \ref{prop:subadditive-1}}
%%
%\begin{proof}
Consider an arbitrary bidder $i$ and his update to bid $b_i^t$. We first argue that $b_i^t$ is a best response. 
We claim that $\tilde{u}_i(S,b_{-i}^t) > 0$ for all $S \subseteq D$ unless $D = \emptyset$. To see this assume by contradiction that there exist a $S \subseteq D$ such that $\tilde{u}_i(T,b_{-i}^t) \leq 0$. Then, by subadditivity of $v_i$,
\begin{align*}
	\tilde{u}_i(D,b_{-i}) 
	&\leq \bigg(v_i(D\setminus T) - \sum_{j \in D \setminus T} \max_{k \neq i} b_{k,j}\bigg) + \bigg(v_i(T) - \sum_{j \in S} \max_{k \neq i} b_{k,j}\bigg)\\
	&\leq \tilde{u}_i(D \setminus T,b_{-i}^t)\enspace,
\end{align*}
which contradicts the definition of $D$.
Because of this the additive approximation $a_{i}$ has $a_{i,j} > 0$ for all $j \in D$. It follows that $b_{i,j}^t > \max_{k \neq i} b_{k,j}^t$ for all $j \in D$, and so bidder $i$ wins all items $j \in D$, and for the items $j \not \in D$ that he wins $\max_{k \neq i} b_{k,j}^t = 0$.

To see that $b_i^t$ is $1/\ln m$-aggressive observe the following. Let $S_i$ denote the set of items that bidder $i$ wins with bid $b_i^t$. Then, considering the bid $b_i^t$ defined on the basis of demand set $D$, we have
\begin{align*}
u_i^D(b^t) 
&= \sum_{j \in S_i} \left(b_{i,j}^t - \max_{k \neq i} b_{k,j}^t\right) \\
&\geq \sum_{j \in D} \left(b_{i,j}^t - \max_{k \neq i} b_{k,j}^t\right) \\
&= \sum_{j \in D} a_{i,j} \geq \frac{1}{\ln m} \cdot \tilde{u}_i(D,b_{-i}^t)\enspace,
\end{align*}
where the first inequality uses that $b_{i,j}^t = \max_{k \neq i}b_{k,j}^t$ for $j \in D \setminus S_i$ and $b_{i,j}^t \geq \max_{k \neq i}b_{k,j}^t$ for $j \in S_i \setminus D$, and the second inequality uses property (a) of bid $b_i^t$.

That the bid sequence is $1$-safe follows from the starting condition and Lemma \ref{lem:no-overbidding} by observing that bidder $i$'s update satisfies strong no-overbidding. Namely, for every $S \subseteq D$,
\begin{align*}
	\sum_{j \in S} b_{i,j}^t 
	= \sum_{j \in S} (a_{i,j} + \max_{k \neq i}b_{k,j}^t) 
	\leq \tilde{u}_i(S,b_{-i}^t) + \sum_{j \in S} \max_{k \neq i}b_{k,j}^t 
	= v_i(S)\enspace,
\end{align*}
where the inequality follows from property (b) of bid $b_i^t$.
%\qed\end{proof}

%% SUBSECTION: PROOF OF PROPOSITION PROP:SUBADDITIVE-2
\subsection{Proof of Proposition \ref{prop:subadditive-2}}
%%
%\begin{proof}
The argument that the bid $b_i^t$ chosen by bidder $i$ is a best response and $1$-aggressive is identical to the respective argument in the proof of Proposition \ref{prop:subadditive-1}, except that this time we collect a factor of $1$ instead of $1/\ln m$ when we apply property (a) of bid $b_i^t$. 

To see that the bid sequence is $\ln m$-safe, consider a point in time $t' \geq t$ after bidder $i$'s update. In the vector $b_{t'}$, bidder $i$ gets a set $S \subseteq M$ that is possibly different from $D$. Note that for $j \in S \setminus D$, $b_{i, j}^{t'} = 0$ by our definition. Furthermore, for $j \in S \cap D$, $\max_{k \neq i} b_{k, j}^{t'} \leq \max_{k \neq i} b_{k, j}^t$ because bid updates are only non-zero if an item changes its owner. Therefore, because bidder $i$ wins item $j$, all new bids have to be zero.

In combination, we have
\begin{align*}
u_i^D(b^{t'}) 
&= \sum_{j \in S \cap D} \left(\tilde{a}_{i,j} + \max_{k \neq i} b_{k,j}^t - \max_{k \neq i} b_{k,j}^{t'} \right)\\
&\leq \ln m \cdot \Big(\tilde{u}_i(S \cap D,b_{-i}^t) + \sum_{j \in S \cap D} \big(\max_{k \neq i} b_{k,j}^t - \max_{k \neq i} b_{k,j}^{t'}\big)\Big)\\
&= \ln m \cdot u_i(b^{t'})\enspace,
\end{align*}
because the sum of $\tilde{a}_{i,j}$ terms is bounded by $\ln m \cdot \tilde{u}_i(S \cap D,b_{-i}^t)$ by definition and the sum of the remaining terms is non-negative.
%\qed\end{proof}

%% SECTION: TIGHTNESS OF POINT-WISE GUARANTEE FOR XOS VALUATIONS
\section{Tightness of the Point-Wise Welfare Guarantee for XOS Valuations}\label{app:one-third}
The following proposition shows that the point-wise welfare guarantee of $1/3$ for the round-robin best-response dynamics for fractionally subadditive valuations described in Section \ref{sec:prelims} is tight, even if the valuations are unit demand.

\begin{proposition}
Consider the dynamics described in Section \ref{sec:xos}.
There is an input with $n = 3$ players, $m = 3$ items, and unit-demand valuations and an initial bid vector such that when started from this bid vector the social welfare obtained by the dynamics after a single round of bid updates is $1/3 \cdot OPT(v)$.
\end{proposition}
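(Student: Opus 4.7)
I would prove the proposition by exhibiting an explicit three-bidder, three-item instance with unit-demand valuations together with an initial bid vector $b^0$ such that, when the XOS dynamics of \secref{sec:xos} is run in round-robin order $1,2,3$, the social welfare after the round equals $OPT(v)/3$. To locate the right construction I would first reverse-engineer the equality cases in the proof of \thmref{thm:main-1} specialized to $\alpha=\beta=1$. Tightness there requires that \lemref{lem:initial-high} saturate, i.e.\ $DW(b^n)=DW(b^0)=OPT(v)/3$, and by the step-by-step monotonicity inside its proof this in fact forces $DW$ to remain constant along the entire round. Moreover, \lemref{lem:initial-low} must saturate, so each bidder's declared utility at her update time equals the virtual utility $v_i(S_i^*)-\sum_{j\in S_i^*}\max_{k\neq i}b^i_{k,j}$ on her OPT bundle, and the price bound $\max_{k\neq i}b^i_{k,j}\leq p_j^n+p_j^0$ holds with equality for every $j\in S_i^*$. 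Finally \lemref{lem:declared-vs-actual} must saturate, which in the unit-demand setting amounts to each winner's bid on her winnings summing exactly to her valuation.

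For round-robin order $1,2,3$ the price-bound equality specializes to $p_j^n=0$ for $j\in S_1^*$ and $p_j^0=0$ for $j\in S_3^*$, with one of the two sides vanishing for $j\in S_2^*$. Guided by this, I would design unit-demand valuations and $b^0$ with $OPT(v)=3$ achieved by $S_i^*=\{i\}$ so that: (i) bidder~$1$ -- the designated ``king'' -- already wins every item in $b^0$ via an adversarial tie-breaking rule, so $SW(b^0)=v_1(\{1,2,3\})=1$; (ii) the initial declared welfare $DW(b^0)=1$ is carried entirely by the initial bids of bidders~$2$ and~$3$, placed on items of $S_1^*\cup S_2^*$ respecting strong no-overbidding; (iii) bidder~$1$'s update at time~$1$ picks a demand set whose supporting XOS valuation shifts her bid mass onto items in $S_2^*\cup S_3^*$ in such a way that the per-item maximum bid does not change, so both $DW$ and the allocation are preserved; and (iv) when bidders~$2$ and~$3$ are subsequently activated, the prices they see on their OPT items equal their valuations, so their $1$-aggressive best responses either choose an empty demand set or place a tying bid on an item already held by the king and lose by tie-breaking. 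In the final state bidder~$1$ still owns every item, and $SW(b^3)=v_1(\{1,2,3\})=1=OPT(v)/3$.

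The step I expect to be the main obstacle is reconciling the preservation of $DW$ and of the king's winning allocation with the requirement that bidders~$2$ and~$3$ play genuine $1$-aggressive best responses: the total no-overbidding mass available from any two bidders is at most~$2$, whereas blocking all strictly profitable alternative demand sets of the remaining bidder may require per-item prices~$\geq 1$ on the items she values, so the valuations of bidders~$2$ and~$3$ must overlap carefully on exactly those items that need to be ``shut down''. Combining this with the XOS-update rule (which prescribes the bid as a fixed supporting valuation of the chosen demand set) and strong no-overbidding is the main technical delicacy of the construction. Once a workable valuation profile is written down, the identity $SW(b^3)=OPT(v)/3$ follows immediately by evaluating unit-demand values on the final allocation, and the matching to the saturation conditions of Lemmas~\ref{lem:aux},~\ref{lem:initial-low}, and~\ref{lem:initial-high} certifies the instance as extremal for \thmref{thm:main-1}.
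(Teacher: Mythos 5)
Your overall strategy — reverse-engineer the saturation conditions of \thmref{thm:main-1}'s proof to hunt for a tight example — is sound in spirit, but the concrete construction you sketch does not work, and it is quite different from what the paper does.

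The main problem is your choice of bidder~1 as the ``king'' who wins everything throughout. Under the XOS update rule of \secref{sec:xos} applied to a unit-demand valuation, when bidder~1 is activated she chooses a demand set (at most one profitable item), bids her value on that single item, and bids~$0$ elsewhere. She therefore cannot, after her update, have nonzero bids on more than one item. This means she cannot simultaneously block both bidder~2's and bidder~3's OPT items. To make bidders~$2$ and~$3$ end up with an empty demand set (or a losing tie), the prices they see on the items they value must be at least their valuations; those prices must come from the \emph{other} two bidders, and you correctly observe that the strong no-overbidding mass from any two unit-demand bidders with values bounded by~$1$ is at most~$2$, while $OPT = 3$ requires three items at price level~$\approx 1$. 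You flag this as ``the main obstacle'' and appeal to ``overlapping the valuations carefully,'' but that is exactly the gap: no valuation profile consistent with your requirements closes it. Concretely, after bidder~$1$ puts her single bid down, one of bidders~$2$,~$3$ will always see a zero price on an item she values at~$1$ and will buy it, so the final welfare will be at least~$2$, not~$1$.

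The paper's construction instead makes the ``big'' bidder the one who is activated second (not first) and who values \emph{all} items, with strictly increasing values $1+\epsilon, 1+2\epsilon, 1+3\epsilon$. The initial bid places bidder~$2$'s mass on item~$1$, pricing out bidder~$1$ (who has value only~$1$ there) before she can act; when bidder~$2$ updates she jumps to her most valuable item~$3$ and bids $1+3\epsilon$, pricing out bidder~$3$; and tie-breaking hands her the remaining items. The final welfare is $1+3\epsilon$ versus $OPT = 3+2\epsilon$, which tends to $1/3$ as $\epsilon \to 0$. Note also that in the paper's example $DW$ is \emph{not} preserved exactly during the round ($DW(b^0)=1+\epsilon$ while $DW(b^3)=1+3\epsilon$), so your inferred saturation condition is too strong. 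To repair your attempt you would need to relocate the ``king'' role to a player who can see and react to the standing prices — precisely what the paper does by giving bidder~$2$ the escalating value profile and the adversarially chosen starting bid.
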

\begin{proof}
The valuations of all three bidders are unit demand, i.e., for all players $i$ and sets of items $S$, $v_i(S) = \max_{j \in S} v_{i, j}$. The item valuations $v_{i,j}$ for $1 \leq i,j \leq 3$ are given by the following table:
\medskip
\begin{center}
\begin{tabular}{r|ccc}
& item 1 & item 2 & item 3 \\\hline
player 1 & $1$ & $0$ & $0$ \\
player 2 & $1 + \epsilon$ & $1 + 2 \epsilon$ & $1 + 3 \epsilon$ \\
player 3 & $0$ & $0$ & $1$
\end{tabular}
\end{center}
\medskip

Suppose that the XOS representation of these valuations is that each player has an additive valuation $a^{i,0}$ that is all zero and then one for each item $j$, $a^{i,j}$, such that $a^{i,j}(k) = v_{i,j}$ for $k = j$ and $a^{i,j}(k) = 0$ otherwise.

Let $b^0$ be the bid profile in which Player 2 bids $1 + \epsilon$ on item $1$, all other bids are $0$. That is, $b^0 = (a^{1,0},a^{2,1},a^{3,0})$. Suppose that the order of updates is first player $1$ gets to update his bid, then player $2$, and then player $3$.

Player $1$ is already playing a best response to $b^0_{-1}$, so $b^1 = b^0$. Now, to get $b^2$, player $2$ updates his bids to a best-response to $b^1_{-2}$, which is $a^{2,3}$. That is, he bids zero on the first two items and $1+3 \epsilon$ on the third. So $b^2 = (a^{1,0},a^{2,3},a^{3,0})$. With these bids, however, bidding $0$ on all items is a best-response of player 3, therefore $b^3 = b^2$.

Observe that $SW(b^3) = DW(b^3) = 1 + 3 \epsilon$, whereas the optimal social welfare is $3 + 2 \epsilon$. The claim follows by letting $\epsilon$ tend to zero.
%\qed
\end{proof}

%% SECTION: PROOF OF THEOREM 5.1
\section{Proof of Theorem \ref{thm:randomized-activation}}
\label{app:randomized-activation}
In this appendix we provide additional details for the proof of Theorem~\ref{thm:randomized-activation}.
We first prove Lemma~\ref{lem:max-vs-final}. Afterwards, we state and prove Lemma \ref{lem:max-low}.

%The proof of this theorem follows a similar pattern as the proof of Theorem \ref{thm:main-1}. The first lemma (Lemma \ref{lem:aux-variant}) is indeed a straightforward variant of Lemma \ref{lem:aux} that upper bounds the declared utilities of only those players that got to update their bid. 
%%Lemma \ref{lem:initial-low}, on the other hand, no longer applies. The reason is that with randomized activation it is not possible to bound the payments a player would face by the sum of the bids in the beginning and at the end. We instead bound the payments by the expected maximum bid in the entire sequence of bids (Lemma \ref{lem:max-low}). 
%The second lemma (Lemma \ref{lem:max-low}), however, is (and has to be different) from Lemma \ref{lem:initial-low}. The reason is that with randomized activation it is not possible to bound the payments a player would face by the sum of the bids in the beginning and at the end. We instead bound the payments by the expected maximum bid in the entire sequence of bids.
%The key technical lemma (Lemma \ref{lem:max-vs-final}) is then a probabilistic argument that relates the sum of the expected maximum bids in the entire sequence to the declared welfare at the end.

%% SUBSECTION: PROOF OF LEMMA 5.2
\subsection{Proof of Lemma~\ref{lem:max-vs-final}}
%%
%%\begin{proof}[Proof of Lemma~\ref{lem:max-vs-final}]
For a fixed $T$, let $y_j = \max_{t \leq T} \max_i b_{i, j}^t$ and $p_j^t = \max_i b_{i, j}^t$ for $t \leq T$. We first show that for all $x > 0$
\begin{equation}
\label{eq:randomizedactivation:probabilityinequality}
\Pr{y_j \geq x} \leq \left( 1 - \frac{1}{n} \right)^{-T} \Pr{p_j^T \geq x}
\end{equation}

To show \eqref{eq:randomizedactivation:probabilityinequality}, we use that $y_j$ is defined to be $\max_{t' \leq T} p_j^{t'}$. That is, if $y_j \geq x$, there has to be a $t' \in \{ 0, 1, \ldots, T\}$ for which $p_j^1 < x, \ldots, p_j^{t' - 1} < x, p_j^{t'} \geq x$. Note that for different $t'$ these are disjoint events, so
\[
\Pr{y_j \geq x} = \sum_{t' = 0}^T \Pr{p_j^1 < x, \ldots, p_j^{t' - 1} < x, p_j^{t'} \geq x} \enspace.
\]
Let us fix $t'$ and consider the event that $p_j^1 < x, \ldots, p_j^{t' - 1} < x, p_j^{t'} \geq x$. If $t' > 0$, in step $t'$ a player $i$ has been selected that whose bid has set $p_j^{t'} \geq x$; if $t' = 0$, the initial bid of some player $i$ on item $j$ is at least $x$. We have have $p_j^T < x$ only if this player $i$ is selected to update his bid in steps $t' + 1, \ldots, T$. This happens with probability $1 - \left( 1 - \frac{1}{n} \right)^{T - t'} \leq 1 - \left( 1 - \frac{1}{n} \right)^T$. Formally, we have
\[
\Pr{p_j^T < x \growingmid p_j^1 < x, \ldots, p_j^{t' - 1} < x, p_j^{t'} \geq x} \leq 1 - \left( 1 - \frac{1}{n} \right)^{T} \enspace.
\]
This implies
\begin{multline*}
\Pr{p_j^T \geq x, p_j^1 < x, \ldots, p_j^{t' - 1} < x, p_j^{t'} \geq x}
\geq \left( 1 - \frac{1}{n} \right)^{T} \Pr{ p_j^1 < x, \ldots, p_j^{t' - 1} < x, p_j^{t'} \geq x} \enspace.
\end{multline*}
We thus obtain
\begin{align*}
\Pr{y_j \geq x} & = \sum_{t' = 0}^T \Pr{p_j^1 < x, \ldots, p_j^{t' - 1} < x, p_j^{t'} \geq x} \\
& \leq \left( 1 - \frac{1}{n} \right)^{-T} \sum_{t' = 0}^T \Pr{p_j^T \geq x, p_j^1 < x, \ldots, p_j^{t' - 1} < x, p_j^{t'} \geq x} \\
& = \left( 1 - \frac{1}{n} \right)^{-T} \Pr{p_j^T \geq x} \enspace.
\end{align*}
This concludes the proof of \eqref{eq:randomizedactivation:probabilityinequality}.

To show the lemma, let $\epsilon > 0$. We use that the expectation of a non-negative random variable $X$ can be approximated by $\sum_{k=0}^\infty \epsilon \cdot \Pr{X \geq k \cdot\epsilon} \leq \Ex{X} \leq \sum_{k=1}^\infty \epsilon \cdot \Pr{X \geq k \cdot \epsilon}$. Applying this approximation and using \eqref{eq:randomizedactivation:probabilityinequality}, we get
\begin{align*}
\Ex{p_j^T} 
& \geq \sum_{k=1}^\infty \epsilon \Pr{p_j^T \geq k \epsilon}\\
&\geq \sum_{k=0}^\infty \left( 1 - \frac{1}{n} \right)^{T} \epsilon \Pr{y_j \geq k \epsilon} - \epsilon\\
&\geq \left( 1 - \frac{1}{n} \right)^{T} \Ex{y_j} - \epsilon \enspace.
\end{align*}
As this holds for all $\epsilon > 0$, we also have
\[
\Ex{p_j^T} \geq \left( 1 - \frac{1}{n} \right)^{T} \Ex{y_j} \enspace. %\qedhere
\]
%%\end{proof}

%% SUBSECTION: OTHER AUXILIARY LEMMAS AND THEIR PROOFS
%\subsection{Other Auxiliary Lemmas and Their Proofs}

%% SUBSECTION: LEMMA D.2
\subsection{Lemma \ref{lem:max-low} and Its Proof}
Next we state and prove Lemma~\ref{lem:max-low}, which we used in the proof of Theorem~\ref{thm:randomized-activation}.
\begin{lemma}\label{lem:max-low}
Let $S^\ast_1, \dots, S^\ast_n$ be any feasible allocation, in which player $i$ receives items $S^\ast_i$. Consider a sequence $b^0, \ldots, b^T$ in which each player from $N'$ updates his bid at least once using an $\alpha$-aggressive bid. We have
 \(
 	(\alpha + 1) \cdot DW(b^T) + \alpha \cdot \sum_{j \in M} \max_{t \leq T} \max_i b_{i, j}^t \geq \alpha \cdot \sum_{i \in N'} v_i(S^\ast_i).
 \)
 \end{lemma}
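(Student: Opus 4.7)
The plan is to adapt the proof of Lemma \ref{lem:initial-low} to the setting where players in $N'$ may update their bids multiple times, by applying the $\alpha$-aggressive property at each player's \emph{last} update time. For each $i \in N'$, let $\tau_i \in \{1, \ldots, T\}$ denote the last time step at which player $i$ updates. Since his update at time $\tau_i$ is $\alpha$-aggressive and one feasible alternative is to bid so as to buy $S^\ast_i$, we have
\[
u_i^D(b^{\tau_i}) \geq \alpha \cdot \bigg( v_i(S^\ast_i) - \sum_{j \in S^\ast_i} \max_{k \neq i} b^{\tau_i}_{k,j} \bigg).
\]
The prices $\max_{k \neq i} b^{\tau_i}_{k,j}$ are trivially bounded above by $\max_{t \leq T} \max_i b_{i,j}^t$, so summing over $i \in N'$ and using disjointness of the $S^\ast_i$ yields
\[
\sum_{i \in N'} u_i^D(b^{\tau_i}) + \alpha \cdot \sum_{j \in M} \max_{t \leq T} \max_i b_{i,j}^t \geq \alpha \cdot \sum_{i \in N'} v_i(S^\ast_i).
\]

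The main step is a generalization of Lemma \ref{lem:aux} showing $\sum_{i \in N'} u_i^D(b^{\tau_i}) \leq DW(b^T)$. I would re-index $N'$ as $\pi_1, \ldots, \pi_{|N'|}$ in increasing order of $\tau_{\pi_s}$ and define $z_j^s = \max_{r \leq s} b_{\pi_r, j}^{\tau_{\pi_s}}$ with $z_j^0 = 0$. The crucial property is that a player $\pi_r$ with $r < s$ does not change his bid between $\tau_{\pi_r}$ and $\tau_{\pi_s}$, so the same case analysis as in Lemma \ref{lem:aux} (on whether $j$ is won by $\pi_s$ at time $\tau_{\pi_s}$) gives $u_{\pi_s}^D(b^{\tau_{\pi_s}}) \leq \sum_{j \in M} (z_j^s - z_j^{s-1})$. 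Telescoping then yields $\sum_{s} u_{\pi_s}^D(b^{\tau_{\pi_s}}) \leq \sum_{j \in M} z_j^{|N'|} = \sum_{j \in M} \max_{i \in N'} b_{i,j}^T \leq DW(b^T)$, where again the frozen-bids property lets one replace $b_{\pi_r, j}^{\tau_{\pi_{|N'|}}}$ by $b_{\pi_r, j}^T$.

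Combining the two pieces gives the slightly stronger bound $DW(b^T) + \alpha \cdot \sum_{j \in M} \max_{t \leq T} \max_i b_{i,j}^t \geq \alpha \cdot \sum_{i \in N'} v_i(S^\ast_i)$, from which the stated inequality follows by weakening $DW(b^T) \leq \sum_{j \in M} \max_{t \leq T} \max_i b_{i,j}^t$ to absorb an extra factor of $\alpha$. The main subtlety lies in the telescoping step, where one must carefully verify both that $z_j^s \geq z_j^{s-1}$ for items not won by $\pi_s$ and that the second-highest bid $\max_{k \neq \pi_s} b^{\tau_{\pi_s}}_{k,j}$ on items won by $\pi_s$ dominates $z_j^{s-1}$; both conclusions rely on the fact that the bids of the previously-last-updated players $\pi_r$ are frozen from time $\tau_{\pi_r}$ onwards.
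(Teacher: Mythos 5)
Your proof is correct and follows essentially the same approach as the paper: apply the $\alpha$-aggressive inequality at each player's last update time, bound the opponent prices seen there by the running maximum bid, and combine with the telescoping argument (the paper isolates the telescoping step as its Lemma~\ref{lem:aux-variant}, which your inline argument with $z_j^s = \max_{r\leq s} b_{\pi_r,j}^{\tau_{\pi_s}}$ reproduces). You also correctly observe that the argument actually yields the stronger bound with coefficient $1$ on $DW(b^T)$ rather than $\alpha+1$ (which is indeed what the paper's proof establishes, and what the application in Theorem~\ref{thm:randomized-activation} uses).
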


To prove this lemma we need the following auxiliary lemma.

\begin{lemma}\label{lem:aux-variant}
Consider a sequence $b^0, \ldots, b^T$ in which bidders from $N'$ update their bid at least once. For $i \in N'$, let $t_{i}$ denote the time of the last update for bidder $i$. Then,
\(
	\sum_{i \in N'} u_i^D(b^{t_i}) \leq DW(b^T).
\)
\end{lemma}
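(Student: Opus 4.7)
The plan is to adapt the telescoping argument from Lemma~\ref{lem:aux} to the setting where bidders may update multiple times and we only count each bidder's last update. The crucial twist is to reorder the bidders in $N'$ according to their last-update time rather than any fixed activation order.

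To make this precise, I will relabel the bidders in $N' = \{i_1, \ldots, i_{n'}\}$ so that $t_{i_1} < t_{i_2} < \cdots < t_{i_{n'}}$, and define $z_j^\ell = \max_{m \leq \ell} b_{i_m, j}^T$ for $\ell \geq 1$ and $z_j^0 = 0$. Thus $z_j^\ell$ records the largest \emph{final} bid on item $j$ among the first $\ell$ bidders in this reordering, and these quantities telescope cleanly in $\ell$.

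The heart of the argument is the per-bidder inequality $u_{i_\ell}^D(b^{t_{i_\ell}}) \leq \sum_{j \in M} (z_j^\ell - z_j^{\ell-1})$. Two observations drive this. First, bidder $i_\ell$ makes no further updates after $t_{i_\ell}$, so $b_{i_\ell,j}^{t_{i_\ell}} = b_{i_\ell,j}^T$. Second, every bidder $i_m$ with $m < \ell$ has already made its last update by time $t_{i_\ell}$, so $b_{i_m,j}^{t_{i_\ell}} = b_{i_m,j}^T$ as well. Letting $S_\ell'$ be the set bidder $i_\ell$ wins at time $t_{i_\ell}$, these facts imply that for $j \in S_\ell'$ one has $b_{i_\ell,j}^T \geq b_{i_m,j}^T$ for all $m < \ell$, so $b_{i_\ell,j}^T = z_j^\ell$, while $\max_{k \neq i_\ell} b_{k,j}^{t_{i_\ell}} \geq \max_{m < \ell} b_{i_m,j}^T = z_j^{\ell-1}$. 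For $j \notin S_\ell'$, monotonicity of $\ell \mapsto z_j^\ell$ already gives $z_j^\ell - z_j^{\ell-1} \geq 0$, so these terms only enlarge the right-hand side.

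Summing the per-bidder inequality over $\ell = 1, \ldots, n'$, the right-hand side telescopes to $\sum_{j \in M} z_j^{n'}$, which is bounded by $\sum_{j \in M} \max_{k \in N} b_{k,j}^T = DW(b^T)$ because $N' \subseteq N$, yielding the claim. The main subtle point is the per-bidder step: one has to verify that later-ordered bidders $i_m$ with $m > \ell$, whose bids at time $t_{i_\ell}$ may differ from their final bids, do not spoil the argument. This is fine because such bidders only contribute to $\max_{k \neq i_\ell} b_{k,j}^{t_{i_\ell}}$ with a possibly larger bid, which only \emph{strengthens} the upper bound on $u_{i_\ell}^D(b^{t_{i_\ell}})$ that we are trying to prove; hence the lower-bound direction used to identify $b_{i_\ell,j}^T$ with $z_j^\ell$ for $j \in S_\ell'$ relies only on bidders $i_m$ with $m \leq \ell$, for which stationarity holds.
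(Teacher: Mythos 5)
Your proof is correct and takes essentially the same route as the paper's: reorder $N'$ by last-update time, define telescoping quantities $z_j^\ell$ recording the running maximum of \emph{final} bids, prove the per-bidder bound $u_{i_\ell}^D(b^{t_{i_\ell}}) \leq \sum_j (z_j^\ell - z_j^{\ell-1})$, and telescope. Your phrasing of $z_j^\ell$ in terms of $b^T$ is a minor notational cleanup (equivalent to the paper's definition because bidders with earlier $t_{i_m}$ are already stationary), and your closing paragraph makes explicit the one-sided nature of the later-bidders' contribution, which the paper leaves implicit.
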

\begin{proof}
Without loss of generality, let $N' = \{1, \ldots, n'\}$ and $t_1 < t_2 < \ldots < t_{n'}$. Consider any $i \in N'$ and let bidder $i$'s update buy him the set of items $S'$. Then
\[
	u_i^D(b^{t_i}) = \sum_{j \in S'} \left(b^{t_i}_{i,j} - \max_{k \neq i} b_{k,j}^{t_i} \right)\enspace.
\]

For $i \in N'$, let $z^i_j = \max_{k < i} b_{k, j}^{t_i}$ for all $j$, $z^0_j = 0$. That is, $z^i_j$ is the highest ``final'' bid on item $j$.

We observe that
\[
	\sum_{j \in S'} (b_{i,j}^{t_i} - \max_{k \neq i} b_{k,j}^{t_i}) \leq \sum_{j \in M} (z_j^i - z_j^{i-1}) \enspace.
\]
This is for the following fact. For $j \not\in S'$, we have $z_j^i \geq z_j^{i-1}$ by definition. For $j \in S'$, $b_{i, j}^{t_i} = z^i_j$ and $\max_{k \neq i} b_{k,j}^{t_i} \geq \max_{k < i} b_{k,j}^{t_i} = \max_{k < i} b_{k,j}^{t_{i-1}} = z_j^{i-1}$.

By summing over all bidders $i \in N'$, we obtain
\[
	\sum_{i \in N'} u_i^D(b^{t_i}) \leq \sum_{i \in N'} \sum_{j \in M} (z_j^i - z_j^{i-1}).
\]

The double sum is telescoping and $z_j^T = z_j^{t_{n'}} = \max_{k \leq n'} b^T_{k,j} \leq \max_{k} b^T_{k,j}$ and $z_j^0 = 0$ by definition. So,
\[
	%% HERE ARXIV2
	\sum_{i \in N'} u_i^D(b^{t_i}) \leq \sum_{j \in M} (z_j^T - z_j^{0}) = \sum_{j \in M} \max_{k} b^T_{k,j} = DW(b^T)\enspace. \qedhere
\]
\end{proof}

We are now ready to prove the lemma.

\begin{proof}[Proof of Lemma~\ref{lem:max-low}]
For $i \in N'$, let $t_i$ denote the last time player $i$ updates his bid. Instead of choosing bid $b^{t_i}_i$, he could have bought the set of items $S^\ast_i$. As $b^{t_i}_i$ is $\alpha$-aggressive, we get
\[
	u_i^D(b^{t_i}) \geq \alpha \cdot \bigg( v_i(S^\ast_i) - \sum_{j \in S^\ast_i} \max_{k \neq i} b_{k,j}^{t_i} \bigg) \enspace.
\]
Let $y_j = \max_t \max_k b_{k, j}^t$.

We thus have
\[
	u_i^D(b^{t_i}) + \alpha \cdot \sum_{j \in S^\ast_i} y_j \geq \alpha \cdot v_i(S^\ast_i)\enspace.
\]

Summing this inequality over all bidders $i \in N'$ yields
\[
	\sum_{i \in N'} u_i^D(b^{t_i}) + \alpha \cdot \sum_{i \in N'} \sum_{j \in S^\ast_i} y_j \geq \alpha \cdot \sum_{i \in N'} v_i(S^\ast_i) \enspace.
\]

The first sum is at most $DW(b^T)$ by Lemma \ref{lem:aux-variant}. The double sum covers each $j \in M$ at most once, therefore it is bounded by $\sum_{j \in M} y_j$. Consequently,
\[
	%% HERE ARXIV2
	DW(b^T) + \alpha \cdot \sum_{j \in M} y_j \geq \alpha \cdot \sum_{i  \in N'} v_i(S^\ast_i) \enspace. \qedhere
\]
\end{proof}

%% SECTION: PROOF OF THEOREM 5.2
\section{Proof of Theorem~\ref{thm:adversarial-activation}}\label{app:adversarial-activation}
Our proof of Theorem~\ref{thm:adversarial-activation} combines the construction that we used to prove Proposition~\ref{prop:adversarial-activation} with the following exponential lower-bound construction.

%Our proof of Theorem~\ref{thm:adversarial-activation} combines the construction in the previous proof with the construction used in Theorem 3.4~of~\cite{ChristodoulouKS16}, which shows that for every number of items $k$, there is an instance with two players with fractionally subadditive valuations such that the Potential Procedure needs at least $\Omega(2^k)$ steps to converge. By combining the two constructions, we will get a sequence of bid updates in which all bid updates are strictly improving.

\begin{lemma}[Theorem 3.4~of~\cite{ChristodoulouKS16}]\label{lem:exponential}
For every $k$ there is an instance with two players, $A$ and $B$, and $k$ items, with fractionally subadditive valuations $v_A$ and $v_B$ defined by additive functions $(a_A^t)_{t \in \mathbb{N}}$ and $(a_B^t)_{t \in \mathbb{N}}$ such that in the Potential Procedure, started from initial bid vector $b^0$ in which both players bid zero and with player $A$ making the first move, player $z \in \{A,B\}$ plays $a_z^t$ the $t$-th time he gets to update his bid and it takes at least $\Omega(2^k)$ steps before the procedure converges. 
%Given a set $C$ of $k$ items, there are fractionally subadditive valuation functions $v_A, v_B\colon C \to \mathbb{R}_{\geq 0}$ such that for two players $A$ and $B$ with valuations $v_A$ and $v_B$ respectively the Potential Procedure needs at least $\Omega(2^k)$ steps to converge.
\end{lemma}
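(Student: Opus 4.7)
My plan is to combine $n-1$ independent copies of the exponential lower bound from Lemma~\ref{lem:exponential} with the unit-demand deadlock construction from Proposition~\ref{prop:adversarial-activation}, so that bidder~$1$ plays the $A$-role in each copy and bidder $j+1$ plays the $B$-role in the $j$-th copy. Concretely, I would partition the $(n-1)(k+1)$ items into groups $G_1,\dots,G_{n-1}$, each consisting of a ``main'' item $s_j$ together with $k$ ``game'' items $g_{j,1},\dots,g_{j,k}$. Bidder~$1$ would receive the XOS valuation whose supporting additive functions are indexed by a pair $(j,\ell)$ and assign value $1+\epsilon$ to $s_j$ plus $\eta \cdot a_A^{(j),\ell}$ to the game items of $G_j$ (and $0$ elsewhere), where $\eta>0$ is a tiny scaling factor and the $a_A^{(j),\ell}$ are the additive supports of the $A$-side of Lemma~\ref{lem:exponential} restricted to $G_j$. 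Bidder $j+1$ would have value $1$ for $s_j$ together with $\eta$ times the $B$-side of the $j$-th exponential game on $g_{j,1},\dots,g_{j,k}$. Both sides are XOS since they are sums of XOS valuations over disjoint item sets, and the social optimum is already at least $n-1$, attained by allocating each $s_j$ to bidder $j+1$.

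My activation sequence is a concatenation of $n-1$ phases. In phase $j$, I alternate bidder~$1$ and bidder $j+1$ for the $\Omega(2^k)$ steps required by Lemma~\ref{lem:exponential}, using adversarial tie-breaking over bidder~$1$'s demand sets to force him to select the group-$j$ support $(j,\ell)$ on the $\ell$-th visit. During phase $j$ bidder~$1$ thus bids $1+\epsilon$ on $s_j$ (which bidder $j+1$ cannot profitably outbid, since his value for $s_j$ is only $1$) and executes the $A$-role moves on $G_j$'s game items, while bidder $j+1$ best-responds with the $B$-role moves on the same items. At the transition from phase $j-1$ to phase $j$, bidder~$1$'s bids on $s_{j-1}$ and on the game items of $G_{j-1}$ drop to $0$; by adversarial tie-breaking on the main items, $s_{j-1}$ is then re-assigned to some bidder who does not value it (any bidder $j'+1$ with $j' \neq j-1$ works), contributing $0$ to welfare. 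After all $n-1$ phases, bidder~$1$ has been activated $\Omega((n-1)\cdot 2^k)$ times and every other bidder $\Omega(2^k)$ times, so each agent has been activated $\Omega(2^k)$ times. At every intermediate time step the realised social welfare equals $(1+\epsilon) + O(\eta)$: exactly one main item is held by a bidder who values it, and the $\eta$-scaled game items contribute $O(\eta)$ in total. Picking $\eta$ with $O(\eta)\le \epsilon$ and dividing by $OPT \geq n-1$ yields the ratio $\tfrac{1+O(\epsilon)}{n-1}$, which after absorbing constants into $\epsilon$ matches the statement.

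The main obstacle I anticipate is ensuring that each bidder-$1$ update inside a phase is a genuine Potential-Procedure step (the player must not already be at a best response) and that the between-phase transitions are likewise legal updates. Inside a phase the non-triviality of each $A$-move is inherited directly from Lemma~\ref{lem:exponential}, since the supporting additive vector changes between consecutive visits by construction; the only delicate points are therefore the phase boundaries. I plan to handle those by choosing the parameters of the exponential game and the initial bid vector $b^0$ so that bidder~$1$'s utility at the end of phase $j-1$ is strictly smaller than the utility achievable by switching to group~$j$, where the game is ``fresh'' and all prices on $G_j$ are zero, making the switch a strict improvement. A secondary technical point is verifying $1$-safety and $1$-aggressiveness throughout; both follow from strong no-overbidding on each supporting valuation together with the standard XOS best-response argument of Proposition~\ref{prop:xos} applied group-wise.
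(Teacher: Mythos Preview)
First, a target mismatch: what you have written is not a proof of Lemma~\ref{lem:exponential} at all---that lemma is quoted from \cite{ChristodoulouKS16} and carries no proof in this paper---but rather a proof of Theorem~\ref{thm:adversarial-activation}, which \emph{uses} the lemma. Comparing against the paper's proof of that theorem: your high-level idea of grafting $n-1$ copies of the exponential game onto the unit-demand deadlock of Proposition~\ref{prop:adversarial-activation} matches, but the wiring differs. The paper runs all $n-1$ copies \emph{in parallel} under the interleaved activation $1,2,1,3,\ldots,1,n$, with bidder~$1$'s additive supports $v_1^t$ carrying contributions from \emph{every} game group $C_i$ simultaneously. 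Each time bidder~$1$ moves, exactly one game has just been advanced by its $B$-player; the unique $A$-best-response in that game pins down the support $v_1^t$, and the main-item choice is hard-coded into that same support, so uniqueness of the update---and hence the desired main-item shift---comes for free.

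Your sequential-phase variant has a gap that the parallel scheme sidesteps. Inside phase~$j$ you rely on ``adversarial tie-breaking over bidder~$1$'s demand sets'' to keep him on support $(j,\ell)$, but tie-breaking only resolves ties: you still have to rule out that some other group is \emph{strictly} preferred. After bidder~$j{+}1$'s $B$-move, player~$A$'s best achievable utility on $G_j$ can fall below the first-move $A$-utility on any fresh group $G_{j'}$ with $j'>j$ (all such groups have price zero, and Lemma~\ref{lem:exponential} gives no guarantee that $u_A^{\ell}\ge u_A^{1}$ along the trajectory); it can likewise fall below the $A$-utility still available on an earlier group $G_{j'}$, $j'<j$, where bidder~$j'{+}1$'s stale $B$-bids remain in place. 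In either case bidder~$1$'s unique Potential-Procedure move abandons group~$j$ prematurely and the phase structure collapses. You flag the phase-boundary transition but not this within-phase leakage. Repairing it would require group-dependent scalings $\eta_1>\eta_2>\cdots>\eta_{n-1}$ calibrated against explicit lower bounds on the $A$-utility trajectory in the exponential construction---additional work that the paper's parallel design avoids entirely, since there bidder~$1$ never has to choose between groups.
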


%Christodoulou et al.~\cite{ChristodoulouKS16} give a lower bound on the convergence time of their Potential Procedure. In their Theorem~3.4, they show that for two players for every number of items $k$, there are fractionally subadditive valuations such that convergence takes at least $\Omega(2^k)$ steps. We combine these valuation functions with the ones used in the proof of Proposition~\ref{prop:adversarial-activation} to get a sequence in which all steps are strictly improving.

\begin{proof}[Proof of Theorem \ref{thm:adversarial-activation}]
As in the proof of Proposition \ref{prop:adversarial-activation} we use $n$ players, we start with the initial bid vector $b^0$ in which all players bid zero, and we consider player $1$ being activated in every odd step and the remaining players being activated in round-robin fashion in even steps.

We use $m = (n-1) \cdot (k+1)$ items. Items $1, \ldots, n-1$ are used to mimic the sequence of Proposition~\ref{prop:adversarial-activation}. The remaining items are grouped into $n-1$ sets of size $k$, namely $C_i := \{ n - 1 + (i - 2) k + 1, \ldots, n - 1 + (i - 1) k \}$ for $i > 2$, and on each of these sets player $1$ follows the steps of the exponential-length sequence of Lemma~\ref{lem:exponential} with one of the other $n-1$ players, with player $1$ taking the role of player $A$ and player $i > 1$ taking the role of player $B$.
%So, items are used for the exponential-length sequence between player $1$ and bidder $i$.

To define the valuations, for $z \in \{A, B\}$, $i = 2, \ldots, n$, and $t \geq 1$, let $a_{z, i}^t$ be the additive valuation functions defined in Lemma \ref{lem:exponential} that are used by player $z \in \{A,B\}$ after the $t$-th update, using the items $C_i$.

We first define the valuation function $v_i$ for players $i > 1$. Namely, given some $\epsilon > 0$, let the valuation function $v_i$ of player $i > 1$ be defined as
\[
v_i(S) = \max\{ \mathbf{1}_{i-1 \in S}, \epsilon \cdot \max_t a_{B, i}^t(S) \} \enspace.
\]
That is, player $i$ has a high value to buy item $i-1$. He also has a very small value for items $C_i$ according to the valuations of player $B$ in the exponential lower-bound construction using the items $C_i$. 

For player $1$, we define the valuation function by setting $v_1(S) = \max_t v_1^t(S)$, where $v_1^t$ is the additive valuation function that is used when player $1$ updates his bid for the $t$-th time. It is designed in such a way that the $t$-th update is a best response in the game on $C_i$ with player $i = (t - 1) \mod (n-1) + 1$, who has just updated his bid, and makes the bid of bidder $1$ move from item $i - 1$ to $i$, which bidder $i + 1$ is interested in, who will be activated next.

To define $v_1^t$ formally, observe that when player $1$ makes his $t$-th update, some of the other players have performed $\lceil \frac{t}{n - 1} \rceil$ updates so far, the others only $\lfloor \frac{t}{n - 1} \rfloor$. Let the respective sets of players be denoted by $N'(t)$ and $N''(t)$. Based on this, define
\[
v_1^t(S) = (1 + \epsilon) \cdot \mathbf{1}_{(t - 1) \bmod (n-1) + 1 \in S} + \epsilon \cdot \sum_{i \in N'(t)} a_{A, i}^{\lceil \frac{t}{n - 1} \rceil}(S) + \epsilon \cdot \sum_{i \in N''(t)} a_{A, i}^{\lfloor \frac{t}{n - 1} \rfloor}(S) \enspace.
\]

By these definitions, the bids on items $1, \ldots, n-1$ change exactly the way as in the proof of Proposition~\ref{prop:adversarial-activation} as long as there are still changes on items $C_i$ for $i > 1$. By Lemma \ref{lem:exponential} it takes at least $\Omega(2^k)$ updates until such a set $C_i$ reaches a stable state. Therefore, our constructed best-response sequence has low welfare at least until every player $2, \ldots, n$ has updated his bid at least $\Omega(2^k)$ times. Moreover, every update is the unique best response.
\end{proof}

%% SECTION: NEGATIVE RESULT FOR MPH-K VALUATIONS
\section{Negative Result for MPH-$k$ Valuations}\label{app:mphk}
The maximum over positive hypergraph-$k$ or MPH-$k$ hierarchy \cite{FeigeFIILS15} comprises valuation functions with different degrees of complementarity, as parametrized by $k$.
A valuation function $v_i$ belongs to MPH-$k$ if there are values $v_{i, T}^\ell \geq 0$ such that $v_i(S) = \max_\ell \sum_{T \subseteq S, \lvert T \rvert \leq k} v_{i, T}^\ell$. Any (monotone) valuation function can be captured with $k= m$. Fractionally subadditive valuations are precisely the case $k = 1$.

Observe that for a usual valuation function even in MPH-$2$, the only bids that fulfill strong no-overbidding are zero on every item. Therefore, it is not possible that bidders bid $\alpha$-aggressively for $\alpha > 0$ and satisfy no-overbidding in the strong sense at the same time. However, as our dynamics in Section~\ref{sec:subadditive-2} demonstrates, strong no-overbidding is not a necessary requirement for good welfare guarantees. Unfortunately, the case is different for MPH-$k$. Below we show a negative result for the valuation class MPH-$3$. It relies on ties regarding identical bids and multiple best responses being broken to the disadvantage of the dynamics.

\begin{proposition}
There are valuation functions for $n$ bidders on $O(n)$ items that belong to MPH-$3$ such that round-robin best-response dynamics only reach states that achieve a $O(\frac{1}{n})$-fraction of the optimal social welfare.
\end{proposition}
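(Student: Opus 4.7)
The plan is to exhibit an explicit family of MPH-$3$ instances together with an initial bid vector at which bidder $1$ is ``locked out'' of every triple, and then to argue by an invariance argument that every reachable round-robin bid profile has welfare at most $n-1$, while the social optimum is $n(n-1)$, giving the $1/n$ ratio.

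Take $n$ bidders and $3(n-1)$ items partitioned into disjoint triples $T_2, \ldots, T_n$. Define $v_1(S) = n \cdot |\{ i : T_i \subseteq S \}|$, which is MPH-$3$ via the single hypergraph putting weight $n$ on each triple, and for $i \geq 2$ let $v_i$ be unit-demand on $T_i$ with value $1$. Fix tie-breaking in favor of the higher-indexed bidder, so that bidder $1$ loses every tie. The welfare-maximizing allocation gives every item to bidder $1$ and yields $OPT = n(n-1)$. Take the initial bid vector $b^0$ in which bidder $1$ bids $0$ everywhere and each blocker $i$ bids $n$ on a distinguished item of $T_i$ and $0$ elsewhere; then blocker $i$ wins all of $T_i$ and the welfare is exactly $n-1$.

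I would then prove, by induction on the time step, that every reachable state satisfies the invariant: bidder $1$ bids $0$ on every item, and for each $i \geq 2$ some item of $T_i$ carries a bid of at least $n$ from another bidder. When bidder $1$ is activated, his utility for any set that completes a triple equals $n - n = 0$, for any set containing no complete triple is $\leq 0$, and $\emptyset$ attains the maximum utility $0$; so $\emptyset$ is a best response, and the adversary selects it, preserving the invariant. When a blocker $i$ is activated, all items of $T_i$ have price $0$ from his perspective (no other bidder touches them), so every bid vector winning at least one item of $T_i$ yields his maximum utility $1$; in particular, bidding $n$ on any item of $T_i$ is among the best responses, and the adversary selects such a bid, again preserving the invariant. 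Since the invariant implies each blocker $i$ still wins at least one item of $T_i$ and bidder $1$ completes no triple, the welfare stays at most $n - 1$ throughout.

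The main obstacle is the careful bookkeeping for the many alternative best responses. In particular, bidder $1$ has several utility-zero best responses besides $\emptyset$ --- the grand bundle, and various ``free non-blocked item'' augmentations of $\emptyset$ --- some of which would actually produce high welfare; the key point is precisely that $\emptyset$ is \emph{always} available as a maximizer of utility $0$, so adversarial selection can consistently choose it. Symmetrically, a blocker might shift his blocking bid among items of $T_i$ or place bid $n$ on several items at once, both of which are absorbed by the invariant since $T_i$ is disjoint from the other triples. Verifying that the blocking bid of $n$ genuinely lies in the blocker's best-response set in every reachable state --- which uses the fact that the proposition imposes no no-overbidding constraint, and thus allows the kind of ``excessive'' bids that strong no-overbidding would rule out on MPH-$3$ --- is the delicate point I expect to require the most care.
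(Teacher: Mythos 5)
Your construction is correct and does establish the proposition as stated, but it takes a genuinely different route from the paper's, and the difference is worth flagging because it changes what the example actually demonstrates.

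You build a \emph{static lockout}: one MPH-$3$ bidder with triple-complements is blocked by $n-1$ unit-demand players who each place a bid of $n$ on one item of value $1$, producing a bad fixed point that adversarial best-response selection never leaves. The arithmetic checks out ($OPT = n(n-1)$, welfare stuck at $n-1$), $v_1$ is PH-$3$ hence MPH-$3$, and your invariant argument is sound once you note that it suffices that \emph{some} item of each $T_i$ carry a bid $\geq n$ (the other zero-priced items of $T_i$ may go to a different blocker under your tie-break, which does not affect the welfare count). The paper instead builds a \emph{cyclical} construction with $k$ bidders contending over overlapping three-item bundles and a carefully chosen tie-breaking rule so that at each step exactly one of them holds a complete bundle; there is no fixed point, yet welfare stays at $3$ while $OPT = k+4$.

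The substantive difference is in the failure mode exhibited. Your blockers grossly overbid --- declared utility $n$ against actual utility $1$ --- so your dynamic is only $n$-safe, placing it in a regime where Theorem~\ref{thm:main-1} already permits welfare as low as $\Omega(1/n^2)$. It is essentially the ``Gross Overbidding'' phenomenon of the paper's introductory example, instantiated with an MPH-$3$ valuation in the wings. The paper's construction is qualitatively different: its valuable bidders bid exactly the bundle value, spread additively over the three items, and \emph{at the moment they update} they win the whole bundle with nonnegative utility. The damage occurs afterwards, when others bid and a player is left holding a partial bundle of zero value at positive price, so his actual utility goes negative while his declared utility stays nonnegative. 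That violates $\beta$-safety for \emph{every} finite $\beta$, which is precisely the MPH-specific obstruction the paper wants to highlight: for complements, no bound on overbidding can salvage Theorem~\ref{thm:main-1}, because an aggressive bid on a bundle becomes retroactively unsafe once part of the bundle is lost. Your example, while a valid proof of the proposition, does not exhibit this phenomenon, so it is less informative about why the aggressive/safe framework breaks down for MPH-$k$.
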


\begin{proof}
For a given $k$, we define an instance with $k + 4$ items and $2k+4$ bidders as follows. Bidder $i \in [k - 1]$ has a valuation of $3$ for the bundles $\{i, k+1, k+2\}$ and $\{i, k+3, k+4\}$, with no value for the subsets. Bidder $k$ has a valuation of $3$ for the bundles $\{k, k+1, k+3\}$ and $\{k, k+2, k+4\}$, with no value for the subsets. Furthermore, there are $k+4$ bidders $k+1, \ldots, 2k+4$, each of which has a valuation of $1$ for exactly one (distinct) item $j \in [k+4]$. Note that due to bidders $k+1, \ldots, 2k+4$, the optimal social welfare is $k+4$. Our best-response sequence will never reach a state with social welfare higher than $3$.

We assume that ties are broken as follows. Bidders $k+1, \ldots, 2k+4$ never get an item if there is an equal bid from a bidder $i \in [k]$. Among the bidders $i \in [k]$, on items $k+1$ and $k+3$, bidder $k$ is preferred to $k-1$, bidder $k-1$ to $k-2$, and so on. On items $k+2$ and $k+4$, bidders $i \in [k-1]$ are preferred to bidder $k$, bidder $k-1$ is preferred to $k-2$, bidder $k-2$ to $k-3$, and so on.

%Now, we get the following best-response sequence. 
Now consider the round-robin best-response dynamics in which bidders get activated in the order they are indexed. Throughout the bidding dynamics bidders $k+1, \ldots, 2k+4$ will bid truthfully on their respective items. The other bidders bid as follows. In odd rounds bidders $i = 1, \ldots, k-1$ buy items $\{i, k+1, k+2\}$, bidding $1$ on each of them. Afterwards, bidder $k$ buys items $\{k, k+1, k+3\}$, again bidding $1$ on each of them.
In even rounds, bidders $i = 1, \ldots, k-1$ buy items $\{i, k+3, k+4\}$, bidding $1$ each, making bidder $k$ buy items $\{k, k+2, k+4\}$. %Now, the sequence starts from the beginning.

Note that at every point in this sequence, only the bidder that has just updated his bid gets a bundle of items of any positive value. This value is $3$.
\end{proof}

\section{Lazy Updates}\label{app:lazy}

%%So far, we considered the case that each bidder always updates his bids in an $\alpha$-aggressive way when it is his turn. 
In this appendix we show that our results also transfer to the case in which updates are lazy. That is, a bidder may also choose not to update the bids when he is already playing a best response given the current other bids. It is now important to assume that bid updates are zero for items that are not won and that no item is ever won with bid zero. We will consider the points in time when each bidder has performed at least one $\alpha$-aggressive update.

\begin{theorem}
\label{thm:main-1-lazy}
In a $\beta$-safe round-robin bidding dynamic with lazy $\alpha$-aggressive bid updates the social welfare at any time step $t$  after which each bidder has performed at least one $\alpha$-aggressive update satisfies
\(
	SW(b^t) \geq \frac{\alpha}{(1+2\alpha+\beta)\beta} \cdot OPT(v).
\)
\end{theorem}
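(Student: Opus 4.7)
The plan is to adapt the proof of Theorem~\ref{thm:main-1}. The key observation in the lazy setting is that a no-op update leaves the bid unchanged, so each bidder's bid profile at time $t$ is determined by that bidder's last $\alpha$-aggressive update. Fix $t$ as in the statement, and let $s_i \leq t$ denote the time of bidder $i$'s last $\alpha$-aggressive update. Relabel bidders so that $s_1 < s_2 < \cdots < s_n \leq t$; then $b_k^{\tau} = b_k^{s_k}$ for every $\tau \in [s_k, t]$, so in particular $b^{s_n} = b^t$ and $DW(b^{s_n}) = DW(b^t)$. From here I would re-establish analogs of the three lemmata used to prove Theorem~\ref{thm:main-1}, applied to the subsequence $b^{s_1}, b^{s_2}, \ldots, b^{s_n}$ of last aggressive updates.

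The analog of Lemma~\ref{lem:aux} carries over cleanly: defining $z_j^i := \max_{k \leq i} b_{k,j}^{s_k}$ with $z_j^0 = 0$, and using that $b_{k,j}^{s_i} = b_{k,j}^{s_k}$ for all $k \leq i$ and that $z_j^n = \max_k b_{k,j}^t$, the same telescoping argument as in the original proof yields $\sum_i u_i^D(b^{s_i}) \leq DW(b^t)$. The analog of Lemma~\ref{lem:initial-high} also follows, since at every aggressive update $r$ the identity $DW(b^r) - DW(b^{r-1}) = u_{i(r)}^D(b^r) - u_{i(r)}^D(b^{r-1})$ continues to hold and lazy no-ops leave $DW$ unchanged; combining the $\alpha$-aggressive and $\beta$-safe bounds and using the first analog to control aggregate $u^D$ values then yields $DW(b^t) \geq (\alpha/\beta) DW(b^0)$.

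The main work is in the analog of Lemma~\ref{lem:initial-low}, which I expect takes the form $(2\alpha + 1) DW(b^t) + \alpha DW(b^0) \geq \alpha \cdot OPT(v)$; the additional $\alpha \cdot DW(b^t)$ summand relative to the original is precisely what produces the extra $\alpha$ in the denominator of the theorem. Applying $\alpha$-aggressiveness at $s_i$ against the deviation buying the optimal bundle $S_i^*$ and summing gives
\[
\sum_i u_i^D(b^{s_i}) + \alpha \sum_i \sum_{j \in S_i^*} \max_{k \neq i} b_{k,j}^{s_i} \geq \alpha \cdot OPT(v).
\]
The prices $\max_{k \neq i} b_{k,j}^{s_i}$ split by bidder index: for $k < i$, the bid equals $b_{k,j}^{s_k} = b_{k,j}^t$ and contributes at most $p_j^t$; for $k > i$ with no prior aggressive update, the bid equals $b_{k,j}^0 \leq p_j^0$; and for $k > i$ with a prior aggressive update, the resulting ``intermediate'' bid is the crux of the difficulty. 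I would handle intermediate bids by bounding the interim declared welfare at the time of each intermediate update via the analog of Lemma~\ref{lem:initial-high} applied on the subinterval up to $t$, which combined with the analog of Lemma~\ref{lem:aux} yields the needed extra $DW(b^t)$ term. Finally, substituting $DW(b^0) \leq (\beta/\alpha) DW(b^t)$ into the Lemma~\ref{lem:initial-low} analog and applying Lemma~\ref{lem:declared-vs-actual} delivers $(1 + 2\alpha + \beta)\beta \cdot SW(b^t) \geq \alpha \cdot OPT(v)$, as claimed. The hard part, as flagged, is rigorously bounding the intermediate bids from bidders whose final aggressive update comes after $s_i$, since their earlier bids do not correspond to any single ``snapshot'' of the dynamics that we can directly bound by $DW(b^t)$.
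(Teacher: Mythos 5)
Your outline correctly identifies the structure (analogs of the three lemmata) and the target bound, and you accurately flag where the argument breaks down: bidder $i$'s last aggressive update $s_i$ can be far in the past, so the prices $\max_{k \neq i} b_{k,j}^{s_i}$ include ``intermediate'' bids of later-indexed bidders that correspond to no snapshot bounded by $DW(b^0)$ or $DW(b^t)$. But the proposed fix --- controlling these via interim declared welfare and the Lemma~\ref{lem:initial-high} analog --- is not a proof; there is no clean telescoping over the triples $(i,k,j)$ of intermediate bids, and the sketch stops exactly where the work begins. This is a genuine gap, and the paper closes it with a different idea that you did not use.

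The paper's resolution is to \emph{not} deviate to $S_i^\ast$ at the aggressive-update time. Let $t_i$ be bidder $i$'s last aggressive update and $t_i'$ the last time $i$ is \emph{offered} an update (so $t_i' \in \{t-n+1,\dots,t\}$ by round-robin, and $b_i^{t_i'} = b_i^{t_i}$). The lazy rule says a bidder declines only when already at a best response, so at $t_i'$ one has $u_i(b^{t_i'}) \geq v_i(S_i^\ast) - \sum_{j\in S_i^\ast}\max_{k\neq i} b_{k,j}^{t_i'}$, and because $t_i'$ sits in the final round the prices there split cleanly into $p_j^t + p_j^{t-n}$ exactly as in the eager case. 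The aggressiveness at $t_i$ is only used against the \emph{shrunken} set $S_i'$ (the portion of $S_i$ still held at $t_i'$), which chains into
\[
u_i^D(b^{t_i}) + \alpha\, u_i^D(b^{t_i'}) \;\geq\; \alpha\, u_i(b^{t_i'}) \enspace,
\]
and summing $\sum_i u_i^D(b^{t_i})$ and $\sum_i u_i^D(b^{t_i'})$ --- each $\leq DW(b^t)$ by the telescoping lemma --- produces the extra $\alpha\cdot DW(b^t)$ term and hence the $(2\alpha+1)$ coefficient. So the intermediate-bid problem you ran into never arises; the deviation to $S_i^\ast$ happens at a time we know is in the last round, and the only use of $t_i$ is to relate two declared utilities of the same bidder. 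Without the chaining through $t_i'$ and the lazy best-response property, the argument does not go through.
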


To prove this theorem, we use variants of Lemmas \ref{lem:initial-low} and \ref{lem:initial-high} that do not rely on eager updates. Note that Lemma~\ref{lem:declared-vs-actual} does not rely on $\alpha$-aggressive updates and therefore continues to hold in the case of lazy updates. Our first lemma generalizes Lemma~\ref{lem:initial-low}.

\begin{lemma}
\label{lem:initial-low-lazy}
Let $S^\ast_1, \dots, S^\ast_n$ be any feasible allocation, in which player $i$ receives items $S^\ast_i$. Consider a round-robin sequence $b^0, \ldots, b^T$ in which each player updates his bid at least once using an $\alpha$-aggressive bid and may be lazy afterwards. We have
 \(
 	(2 \alpha + 1) \cdot DW(b^T) + \alpha \cdot DW(b^{T-n}) \geq \alpha \cdot \sum_{i \in N'} v_i(S^\ast_i).
 \)
 \end{lemma}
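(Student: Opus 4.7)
The plan is to mirror the proof of Lemma~\ref{lem:initial-low}, carefully accounting for bidders who update lazily. For each bidder $i$, let $t_i \leq T$ denote bidder $i$'s most recent $\alpha$-aggressive update, which exists by hypothesis. Since any subsequent update by bidder $i$ must be lazy and therefore leaves bidder $i$'s bid vector untouched, we have $b_i^{t_i} = b_i^{t_i+1} = \cdots = b_i^T$. In particular, $t_i$ is simultaneously bidder $i$'s last $\alpha$-aggressive update and the last time bidder $i$'s bid changes, so Lemma~\ref{lem:aux-variant} applies to the sequence $b^0, \dots, b^T$ with the $t_i$'s in the role of last-update times and yields $\sum_i u_i^D(b^{t_i}) \leq DW(b^T)$.

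Next, I invoke $\alpha$-aggressiveness of bidder $i$'s bid at $t_i$ against the alternative of buying $S^\ast_i$ to obtain
\[
u_i^D(b^{t_i}) \geq \alpha \left( v_i(S^\ast_i) - \sum_{j \in S^\ast_i} \max_{k \neq i} b_{k,j}^{t_i} \right).
\]
Summing over $i$ and substituting the bound from Lemma~\ref{lem:aux-variant}, the lemma reduces to establishing
\[
\sum_i \sum_{j \in S^\ast_i} \max_{k \neq i} b_{k,j}^{t_i} \leq 2 \cdot DW(b^T) + DW(b^{T-n}).
\]

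The main obstacle is precisely this last inequality: unlike in the eager setting, the $t_i$'s need not all fall in a single window of $n$ consecutive steps, since a bidder who is lazy at his slot in $\{T-n+1, \dots, T\}$ has $t_i < T-n+1$. To circumvent this, I would partition the bidders into those whose last $\alpha$-aggressive update falls in the window and those who are lazy throughout it. For the former, the argument of Lemma~\ref{lem:initial-low} directly gives $\max_{k \neq i} b_{k,j}^{t_i} \leq p_j^T + p_j^{T-n}$. For the latter, I would instead use that $b_i^{t_i} = b_i^{\tau_i}$ remains a best response to $b_{-i}^{\tau_i}$ at bidder $i$'s window slot $\tau_i$, yielding the alternative inequality $v_i(S^\ast_i) \leq u_i(b^{\tau_i}) + \sum_{j \in S^\ast_i}(p_j^T + p_j^{T-n})$. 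Combining the two types of bounds and verifying that the resulting actual-utility contribution from the lazy bidders can be absorbed into the extra $\alpha \cdot DW(b^T)$ term---which is what distinguishes the coefficient $(2\alpha+1)$ here from $(\alpha+1)$ in Lemma~\ref{lem:initial-low}---is the principal technical challenge.
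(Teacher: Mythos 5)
Your reduction to $\sum_i \sum_{j \in S^\ast_i} \max_{k \neq i} b_{k,j}^{t_i} \leq 2\,DW(b^T) + DW(b^{T-n})$ correctly diagnoses where the eager-update argument breaks (the $t_i$'s need not lie in the final window), and your proposed fix for the lazy bidders---invoking the best-response property at the window slot $\tau_i$---is pointed in the right direction. But the step you explicitly flag as ``the principal technical challenge,'' namely absorbing the actual-utility terms $u_i(b^{\tau_i})$ of the lazy bidders into the budget, is genuinely the crux, and your sketch gives no argument for it. Without it there is no proof.

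The missing idea is \emph{which} alternative set to measure $\alpha$-aggressiveness against at time $t_i$. You use $S^\ast_i$, but for a lazy bidder this is the wrong target and produces exactly the quantity you cannot control. The paper instead applies $\alpha$-aggressiveness at $t_i$ against $S'_i$, the subset of bidder $i$'s purchased items that he \emph{still holds} at the window slot $t_i'$. Combined with the identity $u_i^D(b^{t_i'}) = \sum_{j \in S'_i}\bigl(b_{i,j}^{t_i} - \max_{k \neq i} b_{k,j}^{t_i'}\bigr)$ and the fact that $b_{i,j}^{t_i} \geq \max_{k \neq i} b_{k,j}^{t_i}$ for $j \in S'_i$, this yields the key inequality $u_i^D(b^{t_i}) + \alpha\, u_i^D(b^{t_i'}) \geq \alpha\, u_i(b^{t_i'})$, and \emph{only then} does one compare against $S^\ast_i$ via the best-response property at $t_i'$. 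The two sums $\sum_i u_i^D(b^{t_i})$ and $\sum_i u_i^D(b^{t_i'})$ are each controlled by Lemma~\ref{lem:aux-variant} (the latter because bidder $i$'s bid is frozen after $t_i'$), which is precisely where the extra $\alpha\, DW(b^T)$ and hence the coefficient $(2\alpha+1)$ comes from. Note also that the paper's argument runs \emph{uniformly} over all bidders---no case split between bidders whose $t_i$ lands in the window and those who are lazy throughout it; your bifurcation is unnecessary once the right comparison set is chosen. One further caution: bounding $u_i(b^{\tau_i})$ via the safety property would introduce a factor $\beta$, which does not appear in the lemma's statement, so any completion of your plan must avoid that route.
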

 
 \begin{proof}
Let $t_i$ denote the last time player $i$ updates his bid and $t_i'$ denote the last time he is offered to update the bid. Let the set bought at time $t_i$ be $S_i$, the set that is still won at time $t_i'$ be $S'_i \subseteq S_i$. Instead of choosing bid $b^{t_i}_i$, he could have bought the set of items $S'$. As $b^{t_i}_i$ is $\alpha$-aggressive, we get
\[
	u_i^D(b^{t_i}) \geq \alpha \cdot \bigg( v_i(S'_i) - \sum_{j \in S'_i} \max_{k \neq i} b_{k,j}^{t_i} \bigg) \enspace.
\]
The declared utility at time $t_i'$ is given by
\[
	u_i^D(b^{t_i'}) = \sum_{j \in S'_i} \left( b_{i, j}^{t_i} - \max_{k \neq i} b_{k,j}^{t_i'} \right) \enspace.
\]
In combination, we get
\begin{align*}
	u_i^D(b^{t_i}) + \alpha u_i^D(b^{t_i'}) & \geq \alpha \cdot \bigg( v_i(S'_i) - \sum_{j \in S'_i} \max_{k \neq i} b_{k,j}^{t_i} + \sum_{j \in S'_i} b_{i, j}^{t_i} - \sum_{j \in S'_i} \max_{k \neq i} b_{k,j}^{t_i'} \bigg) \\
	& = \alpha u_i(b^{t_i'}) + \alpha \cdot \bigg( \sum_{j \in S'_i} b_{i, j}^{t_i} - \sum_{j \in S'_i} \max_{k \neq i} b_{k,j}^{t_i} \bigg) \geq \alpha u_i(b^{t_i'}) \enspace,
\end{align*}
where in the last step we use that for every $j \in S_i' \subseteq S_i$ the update sets $b_{i, j}^{t_i} \geq \max_{k \neq i} b_{k,j}^{t_i}$.

At $t_i'$, bidder $i$ could buy the set $S^\ast_i$ instead. Therefore
\[
	u_i(b^{t_i'}) \geq v_i(S^\ast_i) - \sum_{j \in S^\ast_i} \max_{k \neq i} b_{k,j}^{t_i'} \enspace.
\]

We set $p_j^t = \max_i b_{i, j}^t$. As $T - n + 1\leq t_i' \leq T$, we have $p_j^{t_i'} \leq p_j^T + p_j^{T-n}$ by the same argument as in the proof of Lemma~\ref{lem:initial-low} and thus
\[
	u_i^D(b^{t_i}) + \alpha u_i^D(b^{t_i'}) + \alpha \cdot \sum_{j \in S^\ast_i} (p_j^T + p_j^{T-n}) \geq \alpha \cdot v_i(S^\ast_i)\enspace.
\]

Summing this inequality over all bidders $i \in N$ yields
\[
	\sum_{i \in N} (u_i^D(b^{t_i}) + \alpha u_i^D(b^{t_i'})) + \alpha \cdot \sum_{i \in N} \sum_{j \in S^\ast_i} (p_j^T + p_j^{T-n}) \geq \alpha \cdot \sum_{i \in N} v_i(S^\ast_i) \enspace.
\]

The first sum is at most $(1 + \alpha) DW(b^T)$ by Lemma \ref{lem:aux-variant}. The double sum covers each $j \in M$ at most once, therefore it is bounded by $DW(b^{T-n}) + DW(b^T)$. Consequently
\[
	(1 + \alpha) DW(b^n) + \alpha \cdot (DW(b^{T-n}) + DW(b^T)) \geq \alpha \cdot \sum_{i  \in N} v_i(S^\ast_i) \enspace. \qedhere
\]
\end{proof}

The second lemma generalizes Lemma~\ref{lem:initial-high}.

\begin{lemma}\label{lem:initial-high-lazy}
Consider a $\beta$-safe bid sequence $b^0, \ldots, b^n$ in which player $i$ changes his bid from $b^{i-1}$ to $b^i$ using an $\alpha$-aggressive bid or keeps it unchanged. Then, $DW(b^T) \geq \frac{\alpha}{\beta} \cdot DW(b^{T-n})$.
\end{lemma}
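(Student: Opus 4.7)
The plan is to mimic the proof of Lemma~\ref{lem:initial-high} step by step, checking that every ingredient still goes through when some bidders are lazy, and then to apply the argument to the window $b^{T-n}, b^{T-n+1}, \ldots, b^T$ rather than $b^0, \ldots, b^n$. The identity
\[
DW(b^i) = DW(b^{i-1}) + u_i^D(b^i) - u_i^D(b^{i-1})
\]
was derived in the proof of Lemma~\ref{lem:initial-high} using only the fact that bids of all players $k \neq i$ are unchanged from step $i-1$ to step $i$, which still holds whether bidder $i$ updates aggressively or keeps his bid. In particular, for a lazy step we simply get $DW(b^i) = DW(b^{i-1})$.

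For steps in which bidder $i$ performs an $\alpha$-aggressive update, the same manipulation as in the eager case—using $u_i^D(b^i) \geq \alpha u_i(b^{i-1})$ and $\beta$-safety $u_i^D(b^{i-1}) \leq \beta u_i(b^{i-1}) = \beta u_i(b^i)_{|b_{-i}^{i-1}}$—yields
\[
DW(b^i) \geq DW(b^{i-1}) - \left(\tfrac{\beta}{\alpha} - 1\right) u_i^D(b^i).
\]
For a lazy step the trivial bound $DW(b^i) \geq DW(b^{i-1}) - (\beta/\alpha-1)\cdot 0$ is equally valid, provided one knows $u_i^D(b^i) \geq 0$; but this is automatic because bidder $i$ only wins items on which $b_{i,j} \geq \max_{k\neq i} b_{k,j}$, so every term in the sum defining $u_i^D$ is non-negative.

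Telescoping the inequalities over the $n$ consecutive steps from $T-n+1$ to $T$ then gives
\[
DW(b^T) - DW(b^{T-n}) \geq -\left(\tfrac{\beta}{\alpha}-1\right) \sum_{i \in N'} u_i^D(b^{t_i}),
\]
where $N'$ is the set of bidders that perform an $\alpha$-aggressive update in this window (each at a unique step $t_i$, by round-robin activation). Since $\beta/\alpha \geq 1$, the coefficient in front of the sum is non-negative, so we can apply Lemma~\ref{lem:aux-variant} from Appendix~\ref{app:randomized-activation} to bound $\sum_{i \in N'} u_i^D(b^{t_i}) \leq DW(b^T)$. Rearranging gives $(\beta/\alpha)\, DW(b^T) \geq DW(b^{T-n})$, i.e., the claimed inequality.

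The main obstacle, such as it is, is handling the lazy steps cleanly: one must ensure (i) that the chain identity continues to hold without assuming a non-trivial update, and (ii) that the auxiliary bound $\sum_{i \in N'} u_i^D(b^{t_i}) \leq DW(b^T)$ applies when only the subset $N' \subseteq N$ actually updates. Both are resolved by invoking the variant Lemma~\ref{lem:aux-variant} instead of the original Lemma~\ref{lem:aux}, exactly as in the randomized-activation analysis.
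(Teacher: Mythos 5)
Your proof is correct and follows essentially the same path as the paper's: establish the per-step inequality $DW(b^i) \geq DW(b^{i-1}) - (\beta/\alpha - 1)\,u_i^D(b^i)$ for both aggressive and lazy steps (using $u_i^D(b^i) \geq 0$ for the latter), telescope over one round, and then bound the accumulated sum of declared utilities by $DW(b^T)$. The only cosmetic deviation is that you sum over only the active bidders $N'$ and invoke Lemma~\ref{lem:aux-variant}, whereas the paper sums over all of $N$ (lazy bidders contribute non-negative terms) and applies Lemma~\ref{lem:aux} directly; both routes work, and yours is arguably cleaner since Lemma~\ref{lem:aux-variant} is explicitly phrased to handle the case where only a subset updates. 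One nit: your parenthetical $\beta u_i(b^i)_{|b_{-i}^{i-1}}$ is confusingly written, but the inequality chain $u_i^D(b^{i-1}) \leq \beta\,u_i(b^{i-1}) \leq (\beta/\alpha)\,u_i^D(b^i)$ that you intend is correct.
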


\begin{proof}
Without loss of generality, we assume that $T = n$. Otherwise shift the indices accordingly.

Consider an arbitrary bidder $i$ and his update from $b^{i-1}$ to $b^{i}$. We claim that
\begin{equation}
\label{eq:initial-high-lazy:changeofdw}
DW(b^i) \geq DW(b^{i-1}) - \left(\frac{\beta}{\alpha}-1\right) \cdot u^D_i(b^i) \enspace.
\end{equation}

Observe that if bidder $i$ keeps his bid unchanged, $DW(b^i) = DW(b^{i-1})$ and therefore \eqref{eq:initial-high-lazy:changeofdw} holds trivially. So, let us consider the case that bidder $i$ updates the bid $\alpha$-aggressively. Denote the set of items that bidder $i$ won under bids $b^{i-1}$ by $S_i^{i-1}$, and the set of items that he wins under bids $b^{i}$ by $S_i^{i}$. So 
\[
u_i^D(b^{i-1}) = \sum_{j \in S_i^{i-1}} b_{i,j}^{i-1} - \sum_{j \in S_i^{i-1}} \max_{k \neq i} b_{k,j}^{i-1} \quad \text{and} \quad
u_i^D(b^{i}) = \sum_{j \in S_i^{i}} b_{i,j}^{i} - \sum_{j \in S_i^{i}} \max_{k \neq i} b_{k,j}^{i}\enspace.
\]

Using that for all $k \neq i$ and all $j$ we have $b_{k,j}^{i-1} = b_{k,j}^{i}$ we obtain that the difference in declared welfare over all bidders between steps $i-1$ and $i$ is equal to the difference in bidder $i$'s declared utility at these time steps. Formally,
\begin{align*}
DW(b^i) 
&= \sum_{j \in M\setminus S_i^{i}} \max_{k \neq i} b_{k,j}^{i-1} + \sum_{j \in S_i^{i}} b_{i,j}^{i}\\
&=  \sum_{j \in M} \max_{k \neq i} b_{k,j}^{i-1} + \sum_{j \in S_i^{i}} b_{i,j}^{i} - \sum_{j \in S_i^{i}} \max_{k \neq i} b^{i}_{k,j}\\
&=  \sum_{j \in M} \max_{k \neq i} b_{k,j}^{i-1} + u_i^D(b_i) \\
&=  \sum_{j \in M\setminus S_i^{i-1}} \max_{k \neq i} b_{k,j}^{i-1} + \sum_{j \in S_i^{i-1}} \max_{k \neq i} b_{k,j}^{i-1} + u_i^D(b_i)\\
&=  \sum_{j \in M\setminus S_i^{i-1}} \max_{k \neq i} b_{k,j}^{i-1} + \sum_{j \in S_i^{i-1}} b_{i,j}^{i-1} + u_i^D(b_i) - \sum_{j \in S_i^{i-1}} b_{i,j}^{i-1} + \sum_{j \in S_i^{i-1}} \max_{k \neq i} b_{k,j}^{i-1}\\
&= DW(b^{i-1}) + u_i^D(b_i) - u_i^D(b^{i-1})\enspace.
\end{align*}

Since $b^i_i$ is $\alpha$-aggressive, 
we have $u^D_i(b^i) \geq \alpha \cdot u_i(b^{i-1})$. 
Since the bidding sequence is $\beta$-safe, $u^D_i(b^t) \leq \beta \cdot u_i(b^t)$ for all $t$. So,
\begin{align*}
DW(b^i) & = DW(b^{i-1}) + u^D_i(b^i) - u^D_i(b^{i-1}) \\
& \geq DW(b^{i-1}) + u^D_i(b^i) - \beta \cdot u_i(b^{i-1}) \\
%& \geq DW(b^{i-1}) + u^D_i(b^i) - \beta \cdot u_i(b^i) \\
& \geq DW(b^{i-1}) + u^D_i(b^i) - \frac{\beta}{\alpha} \cdot u^D_i(b^i) \\
& = DW(b^{i-1}) - \left(\frac{\beta}{\alpha}-1\right) \cdot u^D_i(b^i) \enspace.
\end{align*}
This implies that \eqref{eq:initial-high-lazy:changeofdw} also holds in this case.

Summing \eqref{eq:initial-high-lazy:changeofdw} over all bidders $i \in N$ and using the telescoping sum $\sum_{i \in N} (DW(b^i) - DW(b^{i-1}) = DW(b^n) - DW(b^0)$ we obtain
\[
DW(b^n) \geq DW(b^0)- \left(\frac{\beta}{\alpha}-1\right) \sum_{i \in N} u_i^D(b^i)\enspace.
\]
Since $\alpha \leq 1$ and $\beta \geq 1$ the factor $(\beta/\alpha-1) \geq 0$. We can therefore use Lemma \ref{lem:aux} to conclude that
\[
DW(b^n) \geq DW(b^0)- \left(\frac{\beta}{\alpha}-1\right) DW(b^n) \enspace.
\]
This implies the claim.
\end{proof}

\begin{proof}[Proof of Theorem \ref{thm:main-1-lazy}] 
Combining Lemma \ref{lem:initial-low-lazy} with Lemma \ref{lem:initial-high-lazy} to the allocation $S^*_1, \dots, S^*_n$ that maximizes social welfare we obtain
\begin{align*}
(1+2\alpha+\beta) \cdot DW(b^t) 
&= (2\alpha+1) \cdot DW(b^t) + \alpha \cdot \frac{\beta}{\alpha} DW(b^t)\\
&\geq (2\alpha+1) \cdot DW(b^t) + \alpha \cdot DW(b^{t-n-1}) \geq \alpha \cdot OPT(v).
\end{align*}

Now, by Lemma \ref{lem:declared-vs-actual}, $DW(b^t) \leq \beta \cdot SW(b^t)$. Combining this with the previous inequality yields
\[
	(1+2\alpha+\beta) \cdot \beta \cdot SW(b^t) \geq \alpha \cdot OPT(v). \qedhere
\]
\end{proof}

\end{document}